\documentclass[11pt]{article}

\usepackage[margin=1in,headheight=14pt,footskip=30pt]{geometry}
\usepackage{amsmath,amssymb,amsthm,mathtools}
\usepackage{array}
\usepackage{microtype}
\usepackage[hidelinks]{hyperref}
\hypersetup{
  pdftitle={Step Recursion: Mixed Stride Spectra, Path Factorization, and Synchronization Geometry},
  pdfauthor={Kirill Osipov},
  pdfsubject={Mixed stride spectra, path factorization, lattice geometry, higher-rank canonical descent algebras, and synchronization collapse in bounded step recursion},
  pdfkeywords={step recursion, bounded recursion, stride spectrum, Grzegorczyk hierarchy, numerical semigroup, factorization, additive monoid, mixed-stride recursion, vector descent, multi-clock recursion, synchronization collapse, rank-two phase transition}
}

\newcommand{\Nat}{\mathbb N}
\newcommand{\Npos}{\mathbb N_{>0}}
\newcommand{\Hmix}[3]{\mathcal H^{#1}_{#2}[#3]}
\newcommand{\Mprim}[3]{\mathcal M^{#1}_{#2}[#3]}
\newcommand{\Cl}{\operatorname{Cl}}
\newcommand{\SR}{\operatorname{SR}}
\newcommand{\gen}[1]{\langle #1\rangle}
\newcommand{\Sub}{\operatorname{SubMon}}

\newcommand{\Str}{\operatorname{Str}}
\newcommand{\OpStr}{\operatorname{OpStr}}
\newcommand{\VStr}{\mathbf{Str}}
\newcommand{\VSR}{\mathbf{SR}}
\newcommand{\Flat}{\operatorname{flat}}
\newcommand{\Sc}{\operatorname{sc}}
\newcommand{\Rect}{\operatorname{Rect}}
\newcommand{\Atoms}{\operatorname{At}}
\newcommand{\PF}{\operatorname{PF}}
\newcommand{\Len}{\mathsf L}
\newcommand{\sem}[1]{\mathopen{\lbrack\!\lbrack}#1\mathclose{\rbrack\!\rbrack}}

\newtheorem{theorem}{Theorem}
\newtheorem{corollary}[theorem]{Corollary}
\newtheorem{lemma}[theorem]{Lemma}
\newtheorem{proposition}[theorem]{Proposition}
\newtheorem{definition}[theorem]{Definition}
\newtheorem{remark}[theorem]{Remark}

\title{Step Recursion:\\Mixed Stride Spectra, Path Factorization, and Synchronization Geometry}
\author{Kirill Osipov\\
{\normalsize Independent researcher, The Hague, The Netherlands}\\
{\normalsize \href{mailto:d503@acmer.me}{\texttt{d503@acmer.me}}}}
\date{August 2026}

\begin{document}
\maketitle

\begin{abstract}
Step recursion allows recursive computation to move through a canonical
hierarchy in jumps, or \emph{strides}.  Suppose a function algebra is allowed
to use several primitive stride lengths $L$.  Composition immediately produces
sums of these lengths, but it is not clear whether arbitrary nesting of mixed
recursions can create any genuinely new canonical stride.  We prove that it
cannot: at every fixed canonical row $n\ge2$ and lower basis $m<n$, the
canonical strides definable from $L$ are exactly the additive monoid
$\langle L\rangle$ generated by $L$.

The proof gives more information than membership alone.  Every sufficiently
high dependency path computing a canonical descent of stride $p$ carries total
label weight exactly $p$, and the possible path signatures are precisely the
additive factorizations of $p$ by the primitive stride labels.  Thus the
computation remembers both which strides are definable and how each one can be
assembled.  After saturation, inclusion between mixed-stride classes is
exactly inclusion between their additive stride monoids, giving a concrete
lattice description and finite certificates for inclusion.

We then study several synchronized recursion clocks.  Joint descent maps can
encode arbitrary additive submonoids of $\mathbb N^r$, already giving
continuum order complexity for $r=2$.  Ordinary scalar observation, however,
forgets correlations between the clocks and retains only the independent
coordinate strides.  This identifies precisely where synchronization
information is preserved and where it collapses.
\end{abstract}

\noindent\textbf{Keywords.}
bounded recursion, step recursion, stride spectrum, Grzegorczyk hierarchy,
numerical semigroup, factorization, additive monoid, mixed-stride recursion,
multi-clock recursion, synchronization, vector descent.

\medskip
\noindent\textbf{2020 Mathematics Subject Classification.}
Primary 03D20; Secondary 20M14.

\section{Introduction}

Step recursion modifies the traversal schedule of bounded recursion while
retaining its update format.  At a fixed canonical row $n$, write
$\Str_n(C)=\{0\}\cup\{p\ge1:\rho_{n,p}\in C\}$.  Since
$\rho_{n,p}\circ\rho_{n,q}=\rho_{n,p+q}$, primitive labels
$L\subseteq\Npos$ generate at least $\gen L$.  The central question is whether
lower-basis computation and arbitrarily nested mixed recursion can create an
additional canonical stride.

For a single fixed label $q$, \cite{Osipov2026Step} gives the principal
spectrum $q\Nat$, reverse divisibility, and the fixed-$q$ canonical-zone
selected-chain machinery used below.  The new difficulty is that a mixed
derivation has no global label: different recursion nodes may contribute
different strides with input-dependent multiplicities, so displacement must be
recovered from the actual occurrence path.  Our main invariant resolves this by a conservation principle for canonical
displacement: lower-basis dependencies may change only the controlled zone
width, whereas each labelled descent contributes its label exactly to the zone
index.  Consequently every sufficiently high selected dependency path computing
$\rho_{n,p}$ has total label weight exactly $p$, and therefore
\[
 \boxed{\Str_n(\Mprim{m}{n}{L})=\gen L}.
\]
More strongly, if $A=\{a_1,\ldots,a_k\}$ is the intrinsic atom set of the
recovered monoid $\Gamma$, then the path signatures of $\rho_{n,p}$ are
exactly the solutions of $a_1z_1+\cdots+a_kz_k=p$.  Thus the computation
recovers not only spectrum membership but the complete additive
factorization fibre; finite presentations of $\Gamma$ become finite rewrite
bases for realizable path signatures.

Saturation by all strides of $\Gamma$ makes the index intrinsic both
extensionally and operationally:
\[
 \OpStr_n(\Hmix{m}{n}{\Gamma})
 =\Str_n(\Hmix{m}{n}{\Gamma})=\Gamma,
 \qquad
 \Hmix{m}{n}{\Gamma}\subseteq\Hmix{m}{n}{\Lambda}
 \Longleftrightarrow\Gamma\subseteq\Lambda.
\]
Hence the saturated mixed sector is lattice-isomorphic to $\Sub(\Nat)$.
This computational classification transports the standard finite-generation,
oversemigroup, and interval geometry of submonoids of $\Nat$ to the saturated
Step Recursion sector.  We record the resulting principal skeleton and finite
inclusion certificates, while treating the underlying one-dimensional
semigroup facts as classical consequences rather than independent novelty.

Higher rank separates joint address geometry from its observable scalar
shadow.  Parallel canonical descents, as maps $\Nat^r\to\Nat^r$, realize
$\Sub(\Nat^r)$; already for $r=2$ the powerset order embeds, yielding continuum
many descent monoids and infinitely generated examples.  Prime-exponent address
coding gives a faithful unary conjugacy of those maps, so scalarity of the
\emph{representation} is not the obstruction; the coding itself is not claimed
to belong to the Step Recursion class.  By contrast, synchronized multi-clock
recursion viewed through ordinary canonical scalar membership retains only the
monoid $M(L)$ generated by the individual coordinates of its labels:
\[
 \boxed{\Str_n(\mathcal M^{m,(r)}_n[L])=M(L)},\qquad
 \boxed{\VStr_{n,r}(\mathcal M^{m,(r)}_n[L])=M(L)^r}.
\]
For $\Gamma\le\Nat^r$, this loss is the reflection
$\Gamma\mapsto\Rect_r(\Gamma)=\Sc(\Gamma)^r$ onto rectangular monoids;
every nonzero fibre contains a copy of $\mathcal P(\Nat)$.  Thus the
obstruction is separable observation rather than scalar representation itself.

The algebraic ingredients themselves are classical.  Changes of recursion
schemes and lattice phenomena in subrecursive hierarchies go back at least to
\cite{Axt1966,Machtey1971,Machtey1974,Nielsen2022}; numerical-semigroup
factorization and presentations are standard
\cite{RosalesGarciaSanchez2009,ONeillPelayo2017}, and the oversemigroups of a
fixed numerical semigroup are studied explicitly in
\cite{RosalesOversemigroups2003}.  Higher-rank submonoids are studied in
\cite{Gotti2020}, simultaneous-recursion coding has a separate
literature \cite{Clote1999,Xirotiri2006}, and unary/function-algebra reductions
have a substantial history \cite{Severin2008,Szalkai1985,LambekScott2005}.
The novelty claimed here is narrower: variable generalized-inverse labels are
tracked along actual occurrence-respecting dependency paths, giving exact
weighted displacement and exact factorization fibres at a fixed lower basis
and canonical row.  The multi-clock results concern synchronized recursion
\emph{addresses}, not several simultaneously defined outputs; the coding
result is used as a conjugacy, not claimed as a new coding technique.

\section{Canonical step recursion}
\label{sec:framework}

Throughout,
\[
 \Nat=\{0,1,2,\ldots\},\qquad \Npos=\Nat\setminus\{0\}.
\]
For a unary map $u$, write $u^{[0]}(x)=x$ and
$u^{[t+1]}(x)=u(u^{[t]}(x))$.

We use Rose's level numbering for the Grzegorczyk bases
\cite{Grzegorczyk1953,Rose1984}, with generating functions indexed from one:
\[
 e_1(x,y)=x+y,\qquad e_2(x)=x^2+2,
\]
and, for $r\ge2$,
\[
 e_{r+1}(0)=2,\qquad e_{r+1}(x+1)=e_r(e_{r+1}(x)).
\]
Thus $e_{r+1}(x)=e_r^{[x]}(2)$ for $r\ge2$.  Let
\[
 B_m=\{Z,S\}\cup\{P_i^k:k\ge1,\ 1\le i\le k\}
      \cup\{e_j:1\le j\le m\}.
\]
Only canonical rows $n\ge2$ are needed, and we put $g_n=e_n$.  For $l\ge1$,
let $\rho_{n,l}$ be the generalized inverse of $g_n^{[l]}$:
\[
 \rho_{n,l}(0)=0,\qquad
 \rho_{n,l}(y)=\min\{z\in\Nat:g_n^{[l]}(z)\ge y\}\quad(y>0).
\]
The corresponding descent depth is
\[
 D_{n,l}(y)=\min\{t:\rho_{n,l}^{[t]}(y)=0\}.
\]
For algebraic statements involving the zero stride, we use the harmless
convention $\rho_{n,0}=\operatorname{id}_{\Nat}$.

\begin{definition}[bounded step recursion]
Let $C$ be a family of total functions and let $\rho$ be a descent.  If earlier
functions
\[
 g:\Nat^k\to\Nat,\qquad
 h:\Nat^{k+2}\to\Nat,\qquad
 b:\Nat^{k+1}\to\Nat
\]
belong to the current closure, bounded step recursion along $\rho$ forms
$f:\Nat^{k+1}\to\Nat$ by
\[
 f(\bar x,0)=g(\bar x),\qquad
 f(\bar x,y)=h\bigl(\bar x,\rho(y),f(\bar x,\rho(y))\bigr)
\]
for $y>0$, provided $f(\bar x,y)\le b(\bar x,y)$ everywhere.  We denote the
operation by $\SR_\rho$.
\end{definition}

Two elementary facts will be used repeatedly.

\begin{lemma}[canonical stride arithmetic]
\label{lem:stride-arithmetic}
For $n\ge2$ and $a,b\ge1$,
\[
 \rho_{n,a+b}=\rho_{n,a}\circ\rho_{n,b}
              =\rho_{n,b}\circ\rho_{n,a}.
\]
More generally, if $p=kl$, then
$\rho_{n,p}=\rho_{n,l}^{[k]}$.
\end{lemma}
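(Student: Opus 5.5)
The plan is to reduce everything to the Galois-connection characterization of generalized inverses. Write $G_l=g_n^{[l]}$. I first record that $g_n=e_n$ is strictly increasing for $n\ge2$; this is an easy induction on the recursion defining $e_{r+1}$. Consequently $G_l$ is strictly increasing. Also $G_l(0)\ge1$, since $e_n(0)=2$ for $n\ge3$ and $e_2(0)=2$. Together these give the Galois property: for $y>0$ and $z\in\Nat$,
\[
 \rho_{n,l}(y)\le z \iff G_l(z)\ge y .
\]
The direction ($\Leftarrow$) is the definition of the minimum. For ($\Rightarrow$), note that $G_l(\rho_{n,l}(y))\ge y$, and monotonicity of $G_l$ then gives $G_l(z)\ge y$ whenever $z\ge\rho_{n,l}(y)$. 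The minimum exists because $G_l$ is unbounded, as it is strictly increasing.

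Next I compute the composite. Fix $y>0$ and set $w=\rho_{n,b}(y)$. If $w=0$, then $G_b(0)\ge y$, so $G_{a+b}(0)=G_b(G_a(0))\ge G_b(0)\ge y$, using $G_a(0)\ge0$ and monotonicity. Hence $\rho_{n,a+b}(y)=0=\rho_{n,a}(0)$, which settles this case. If $w>0$, I apply the Galois property twice: for every $z$,
\[
 \rho_{n,a}(w)\le z \iff G_a(z)\ge w \iff G_b(G_a(z))\ge y \iff G_{a+b}(z)\ge y \iff \rho_{n,a+b}(y)\le z .
\]
The middle equivalence is the Galois property for $\rho_{n,b}$ at $y$ with $z$ replaced by $G_a(z)$, and $G_b\circ G_a=G_{a+b}$. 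Since the inequalities $\cdot\le z$ agree for all $z$, the two values are equal. The case $y=0$ is trivial, because every $\rho$ fixes $0$. This proves $\rho_{n,a+b}=\rho_{n,a}\circ\rho_{n,b}$. Exchanging the roles of $a$ and $b$, and using $a+b=b+a$, gives the other order. The multiplicative statement $\rho_{n,kl}=\rho_{n,l}^{[k]}$ then follows by induction on $k$ from the additive one, with $k=0$ covered by the convention $\rho_{n,0}=\operatorname{id}$.

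The only step that needs care is the first one. Strict monotonicity of $e_n$, in particular the base case $e_2(x)=x^2+2$ and the induction through $e_{r+1}(x+1)=e_r(e_{r+1}(x))$ with $e_r(x)>x$, must be checked to justify both the Galois property and the separate handling of the boundary value $w=0$, since $\rho$ is defined specially at $0$.
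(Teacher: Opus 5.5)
Your proof is correct and is essentially the paper's argument. The paper reduces to $\rho_{n,l}=\rho_{n,1}^{[l]}$ by citing, without proof, that the generalized inverse of an iterate of a strictly increasing map is the corresponding iterate of its generalized inverse; your Galois-connection computation ($\rho_{n,l}(y)\le z\iff g_n^{[l]}(z)\ge y$) is exactly a proof of that fact, in the form $\rho_{n,a+b}=\rho_{n,a}\circ\rho_{n,b}$. The separate $w=0$ case and the bound $g_n^{[l]}(0)\ge1$ are not needed: the adjunction also holds at $y=0$, because $\min\{z:g_n^{[l]}(z)\ge0\}=0=\rho_{n,l}(0)$.
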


\begin{proof}
Let $\rho=\rho_{n,1}$.  For a strictly increasing integer-valued map, the
generalized inverse of a finite iterate is the corresponding iterate of the
generalized inverse.  Hence $\rho_{n,l}=\rho^{[l]}$, and the assertions are
ordinary laws of iteration.
\end{proof}

\begin{lemma}[an admitted descent is internal]
\label{lem:descent-internal}
Let $C$ contain zero and the projections and be closed under bounded step
recursion along a descent $\rho$.  Then $\rho\in C$.
\end{lemma}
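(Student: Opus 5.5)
We prove the lemma by building $\rho$ with a single application of $\SR_\rho$ in which the step function just returns the current descended argument. The idea is that bounded step recursion feeds $\rho(y)$ to $h$ as an explicit argument, so choosing $h$ to be a projection onto that slot gives $f=\rho$. No iteration or arithmetic on $\rho$ is needed.

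Take $k=0$, so $f$ is unary. Set the base value $g=Z$, the constant zero, as a $0$-ary function or $Z(x)$ with a dummy variable, depending on the arity conventions of $C$. Set $h(u,v)=P_1^2(u,v)=u$. Set the bound $b(y)=P_1^1(y)=y$. All three lie in $C$ by hypothesis, since $C$ contains zero and the projections. If the formal scheme needs a parameter $\bar x$ with $k\ge1$, add one dummy variable $x$ and use $P^3_2$ and $P^2_2$ in place of $P_1^2$ and $P_1^1$. Afterwards, recover the unary function by composing with $Z$, or else regard $\rho$ as a member of $C$ up to dummy variables, as is standard for closures containing projections.

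The recursion then reads $f(0)=0$ and $f(y)=h(\rho(y),f(\rho(y)))=\rho(y)$ for $y>0$. Hence $f(y)=\rho(y)$ for all $y$, provided $\rho(0)=0$. This holds for the canonical descents, and it holds for $\rho_{n,0}=\operatorname{id}$ as well.

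The only real obstacle is the side condition $f\le b$. This requires $\rho(y)\le y$ for all $y$, which holds because a descent is non-increasing relative to its argument. For the canonical $\rho_{n,l}$ we check it directly: $g_n^{[l]}(y)\ge y$, since $e_n(z)\ge z+2>z$ for $n\ge2$. Therefore $z=y$ is admissible in the minimum, and $\rho_{n,l}(y)\le y$. If "descent" is meant abstractly, we use whatever condition the scheme requires for well-foundedness: namely $\rho(y)<y$ for $y>0$, which again gives $\rho(y)\le y$. With these observations the construction lies in $C$, so $\rho\in C$.
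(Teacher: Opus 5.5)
Your proposal is correct and matches the paper's proof: a single step recursion along $\rho$ with base $0$, transition projecting onto the current address $\rho(y)$, and the identity as admissible bound. Your extra checks that $\rho(0)=0$ and $\rho(y)\le y$ (via $e_n(z)\ge z+2$) make explicit what the paper states in one line as ``the identity function is an admissible bound.''
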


\begin{proof}
Use one step recursion along $\rho$ with base $0$ and transition equal to the
current recursion address:
\[
 f(0)=0,\qquad f(y)=\rho(y)\quad(y>0).
\]
The identity function is an admissible bound.
\end{proof}

For canonical inputs put
\[
 Y_t=g_n^{[t]}(0)\qquad(t\in\Nat).
\]
Then for $t\ge l$,
\begin{equation}
\label{eq:canonical-shift}
 \rho_{n,l}(Y_t)=Y_{t-l}.
\end{equation}
The sequence $(Y_t)$ is the layer coordinate on which the separation proof is
performed.  We shall also use the elementary generalized-inverse threshold law
\begin{equation}
\label{eq:canonical-threshold}
 D_{n,1}(u)\le k
 \quad\Longleftrightarrow\quad
 u\le Y_k.
\end{equation}
Indeed, $\rho_{n,1}$ is the generalized inverse of $G$, so the inputs reaching
zero in at most $k$ predecessor steps are exactly the initial segment ending at
$G^{[k]}(0)=Y_k$.

\section{Mixed strides and additive monoids}

\begin{definition}[canonical stride spectrum]
\label{def:stride-spectrum}
For any function class $C$ and fixed canonical row $n\ge2$, define
\[
 \boxed{\Str_n(C):=\{0\}\cup\{p\in\Npos:\rho_{n,p}\in C\}.}
\]
The spectrum records only canonical descent functions already definable in
$C$; it does not assume that $C$ is closed under recursion along every descent
whose label lies in the spectrum.
\end{definition}

\begin{definition}[primitive-family class]
\label{def:primitive-family}
For any set $L\subseteq\Npos$ of admitted primitive stride labels, define
\[
 \boxed{
 \Mprim{m}{n}{L}
 :=\Cl_{\circ,\{\SR_{\rho_{n,l}}:l\in L\}}(B_m).}
\]
If $L=\varnothing$, this is the composition closure $\Cl_\circ(B_m)$.
Write $\gen{L}$ for the additive submonoid of $\Nat$ generated by $L$, with
$0\in\gen{L}$.
\end{definition}

\begin{definition}[mixed-stride class]
\label{def:mixed}
Let $\Gamma\le(\Nat,+)$ be an additive submonoid.  Define
\[
 \boxed{
 \Hmix{m}{n}{\Gamma}
 :=\Mprim{m}{n}{\Gamma\setminus\{0\}}
 =\Cl_{\circ,\{\SR_{\rho_{n,l}}:l\in\Gamma\setminus\{0\}\}}(B_m).}
\]
If $\Gamma=\{0\}$, this means the composition closure
$\Cl_\circ(B_m)$.
\end{definition}

Every nonzero additive submonoid has a finite canonical basis.

\begin{proposition}[finite canonical atom basis]
\label{lem:finite-generation}
\label{prop:spectrum-basis}
\label{cor:finite-spectrum-basis}
Every nonzero $\Gamma\le(\Nat,+)$ is finitely generated, and its atoms
\[
 \Atoms(\Gamma)=\{a\in\Gamma\setminus\{0\}:a\ne b+c
 \text{ for }b,c>0\}
\]
form its unique minimal generating set.  Hence for every possibly infinite
$L\subseteq\Npos$, if $\Gamma=\gen L$ and $A=\Atoms(\Gamma)$, then $A$ is
finite, $A\subseteq L$, and $\gen A=\gen L$.
\end{proposition}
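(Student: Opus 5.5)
The plan is to reduce finite generation to a gcd argument and then handle atoms by strong induction on size.

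\emph{Step 1: finite generation via the gcd.} Let $\Gamma\ne\{0\}$ and put $d=\gcd(\Gamma\setminus\{0\})$. Since gcds of descending chains of finite subsets stabilise, $d$ is already the gcd of some finite set $F\subseteq\Gamma\setminus\{0\}$. Then $\Gamma/d=\{x/d:x\in\Gamma\}$ is a submonoid of $\Nat$ whose gcd is $1$. I will invoke the classical fact that $\gen{F/d}$ contains every integer from some point $c$ onward; this follows from Bézout together with a Frobenius-type argument. Consequently $\Gamma/d$ is cofinite in $\Nat$. It is therefore generated by the finite set of its nonzero elements below $c+\min(F/d)$. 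Indeed, any larger element $x$ can be written as $x=(x-f)+f$ with $f=\min(F/d)$, and $x-f\ge c$ lies in $\Gamma/d$. Hence it is strictly smaller, and induction on $x$ finishes. Multiplying back by $d$ gives a finite generating set of $\Gamma$.

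\emph{Step 2: atoms generate, and are contained in every generating set.} To see that atoms generate, take $x\in\Gamma\setminus\{0\}$ and use strong induction on $x$. Either $x$ is an atom, or $x=b+c$ with $b,c\in\Gamma$ and $b,c>0$, so both summands are smaller and are sums of atoms by induction. For the second half, let $G$ be any generating set and let $a$ be an atom. Write $a=g_1+\cdots+g_s$ with $g_i\in G\setminus\{0\}$. Since $a$ is indecomposable, $s=1$, so $a\in G$. Thus $\Atoms(\Gamma)$ is contained in every generating set and itself generates, which makes it the unique minimal generating set. It is finite because it lies inside the finite generating set from Step 1.

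\emph{Step 3: the application to $L$.} Take $\Gamma=\gen L$ and $A=\Atoms(\Gamma)$. Step 2 applied to $G=L$ gives $A\subseteq L$. Finiteness of $A$ follows from Step 1, and $\gen A=\Gamma=\gen L$ is immediate from the generating property.

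The main obstacle is the cofiniteness claim in Step 1. I would take it as the standard numerical-semigroup fact and cite \cite{RosalesGarciaSanchez2009}. If a self-contained proof is wanted, the argument runs as follows. Pick integers with $\sum u_i f_i = 1$, separate positive and negative parts to get elements $P,N\in\gen{F/d}$ with $P-N=1$, and observe that every integer $\ge N(N-1)$ is then a combination of $N$ and $P$.
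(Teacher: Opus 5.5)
Your proof is correct, but your Step 1 takes a longer route than the paper's.

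The paper proves finite generation with an Apéry-set argument relative to $\mu=\min(\Gamma\setminus\{0\})$ rather than the gcd. For each residue class modulo $\mu$ that meets $\Gamma$, let $a_r$ be its least element in $\Gamma$. Every $x\in\Gamma$ is then $a_r+k\mu$ with $k\ge0$, so $\{\mu\}\cup\{a_r\}$ generates $\Gamma$. This needs no gcd, no Bézout identity and no Frobenius-type bound, and it also gives the quantitative bound $|\Atoms(\Gamma)|\le\mu$.

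Your route goes through the gcd being attained on a finite subset, Bézout, the $N(N-1)$ bound, and then generation by the elements below $c+\min(F/d)$. It is longer, but it yields as a by-product that $\Gamma$ is cofinite in $d\Nat$. That is exactly the fact the paper later imports from numerical-semigroup theory in Theorem~\ref{thm:principal-envelope}.

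Your Step 2 matches the paper's treatment of atoms in substance, and is more explicit. The paper only asserts that a minimal generating set consists of the atoms. Your strong induction shows directly that the atoms generate, so the finite generating set from Step 1 is needed only to conclude that $\Atoms(\Gamma)$ is finite.

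One small wording slip in Step 1: it is the gcds of an \emph{increasing} chain of finite subsets that form a descending divisibility chain and stabilise. The conclusion you draw from this is correct.
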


\begin{proof}
Let $d=\min(\Gamma\setminus\{0\})$.  For each residue modulo $d$ occurring
in $\Gamma$, choose its least representative $a_r$; then every
$x\in\Gamma$ is $a_r+kd$, proving finite generation.  An inclusion-minimal
finite generating set consists exactly of the atoms, and every generating set
must contain every atom.  If $a\in\Atoms(\gen L)$, an expression of $a$ as a
sum of elements of $L$ has length one, so $a\in L$.
\end{proof}

If $d=\gcd(\Gamma\setminus\{0\})$, then $d^{-1}\Gamma$ has gcd one and is a
numerical semigroup \cite{RosalesGarciaSanchez2009}.

\section{The mixed selected-chain invariant}
\label{sec:mixed-invariant}

We now prove the reverse inclusion: below $m<n$, mixed recursion recovers no
canonical stride outside the generated monoid.  Fix
\[
 n\ge2,\qquad m<n,
\]
and put
\[
 G=g_n=e_n,\qquad Y_t=G^{[t]}(0).
\]
Use the lower-row majorant
\[
 F(x)=2x+2\quad(n=2),\qquad F=e_{n-1}\quad(n\ge3).
\]
For integers $s,p\ge0$ define
\[
 U_s[p]=F^{[p]}(Y_s+p),\qquad
 L_s[p]=\min\{u:F^{[p]}(u+p)\ge Y_s\},
\]
and the width-$p$ canonical zone
\[
 \mathcal Z_s[p]=\{u:L_s[p]\le u\le U_s[p]\}.
\]
If $P$ is a polynomial with nonnegative integer coefficients, abbreviate
\[
 \mathcal Z_{s,P}(t)=\mathcal Z_s[P(t)].
\]
The point $Y_s$ lies in $\mathcal Z_s[0]$.

The growth estimate needed for zone separation is proved here.

\begin{lemma}[quantitative canonical gap]
\label{lem:canonical-gap}
For every polynomial $P$ with nonnegative integer coefficients and every fixed
integer offset $a$, whenever the displayed indices are nonnegative,
\begin{equation}
\label{eq:canonical-gap}
 F^{[P(t)]}\bigl(Y_{t-a-1}+P(t)\bigr)<Y_{t-a}
\end{equation}
for all sufficiently large $t$.
\end{lemma}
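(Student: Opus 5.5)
Write $s=t-a-1$, so that $Y_{t-a}=G(Y_s)$ and $P(t)$ is bounded by a polynomial in $s$, say $P(t)\le Q(s)$ with $Q$ having nonnegative coefficients, for large $t$. Since $F$ is increasing and inflationary, $F^{[k]}(x+k)$ is monotone in $k$. It therefore suffices to show that
\[
 F^{[Q(s)]}\bigl(Y_s+Q(s)\bigr)<G(Y_s)
\]
for all large $s$. The idea is that $G=e_n$ dominates any polynomially long iterate of $F$ applied at the huge argument $Y_s$. Moreover $Y_s$ grows at least as fast as iterates of $G$ from $0$, so $Q(s)$ is negligible compared with $Y_s$.

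\textbf{Case $n=2$.} Here $F(x)=2x+2$, so $F^{[k]}(x)=2^k(x+2)-2$. Also $G(x)=x^2+2$, and $Y_s$ grows doubly exponentially: $Y_0=0$, $Y_1=2$, $Y_2=6$, $Y_3=38,\ldots$, with $Y_{s+1}\ge Y_s^2$ once $Y_s\ge2$. Hence $\log_2 Y_s\ge 2^{s-c}$ for some constant $c$. The left side is at most $2^{Q(s)}(Y_s+Q(s)+2)$, so the required inequality reduces to
\[
 2^{Q(s)}\bigl(Y_s+Q(s)+2\bigr)<Y_s^2+2,
\]
which follows from $2^{Q(s)}\cdot 2<Y_s$. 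That holds because $Q(s)+1$ is polynomial in $s$ while $\log_2 Y_s$ is at least $2^{s-c}$.

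\textbf{Case $n\ge3$.} Here $F=e_{n-1}$ and $G(x)=e_{n-1}^{[x]}(2)$. For $Y_s\ge 2+Q(s)$ I would use a standard Grzegorczyk majorization: $e_{n-1}(x)\ge x+1$, and $e_{n-1}^{[j]}(2)\ge x$ once $j$ is somewhat larger than a slowly growing function of $x$ (for instance $j\ge x$ suffices, since $e_{n-1}$ is inflationary from $2$). Then
\[
 F^{[Q]}(Y_s+Q)\le F^{[Q]}\bigl(F^{[Y_s+Q]}(2)\bigr)=F^{[Y_s+2Q]}(2)=e_n(Y_s+2Q(s)).
\]
That is still bigger than $e_n(Y_s)$, so I would sharpen the bound. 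For $n-1\ge2$, we have $e_{n-1}(x)\ge x^2\ge 2x$, so $F^{[j]}(2)\ge 2^{j}$. Hence $Y_s+Q\le F^{[j_0]}(2)$ with $j_0=\lceil\log_2(Y_s+Q)\rceil$, and
\[
 F^{[Q]}(Y_s+Q)\le e_n\bigl(j_0+Q(s)\bigr).
\]
The inequality $j_0+Q(s)<Y_s$ holds for large $s$, since $Y_s$ grows at least like a tower in $s$ while $\log_2 Y_s+Q(s)$ does not. Strict monotonicity of $e_n$ then gives the claim.

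\textbf{Main obstacle.} The delicate part is the $n\ge3$ majorization $x\le F^{[\lceil\log_2 x\rceil]}(2)$, which needs $e_{n-1}(x)\ge 2x$ for all relevant $x\ge2$ together with monotonicity. For $n-1=2$ this holds since $x^2+2\ge 2x$, and for higher rows it follows by induction on the row. The remaining step is to confirm that $Y_s$ outgrows $\log Y_s+Q(s)$, which is immediate once $Y_{s+1}\ge 2^{Y_s}$, itself a consequence of $e_n(x)\ge 2^x$ for $n\ge3$. The degenerate small-index cases are excluded by the phrase ``for all sufficiently large $t$''.
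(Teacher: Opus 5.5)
Your proof is correct and follows essentially the paper's route: the same closed form $F^{[k]}(x)=2^k(x+2)-2$ against $Y_s^2+2$ for $n=2$, and for $n\ge3$ writing $Y_{t-a}=F^{[Y_s]}(2)$, bounding the left side by some $F^{[k]}(2)$, and showing $k<Y_s$. The one difference is in the $n\ge3$ case: you cover $Y_s+Q$ by $F^{[\lceil\log_2(Y_s+Q)\rceil]}(2)$ using $F(x)\ge 2x$, whereas the paper absorbs the shift into a single extra iterate ($Y_s+P<F(Y_s)$) and uses the exact identity $Y_s=F^{[Y_{s-1}]}(2)$, so it only needs $Y_s-Y_{s-1}>P+1$.
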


\begin{proof}
Put $s=t-a-1$.  Since $G=e_n\ge e_2$ and $e_2(x)=x^2+2$, induction gives
$Y_j\ge 2^{2^{j-1}}$ for all sufficiently large $j$.  Hence $Y_j$
eventually dominates every fixed polynomial in $j$, and even
$2^{P(j)}P(j)$ for every fixed polynomial $P$.

If $n\ge3$, then $F=e_{n-1}$ and the defining identities give
\[
 Y_s=F^{[Y_{s-1}]}(2),\qquad Y_{s+1}=F^{[Y_s]}(2).
\]
For large $t$, $F(Y_s)>Y_s+P(t)$, hence
\[
 F^{[P(t)]}(Y_s+P(t))
 <F^{[P(t)+1]}(Y_s)
 =F^{[Y_{s-1}+P(t)+1]}(2).
\]
Also $Y_s-Y_{s-1}>P(t)+1$ eventually, so the last value is strictly below
$F^{[Y_s]}(2)=Y_{s+1}$.

If $n=2$, then $G(x)=x^2+2$ and $F(x)=2x+2$, whence
\[
 F^{[r]}(x)=2^r(x+2)-2,\qquad Y_{s+1}=Y_s^2+2.
\]
Since $Y_s$ eventually dominates $2^{P(t)}(P(t)+2)$,
\[
 2^{P(t)}(Y_s+P(t)+2)-2<Y_s^2+2=Y_{s+1}.
\]
Replacing $s$ by $t-a-1$ proves \eqref{eq:canonical-gap}.
\end{proof}

This estimate implies the following separation fact.

\begin{lemma}[zone uniqueness]
\label{lem:zone-unique}
For every fixed $d$ and polynomial $P$, for all sufficiently large $t$,
\[
 Y_{t-d}\in\mathcal Z_{s,P}(t)\quad\Longrightarrow\quad s=t-d.
\]
\end{lemma}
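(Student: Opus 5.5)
We must show that if $Y_{t-d}$ lies in the zone $\mathcal Z_s[P(t)]$, then $s=t-d$. Membership means $L_s[P(t)]\le Y_{t-d}\le U_s[P(t)]$. The plan is to rule out $s<t-d$ using the upper endpoint and $s>t-d$ using the lower endpoint. Both bounds will come from Lemma~\ref{lem:canonical-gap} together with monotonicity of $F$, $F^{[p]}$ and $(Y_j)$ ($Y$ is strictly increasing since $G(x)>x$). The only real care needed is uniformity in $s$: $s$ ranges over all nonnegative integers, while the gap lemma handles one fixed offset at a time.

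\emph{Case $s\le t-d-1$.} By monotonicity in $s$,
\[
 U_s[P(t)]=F^{[P(t)]}(Y_s+P(t))\le F^{[P(t)]}(Y_{t-d-1}+P(t)).
\]
Lemma~\ref{lem:canonical-gap} with offset $a=d$ makes the right-hand side strictly less than $Y_{t-d}$ for all large $t$. Hence $Y_{t-d}>U_s[P(t)]$, which contradicts membership. This single application covers every $s\le t-d-1$ at once, so no uniformity issue arises here.

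\emph{Case $s\ge t-d+1$.} Write $p=P(t)$. Membership gives $L_s[p]\le Y_{t-d}$. Since $F^{[p]}$ is increasing and $L_s[p]$ is the least $u$ with $F^{[p]}(u+p)\ge Y_s$, we get
\[
 F^{[p]}(Y_{t-d}+p)\ge F^{[p]}(L_s[p]+p)\ge Y_s\ge Y_{t-d+1}.
\]
Now apply Lemma~\ref{lem:canonical-gap} with offset $a=d-1$, so that the index $t-a-1$ equals $t-d$. It gives $F^{[p]}(Y_{t-d}+p)<Y_{t-d+1}$ for large $t$, a contradiction. Again one application handles all larger $s$ simultaneously, because the lower bound $Y_s\ge Y_{t-d+1}$ reduces everything to the smallest case. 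If $d=0$, the offset $a=-1$ is still a fixed integer, which the lemma allows.

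The main point to check is that the endpoint manipulations are legitimate. Specifically, we need that $L_s[p]$ exists, which holds because $F^{[p]}(u+p)\to\infty$, and that monotonicity lets us replace $L_s[p]$ by the larger value $Y_{t-d}$. Taking $t$ beyond both thresholds from the two applications of the gap lemma, and large enough that $t-d\ge1$ so all indices are nonnegative, then yields $s=t-d$.
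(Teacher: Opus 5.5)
Your proof is correct and matches the paper's argument: the upper endpoint with the gap lemma at offset $a=d$ rules out $s\le t-d-1$, and the lower endpoint with the gap lemma at offset $a=d-1$ rules out $s\ge t-d+1$. Your additional checks (monotonicity of $(Y_j)$ and $F^{[p]}$, upward-closedness of the lower-zone condition, and the admissible offset $a=-1$ when $d=0$) only make explicit details the paper leaves implicit.
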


\begin{proof}
If $s\le t-d-1$, then
\[
 Y_{t-d}\le U_s[P(t)]
 \le F^{[P(t)]}\bigl(Y_{t-d-1}+P(t)\bigr)<Y_{t-d}
\]
by~\eqref{eq:canonical-gap}, a contradiction.  If $s\ge t-d+1$, the lower-zone
inequality gives
\[
 Y_{t-d+1}\le Y_s
 \le F^{[P(t)]}\bigl(Y_{t-d}+P(t)\bigr)<Y_{t-d+1},
\]
again a contradiction.  Hence $s=t-d$.
\end{proof}

\begin{lemma}[lower-basis majorization and oriented primitive dependencies]
\label{lem:lower-majorant}
Fix $n\ge2$ and $m<n$, and put
\[
 F(x)=2x+2\quad(n=2),\qquad F=e_{n-1}\quad(n\ge3).
\]
For every fixed composition term over $B_m$ there is $c\ge1$ such that, if all
its arguments are at most $u$, then every value produced by that term is at
most $F^{[c]}(u+c)$.  At a concrete nonconstant initial-function occurrence one
can moreover choose a single oriented dependency $a\mapsto v$ with $a\le v$:
for successor choose its input; for addition $e_1(x,y)=x+y$ choose an input of
maximum value; for a unary $e_j$ choose its unique input; and for a projection
choose the projected input, giving equality.  Zero has no input dependency and
is treated as a fixed origin.
\end{lemma}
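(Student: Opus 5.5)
The plan is a structural induction on the composition term over $B_m$, proving the slightly stronger claim that every value produced at every subterm occurrence is at most $F^{[c]}(u+c)$, with $c$ the number of function-symbol occurrences of the subterm (or a suitable multiple of it). The induction needs only two monotonicity facts: $F$ is increasing, and $F(x)\ge x+1$. Then $F^{[c]}(u+c)$ is increasing in both $u$ and $c$, so bounds for subterms can be raised to a common $c$.

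First I treat the base cases. A projection returns an argument, so its value is at most $u\le F^{[1]}(u+1)$. Zero returns $0$, and successor returns $x+1\le u+1\le F(u+1)$. For the Grzegorczyk functions with $j\le m\le n-1$, I need a one-step domination of the form $e_j(x,\dots)\le F^{[k]}(x+k)$ with a fixed $k$:
\begin{itemize}
\item If $n=2$, then $m\le1$, so only $e_1(x,y)=x+y\le 2u\le F(u)$ occurs.
\item If $n\ge3$ and $j\le n-1$, I use the standard Grzegorczyk monotonicity facts in Rose's numbering: $e_j(x)\le e_{n-1}(x)$ for $2\le j\le n-1$, which follows by induction on $j$ from $e_{r+1}(x)=e_r^{[x]}(2)$ and $e_r(x)\ge x+1$; and $x+y\le e_2(\max(x,y))$.
\end{itemize}
So each basic function is bounded by $F^{[1]}(u+1)$ in terms of the maximum of its inputs.

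For the composition step, consider $h(g_1,\dots,g_k)$. The inductive bounds give $g_i\le F^{[c_i]}(u+c_i)\le F^{[c']}(u+c')$ with $c'=\max_i c_i$. The outer function then yields at most $F^{[c_h]}\bigl(F^{[c']}(u+c')+c_h\bigr)$. I absorb the additive $+c_h$ using $x+c_h\le F^{[c_h]}(x)$, which holds because $F(x)\ge x+1$. This gives at most $F^{[2c_h+c']}(u+c')$, hence at most $F^{[c]}(u+c)$ with $c=2c_h+c'$. The main care point is exactly this bookkeeping, absorbing additive shifts into extra iterates of $F$. It is routine, but the monotonicity of $F^{[c]}(u+c)$ in $c$ must be stated explicitly.

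The oriented-dependency clause is a direct check at a single occurrence:
\begin{itemize}
\item For successor, the output is $a+1\ge a$.
\item For $e_1(x,y)$, take $a=\max(x,y)\le x+y$.
\item For unary $e_j$ with $j\ge2$, we have $e_j(x)\ge x$ since $e_2(x)=x^2+2>x$ and each $e_{r+1}$ iterates an increasing map above $2$; an induction gives $e_{r+1}(x)\ge x+2$.
\item For a projection, the output equals the selected input.
\end{itemize}
Zero is a constant and is excluded by hypothesis. No step appears genuinely hard; the only potential obstacle is verifying $e_j\le e_{n-1}$ pointwise for small arguments. I would handle it by using the crude form $F^{[c]}(u+c)$, which already covers small inputs, rather than a pointwise inequality.
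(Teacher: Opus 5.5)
Your proof is correct and follows the same route as the paper. You bound each generator of $B_m$ by a fixed iterate of $F$, using $e_j\le e_{n-1}$ pointwise and $x+y\le 2\max(x,y)\le F(\max(x,y))$, then run a structural induction on the term, absorbing additive shifts via $F(x)\ge x+1$; you simply spell out the composition bookkeeping and the one-step orientation checks that the paper leaves implicit.
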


\begin{proof}
For $n=2$ one has $m\le1$, so the only nontrivial growing initial functions are
successor and, when $m=1$, addition; both are bounded by a fixed iterate of
$F(x)=2x+2$.  For $n\ge3$, every unary generator available in $B_m$ is bounded
by $F=e_{n-1}$ after at most a fixed iterate, and $F(x)\ge2x$ absorbs successor
and addition.  Structural induction on a fixed composition term gives the
uniform envelope $F^{[c]}(u+c)$.

The oriented dependency rules are explicit.  Successor and every unary
$e_j$ are nondecreasing and dominate their selected input; addition dominates
its larger input; a projection reproduces its selected input exactly.  These
are all nonconstant initial functions in $B_m$.  Enlarging $c$ once handles the
finitely many primitive occurrences in any fixed term.
\end{proof}

The next two lemmas are deliberately formulated with an arbitrary integer width
$p$.  This avoids any dependence of the zone invariant on how many basis steps
have already occurred on a particular selected chain.

\begin{lemma}[uniform basis-step stability]
\label{lem:mixed-basis-stability}
For every fixed mixed derivation $\delta$ there is a constant $C_\delta\ge1$
such that every selected lower-basis step $u\mapsto v$ satisfies
\[
 u\le v\le F^{[C_\delta]}(u+C_\delta).
\]
Consequently, for every integer $p\ge0$,
\[
 u\in\mathcal Z_s[p]
 \quad\Longrightarrow\quad
 v\in\mathcal Z_s[p+C_\delta].
\]
\end{lemma}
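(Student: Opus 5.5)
The argument has two parts: the two-sided step bound, which comes directly from Lemma~\ref{lem:lower-majorant}, and the zone transfer, which follows because $F$ is monotone. Throughout, a selected lower-basis step $u\mapsto v$ means an oriented dependency passing through a single occurrence of a fixed composition term over $B_m$ inside $\delta$. Here $u$ is the selected argument value and $v$ the produced value. Steps through recursion nodes carry labels and are handled separately, so only basis steps are at issue.

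\textbf{Step 1: the uniform constant.} A fixed derivation $\delta$ contains only finitely many composition terms over $B_m$. I apply Lemma~\ref{lem:lower-majorant} to each one and let $C_\delta$ be the maximum of the resulting constants $c$, and at least $1$. This gives the upper bound: the value $v$ is at most $F^{[c]}(w+c)$, where $w$ is the largest argument. The subtlety is that the selected $u$ need not be the largest argument. I would fix this by choosing the oriented dependency as the lemma prescribes. For addition one selects an input of maximal value, so $v\le 2u$. For unary maps and successor, $u$ is the only input. Projections give $v=u$. So on a single primitive occurrence, $v\le F^{[c']}(u+c')$, and chaining through the finitely many primitive occurrences of a fixed term only enlarges the constant. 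The lemma already allows this when it says ``enlarging $c$ once handles the finitely many primitive occurrences''. The lower bound $u\le v$ also comes from the orientation rules in Lemma~\ref{lem:lower-majorant}: every selected primitive dependency is nondecreasing, and composing such steps along a selected chain preserves $u\le v$. I expect this bookkeeping to be the main obstacle. One has to make sure the selected step is the one the orientation rule picks and not an arbitrary argument, so that a bound in terms of $u$ alone is legitimate. Zero occurrences give no step, since they are origins.

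\textbf{Step 2: zone transfer.} Suppose $u\in\mathcal Z_s[p]$, so $L_s[p]\le u\le U_s[p]$.

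For the upper bound, I use that $F$ is increasing and $F(x)\ge x+1$. Then
\[
 v\le F^{[C]}(u+C)\le F^{[C]}\bigl(F^{[p]}(Y_s+p)+C\bigr)\le F^{[C]}\bigl(F^{[p]}(Y_s+p+C)\bigr)=U_s[p+C].
\]
The middle inequality needs $F^{[p]}(x)+C\le F^{[p]}(x+C)$, which holds because $F(x+1)\ge F(x)+1$ for both choices of $F$, by induction.

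For the lower bound, $v\ge u\ge L_s[p]$. It remains to show $L_s[p]\ge L_s[p+C]$, which holds because $F^{[p+C]}(w+p+C)\ge F^{[p]}(w+p)$ pointwise. So the set of $w$ meeting the threshold $Y_s$ only grows as the width grows, and the minimum can only drop.

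Together the two bounds give $v\in\mathcal Z_s[p+C_\delta]$ for every $p\ge0$, uniformly in $p$.
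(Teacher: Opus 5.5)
Your proof is correct and follows essentially the same route as the paper. In both, $C_\delta$ is the maximum of the constants from Lemma~\ref{lem:lower-majorant} over the finitely many basis fragments of $\delta$, and the zone transfer uses the same pair of monotonicity inequalities; your $L_s[p]\ge L_s[p+C_\delta]$ is just a rephrasing of the paper's $Y_s\le F^{[p]}(u+p)\le F^{[p+C_\delta]}(v+p+C_\delta)$. Your use of the orientation rules to bound $v$ by the selected input $u$, rather than by the largest argument, spells out a point the paper leaves implicit.
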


\begin{proof}
Apply Lemma~\ref{lem:lower-majorant} to the finitely many lower-basis
composition fragments occurring in the fixed derivation $\delta$ and take the
maximum of their constants.

For the zone statement, monotonicity of $F$ gives
\[
 v\le F^{[C_\delta]}(u+C_\delta)
 \le F^{[p+C_\delta]}(Y_s+p+C_\delta),
\]
while $v\ge u$ gives
\[
 Y_s\le F^{[p]}(u+p)
 \le F^{[p+C_\delta]}(v+p+C_\delta).
\]
These are the two defining inequalities for
$\mathcal Z_s[p+C_\delta]$.
\end{proof}

The next observation is uniform in the stride label.  It is the reason that
the layer valuation is additive in the labels rather than controlled only by
their maximum.

\begin{lemma}[uniform descent displacement]
\label{lem:mixed-descent-displacement}
For every $q\ge1$, every integer $p\ge0$, every $s\ge q$, and every $u$,
\[
 u\in\mathcal Z_s[p]
 \quad\Longrightarrow\quad
 \rho_{n,q}(u)\in\mathcal Z_{s-q}[p].
\]
\end{lemma}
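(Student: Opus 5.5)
The plan is to reduce both zone inequalities to a single commutation-type estimate between $G=g_n$ and the lower-row majorant $F$:
\begin{equation*}
 F^{[p]}\bigl(G(x)+p\bigr)\;\le\;G\bigl(F^{[p]}(x+p)\bigr)\qquad(x,p\ge0).
\end{equation*}
Iterating this $q$ times, using monotonicity of $G$ and $F$, gives the $q$-fold version
\begin{equation*}
 F^{[p]}\bigl(G^{[q]}(x)+p\bigr)\le G^{[q]}\bigl(F^{[p]}(x+p)\bigr).
\end{equation*}
The induction step uses $F^{[p]}(x+p)\ge x$ so that the intermediate arguments stay in the monotone range. The rest is bookkeeping with the generalized inverse. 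Throughout, write $w=\rho_{n,q}(u)$. By definition, $G^{[q]}(w)\ge u$; this is trivial when $u=0$, since then $w=0$. Also, $w$ is least with this property when $u>0$. We use $G^{[q]}(Y_{s-q})=Y_s$ from~\eqref{eq:canonical-shift}, and that $G^{[q]}$ is strictly increasing.

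\emph{Upper zone inequality.} From $u\le F^{[p]}(Y_s+p)=F^{[p]}(G^{[q]}(Y_{s-q})+p)$ and the $q$-fold estimate, we get
\[
u\le G^{[q]}\bigl(F^{[p]}(Y_{s-q}+p)\bigr).
\]
Minimality of $w$ then yields $w\le F^{[p]}(Y_{s-q}+p)=U_{s-q}[p]$.

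\emph{Lower zone inequality.} From $Y_s\le F^{[p]}(u+p)$, $u\le G^{[q]}(w)$, and monotonicity of $F$, we get
\[
Y_s\le F^{[p]}(G^{[q]}(w)+p)\le G^{[q]}\bigl(F^{[p]}(w+p)\bigr).
\]
Since $G^{[q]}$ is strictly increasing and sends $Y_{s-q}$ to $Y_s$, this forces $Y_{s-q}\le F^{[p]}(w+p)$, i.e.\ $w\ge L_{s-q}[p]$. Together, $w\in\mathcal Z_{s-q}[p]$. Note the argument is uniform in $q$ and $p$, which is exactly what the statement demands.

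The main obstacle is the one-step commutation estimate, which I would verify separately in the two regimes of $F$. For $n\ge3$, $G=e_n$ and $F=e_{n-1}$ satisfy $G(x+1)=F(G(x))$, so $G(x+k)=F^{[k]}(G(x))$. Since $F(y)\ge y+1$, we have $F^{[p]}(x+p)\ge x+2p$, and therefore
\[
G\bigl(F^{[p]}(x+p)\bigr)\ge G(x+2p)=F^{[p]}\bigl(F^{[p]}(G(x))\bigr)\ge F^{[p]}\bigl(G(x)+p\bigr).
\]
For $n=2$, use the closed form $F^{[p]}(y)=2^p(y+2)-2$ and $G(y)=y^2+2$. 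Set $A=2^p(x+p+2)-2\ge x+p+2^{p}\cdot 2-2$. The claim becomes $A^2+2\ge 2^p(x^2+p+4)-2$. Here $A^2\ge 4^p(x+p+1)^2$ roughly dominates $2^p(x^2+p+4)$. I would check the small cases, $p=0$ (equality) and $x=0$, by hand; the rest is a routine polynomial comparison.
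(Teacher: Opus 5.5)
Your proof is correct and is essentially the paper's argument: both rest on the commutation estimate $F^{[p]}(G^{[q]}(x)+p)\le G^{[q]}\bigl(F^{[p]}(x+p)\bigr)$ and then pass through the generalized inverse using $G^{[q]}(Y_{s-q})=Y_s$. The differences are cosmetic: the paper derives the estimate from the one-line inequality $G(F(x))\ge F(G(x))$ together with $G^{[q]}(x+p)\ge G^{[q]}(x)+p$, which spares you the closed-form $n=2$ check (your check does hold for $p\ge1$, since $2^p(x+p+2)-2\ge 2^p(x+p+1)$); the paper also finishes via $\rho_{n,q}(F^{[p]}(w+p))\le F^{[p]}(\rho_{n,q}(w)+p)$ and monotonicity of $\rho_{n,q}$, rather than your direct minimality and strict-monotonicity steps.
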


\begin{proof}
Write $H_p(x)=F^{[p]}(x+p)$.  The canonical generators satisfy
\[
 G(F(x))\ge F(G(x)).
\]
For $n=2$ this is the direct inequality
$(2x+2)^2+2\ge2(x^2+2)+2$.  For $n\ge3$, $F(x)\ge x+1$ and
$G(x+1)=F(G(x))$, so monotonicity of $G$ gives the same inequality.
Induction on $p$ gives
\[
 G\bigl(F^{[p]}(z)\bigr)\ge F^{[p]}(G(z)),
\]
and iteration in $q$ yields
\[
 G^{[q]}\bigl(F^{[p]}(z)\bigr)
 \ge F^{[p]}\bigl(G^{[q]}(z)\bigr).
\]
Set $z=x+p$.  Since $G^{[q]}$ is a strictly increasing integer-valued map,
\[
 G^{[q]}(x+p)\ge G^{[q]}(x)+p.
\]
Therefore
\begin{equation}
\label{eq:mixed-shifted-commutation}
 G^{[q]}(H_p(x))\ge H_p(G^{[q]}(x)).
\end{equation}
Applying~\eqref{eq:mixed-shifted-commutation} at
$x=\rho_{n,q}(w)$ and using
$G^{[q]}(\rho_{n,q}(w))\ge w$ gives
\[
 \rho_{n,q}(H_p(w))\le H_p(\rho_{n,q}(w)).
\]

Now suppose $u\in\mathcal Z_s[p]$ and $s\ge q$.  The upper zone inequality gives
\[
 \rho_{n,q}(u)
 \le \rho_{n,q}(H_p(Y_s))
 \le H_p(\rho_{n,q}(Y_s))
 =H_p(Y_{s-q}).
\]
The lower zone inequality $Y_s\le H_p(u)$ gives
\[
 Y_{s-q}=\rho_{n,q}(Y_s)
 \le\rho_{n,q}(H_p(u))
 \le H_p(\rho_{n,q}(u)).
\]
These are precisely the two defining inequalities for
$\mathcal Z_{s-q}[p]$.
\end{proof}

\subsection*{Mixed selected dependency chains}

A \emph{mixed derivation} is a finite derivation over $B_m$ in which each
step-recursion node carries its own label $q\ge1$ and uses $\rho_{n,q}$.  Its
\emph{construction rank} is the height of the derivation tree: initial leaves have
rank zero, and composition or step-recursion nodes have rank one plus the maximum
rank of their proper subderivations.  Let
\[
 Q_\delta\subseteq\Npos
\]
be the finite set of stride labels occurring in a fixed derivation $\delta$.

\begin{definition}[mixed selected dependency chain]
\label{def:mixed-selected-chain}
Evaluate a mixed derivation $\delta$ at concrete inputs.  A mixed selected
dependency chain is an occurrence-respecting path
\[
 u_0,u_1,\ldots,u_r=\sem{\delta}(\bar x)
\]
whose origin is either a formal input occurrence or a fixed numeral occurrence.
It is constructed inductively.  At an initial-function node use the oriented
primitive dependency from Lemma~\ref{lem:lower-majorant}: successor follows its
input, addition follows a maximum-valued input, a unary generator follows its
unique input, a projection gives equality, and zero is a fixed origin.

At a composition node, first choose a selected chain in the concrete head
evaluation.  If that chain begins at the head's $i$th formal input, prepend a
selected chain through the corresponding argument subevaluation.  At a
step-recursion node labelled $q$ with current recursion address $y=0$, choose a
selected chain in the concrete base subderivation.  If $y>0$, choose a selected
chain in the concrete transition evaluation.  If its origin is a parameter,
link it to the corresponding outer input; if it is the current address,
prepend the relevant segment of the $\rho_{n,q}$ schedule; if it is the
previous state, prepend a selected chain for the preceding recursive state.
Following previous-state origins always terminates, because every such move
decreases the finite stage index along the $\rho_{n,q}$ schedule.  A fixed
numeral remains a fixed origin.  No concrete occurrence is reused.

After suppressing occurrence labels, every transition of the resulting
numerical path is exactly one of:
\begin{enumerate}
\item an equality step;
\item an oriented lower-basis step $u_j\mapsto u_{j+1}$ satisfying
\[
 u_j\le u_{j+1}\le F^{[C]}(u_j+C)
\]
for a derivation-dependent constant $C$;
\item a descent step
$u_{j+1}=\rho_{n,q}(u_j)$ carrying its actual label $q\in Q_\delta$.
\end{enumerate}
Thus a mixed selected chain records one genuine numerical dependency path
through the evaluation together with the ordered word of stride labels met on
that path; it is not the full evaluation tree.
\end{definition}

\begin{lemma}[stagewise lifting and well-foundedness of selected paths]
\label{lem:mixed-path-lifting}
Fix a mixed derivation $\delta$ and a concrete evaluation of one of its output
occurrences.  The selection rules of Definition~\ref{def:mixed-selected-chain}
produce at least one finite occurrence-respecting selected chain ending at that
output, and every chain produced by those rules is finite.  More precisely, at a
step-recursion node labelled $q$, a selected transition dependency at stage $j$
lifts to the enclosing evaluation as follows:
\begin{enumerate}
\item a parameter origin lifts through the corresponding outer argument;
\item an address origin lifts through the concrete predecessor edge carrying the
label $q$;
\item a previous-state origin lifts to a selected chain for stage $j-1$.
\end{enumerate}
Repeated previous-state lifting terminates after at most $j$ such moves.  After
all lifts are expanded, every numerical transition is an equality, an oriented
lower-basis step, or one actual descent $u\mapsto\rho_{n,q}(u)$ carrying the label
of the recursion node at which it occurs.  No evaluation occurrence is visited
twice.
\end{lemma}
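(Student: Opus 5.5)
The plan is to argue by structural induction on the construction rank of $\delta$, proving a stronger statement that is uniform over all concrete evaluations. The invariant is: for every node of $\delta$ and every concrete evaluation of it, the selection rules of Definition~\ref{def:mixed-selected-chain} yield a finite sequence of occurrences ending at the output. Each occurrence in that sequence is visited once, and each numerical transition is an equality, an oriented lower-basis step, or a labelled descent.

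The base cases are the leaves, namely the initial functions. There Lemma~\ref{lem:lower-majorant} supplies the single oriented dependency. Zero and numerals are fixed origins, so the chain has length at most one.

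At a composition node I use the inductive hypothesis twice. It first gives a finite selected chain for the head evaluation. If that chain's origin is the $i$th formal input, the hypothesis gives a second finite chain for the $i$th argument subevaluation, and I prepend it. The result is finite because it is a concatenation of two finite chains. Occurrences are not reused, since the head and argument subevaluations are disjoint sets of occurrences in the evaluation tree. The transition types are inherited from the two pieces.

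The main work is the step-recursion node labelled $q$ at address $y>0$. Here I introduce the concrete schedule $y_0=y$, $y_{i+1}=\rho_{n,q}(y_i)$, stopping at the first $i$ with $y_i=0$. This is finite because $\rho_{n,q}(y)<y$ for $y>0$: we have $g_n^{[q]}(z)\ge z+1>z$, hence $\rho_{n,q}(y)\le y-1$. The stages are indexed by $j$, with stage $0$ the base evaluation.

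I then prove by an inner induction on $j$ that every stage-$j$ state admits a finite selected chain whose transitions are of the three types, and which uses only occurrences in the subevaluations of stages $\le j$ together with the schedule edges. Stage $j$ is a transition evaluation, to which the outer hypothesis gives a finite chain with some origin. There are three cases.
\begin{enumerate}
\item A parameter origin lifts to the outer argument. This is a single equality link.
\item An address origin equals $y_{j-1}$ viewed as $\rho_{n,q}(y_{j-2})$, or the outer input $y$ itself. In the first situation I prepend the schedule segment from $y$, which consists of actual descent steps $u\mapsto\rho_{n,q}(u)$ each labelled $q$.
\item A previous-state origin invokes the inner hypothesis at stage $j-1$.
\end{enumerate}
Only the third case recurses, and it strictly lowers $j$, so repeated previous-state lifting happens at most $j$ times. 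This gives the stated bound and the termination.

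The expected obstacle is the no-reuse claim, especially in the address-origin case. The schedule segment must not overlap occurrences already on the chain, and two different stages must not both be charged to the same schedule edge. My first attempt would be to track the stage index monotonically along the chain. Reading the chain from origin to output, stage indices are nondecreasing, and each concrete occurrence belongs to a single stage, or to a single schedule edge $y_{i}\mapsto y_{i+1}$. An address origin terminates the backward lifting, so the schedule segment is prepended only at the very start of the chain and lies before every transition-evaluation occurrence.

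If this ordering argument stalls, I would fall back on an explicit well-founded measure on occurrences of the evaluation tree, lexicographic in (node depth, stage index), and show that it strictly decreases when the chain is read backward from the output. That measure gives finiteness and injectivity at the same time. Existence of at least one chain is automatic, because every selection rule makes a choice wherever one exists.
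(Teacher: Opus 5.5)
Your proposal is correct and is essentially the paper's proof. Both use construction-rank induction with a secondary induction on the concrete stage index at each step-recursion node, and lift transition origins three ways, with only the previous-state case recursing (strictly lowering $j$); no-reuse comes from disjointness of subevaluations, at most one schedule segment per chain, and stages being met in monotone order. One small indexing slip: with $D=D_{n,q}(y)$ and $z_j=\rho_{n,q}^{[D-j]}(y)$, the transition at stage $j$ receives the predecessor $\rho_{n,q}(z_j)=z_{j-1}$, never $y$ itself, so your ``outer input $y$ itself'' case is vacuous and every address-origin lift crosses at least one $q$-labelled schedule edge — harmless here, but exactly what the later weight count relies on.
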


\begin{proof}
Use induction on the construction rank of $\delta$.  Initial-function and
composition nodes are immediate from the explicit rules and the induction
hypothesis for proper subderivations.  At a step-recursion node, keep the
construction-rank induction fixed and use a secondary induction on the concrete
stage number $j$.  At stage $0$ the chain lies in the base subderivation.  At a
positive stage, first select a chain in the transition subevaluation.  Parameter
and address origins lift directly as stated.  If the selected origin is the
previous state, invoke the stage induction at $j-1$ and prepend the resulting
chain.  Hence this unwinding decreases $j$ strictly and terminates.  Proper
subderivations have smaller construction rank, so all recursively requested
chains inside the base or transition are finite.  The edge-type assertion follows
from the oriented primitive rules and from recording each concrete schedule edge
at the recursion node that generated it.  Since construction occurrences are
followed downward and stage indices decrease whenever a previous state is
unwound, an occurrence cannot be revisited.
\end{proof}

The proof below therefore uses three separate well-founded arguments: construction
rank guarantees finite subderivation lifting, the stage index guarantees
termination inside a recursion node, and later the position along the resulting
finite chain supports the zone-valuation induction.  Keeping these inductions
separate is useful because the mixed labels themselves are not ordered.

The next lemma supplies the uniform quantitative bound needed for the mixed
proof.  In the evaluation semantics used here, the declared
bound is a static admissibility certificate; evaluating that certificate is not
part of the operational value set of the recursion node.

\begin{lemma}[mixed internal-value and depth envelope]
\label{lem:mixed-internal-envelope}
For every fixed mixed derivation $\delta$ there is a constant $c_\delta\ge1$
with the following property.  If every input coordinate is at most $N$, then
every numerical value created in the operational evaluation of $\delta$ is at
most
\[
 G^{[c_\delta]}(N+c_\delta).
\]
Consequently there is $C_\delta'\ge1$ such that every recursion address $z$
occurring at a node labelled $q$ satisfies
\[
 D_{n,q}(z)
 \le C_\delta'\bigl(1+D_{n,1}(N+C_\delta')\bigr).
\]
In particular, at canonical input $Y_t$ every node-specific recursion depth is
$O_\delta(t+1)$.
\end{lemma}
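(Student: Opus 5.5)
Structural induction on the mixed derivation $\delta$, proving a slightly stronger statement. If every input is at most $N$, then every value created in the operational evaluation of $\delta$, and its output, is at most $G^{[c_\delta]}(N+c_\delta)$. Here "created" means base values, transition values, recursion addresses along every $\rho_{n,q}$ schedule, and intermediate values of argument subterms.

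\emph{Base and composition cases.} For initial functions, Lemma~\ref{lem:lower-majorant} gives an envelope $F^{[c]}(u+c)$. Since $F\le G$ pointwise for the canonical pair (both $2x+2\le x^2+2$ eventually and $e_{n-1}\le e_n$), we may absorb $F$ into $G$ after enlarging $c$. We also use $G(x)\ge x+1$ to absorb the additive shifts. For composition $h(g_1,\dots,g_k)$ the arguments are bounded by $G^{[c_g]}(N+c_g)$. Feeding this bound into the head's envelope and using $x+c\le G^{[c]}(x)$ gives $c_\delta=c_h+\max_i c_{g_i}+O(1)$.

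\emph{Step recursion case.} For a node labelled $q$, every recursion address lies on the schedule $y,\rho_{n,q}(y),\dots$. Each address is at most $y\le N$, since $\rho_{n,q}(y)\le y$. The parameters are at most $N$. The key point is that the state values need not be controlled by iterating $h$: the admissibility bound $b$ gives $f(\bar x,z)\le b(\bar x,z)$ at every stage. Since $b$ is an earlier function in the closure, it has its own derivation, and the induction hypothesis bounds it. So every state value fed into $h$ is at most $G^{[c_b]}(N+c_b)$. Every value created inside the transition $h$ is then at most $G^{[c_h]}(G^{[c_b]}(N+c_b)+c_h)$, and likewise for the base $g$. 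Taking $c_\delta=c_h+c_b+c_g+O(1)$ closes the induction.

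\emph{The main obstacle.} The delicate point is that $b$ is a certificate and is not evaluated operationally. I would therefore require, as part of the derivation data, that the bounding function carries a derivation in the current closure. The induction then runs over the combined tree, and $c_\delta$ depends on $b$'s derivation as well. Everything else is routine monotone bookkeeping.

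\emph{Depth bound.} Any address $z$ at a node labelled $q$ satisfies $z\le G^{[c_\delta]}(N+c_\delta)$. By \eqref{eq:canonical-threshold}, $D_{n,1}(x)\le k$ iff $x\le Y_k$. Choose $k=D_{n,1}(N+c_\delta)$, so that $N+c_\delta\le Y_k$. Then $z\le G^{[c_\delta]}(Y_k)=Y_{k+c_\delta}$, hence $D_{n,1}(z)\le k+c_\delta$.

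Next, Lemma~\ref{lem:stride-arithmetic} gives $\rho_{n,q}=\rho_{n,1}^{[q]}$. Since $\rho_{n,1}$ strictly decreases positive inputs, $D_{n,q}(z)=\lceil D_{n,1}(z)/q\rceil\le D_{n,1}(z)$. Setting $C'_\delta=c_\delta+1$ gives the stated inequality.

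At $N=Y_t$ we have $D_{n,1}(Y_t+C'_\delta)\le t+1$ for large $t$, because $Y_t+C'_\delta\le Y_{t+1}$. Hence every node-specific recursion depth is $O_\delta(t+1)$.
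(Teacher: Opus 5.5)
Your proposal is correct and follows the paper's proof essentially step for step. Both use induction on construction rank and treat the bound derivation as a proper subderivation, so that admissibility caps every recursive state uniformly in the recursion depth; both then get the depth estimate from the canonical threshold law \eqref{eq:canonical-threshold} together with $D_{n,q}\le D_{n,1}$. One cosmetic slip: for $n=2$ one has $F(1)=4>3=G(1)$, so $F\le G$ is not literally pointwise, but the absorption after enlarging $c$ goes through exactly as you indicate.
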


\begin{proof}
We prove the value bound by induction on the construction rank of $\delta$.
Because $m<n$, every initial function in $B_m$ is bounded, on arguments at most
$N$, by $G^{[c]}(N+c)$ for a fixed $c$ depending only on that initial function;
this is Lemma~\ref{lem:lower-majorant}.  Thus the claim holds for initial
derivations.

At a composition node, the induction hypotheses bound all values created in the
argument subevaluations by one common envelope
\[
 A=G^{[a]}(N+a).
\]
Apply the induction hypothesis for the head derivation with input cap $A$.  If
the head contributes a fixed envelope $G^{[b]}(A+b)$, choose a fixed $r$ with
$2r\ge b$.  Since $G(x)\ge x+2$, one has $A+b\le G^{[r]}(A)$ and hence
\[
 G^{[b]}(A+b)
 \le G^{[b+r]}(A)
 =G^{[a+b+r]}(N+a)
 \le G^{[c]}(N+c)
\]
for one fixed $c$.  Thus finite composition preserves the required envelope
form.

Consider now a step-recursion node labelled $q$, evaluated at $(\bar p,z)$ with
all coordinates at most $N$.  Put $d=D_{n,q}(z)$ and write its schedule as
\[
 z_j=\rho_{n,q}^{[d-j]}(z)\qquad(0\le j\le d).
\]
Every $z_j$ is at most $z\le N$.  Let $\beta$ be the earlier bound derivation.
By the induction hypothesis its output on $(\bar p,z_j)$ is at most
$B=G^{[b]}(N+b)$, uniformly in $j$.  If $s_j$ denotes the recursive state at
$z_j$, admissibility gives
\[
 s_j\le\sem{\beta}(\bar p,z_j)\le B.
\]
The base derivation therefore receives inputs at most $N$, and every transition
evaluation receives parameters and address at most $N$ and previous state at
most $B$.  Applying the induction hypotheses to the base and transition with
input caps $N$ and $\max(N,B)$ bounds every value created in all those
subevaluations by finitely many envelopes of the required form.  The same
absorption argument used in the composition case combines those finitely many
fixed iterates and shifts into one $G^{[c_\delta]}(N+c_\delta)$, independently
of the outer recursion depth $d$.  The induction already covers all nested mixed
labels, so the value bound is uniform over the finite set $Q_\delta$.

Every recursion address is among the values just bounded.  Since
$\rho_{n,q}=\rho_{n,1}^{[q]}$, one has $D_{n,q}(u)\le D_{n,1}(u)$ for all $u$.
Let $r=D_{n,1}(N+c_\delta)$.  The canonical threshold law gives
$N+c_\delta\le G^{[r]}(0)$ and hence
\[
 G^{[c_\delta]}(N+c_\delta)\le G^{[r+c_\delta]}(0).
\]
Thus the fine depth of every address is at most $r+c_\delta$, which gives the
displayed estimate after enlarging the constant.  Finally, \eqref{eq:canonical-threshold} gives $D_{n,1}(Y_t)=t$.  More
explicitly, for every fixed $c\ge0$ choose $h=\lceil c/2\rceil$.  Since
$G(x)\ge x+2$,
\[
 Y_t\le Y_t+c\le G^{[h]}(Y_t)=Y_{t+h},
\]
and monotonicity together with \eqref{eq:canonical-threshold} yields
\[
 t\le D_{n,1}(Y_t+c)\le t+h.
\]
Thus every fixed additive shift changes canonical depth by at most a fixed
amount, and all node-specific depths are $O_\delta(t+1)$ on canonical inputs.
\end{proof}

\begin{lemma}[uniform polynomial length of mixed chains]
\label{lem:mixed-chain-length}
For every fixed mixed derivation $\delta$ there is a polynomial $P_\delta$ and
a finite set of numeral origins $\mathcal N_\delta$ such that, in every
evaluation of $\delta$ at $Y_t$, \emph{every} mixed selected dependency chain
has length at most $P_\delta(t)$.  If such a chain has a numeral origin, that
origin belongs to $\mathcal N_\delta$.
\end{lemma}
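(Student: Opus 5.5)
I would prove the bound by induction on the construction rank of $\delta$, carrying a strengthened hypothesis. For every fixed derivation $\delta$ there are a constant $k_\delta$ and a polynomial $P_\delta$ such that, whenever all inputs are at most $Y_t+c$ for a fixed shift $c$ (more precisely, have canonical depth $D_{n,1}\le t+c$), every selected chain has length at most $P_\delta(t)$. Phrasing the hypothesis in terms of canonical depth of the inputs, rather than the literal input $Y_t$, is what makes it usable at inner nodes. By Lemma~\ref{lem:mixed-internal-envelope}, every value created internally, hence every input passed to a proper subderivation, is at most $G^{[c_\delta]}(N+c_\delta)$. Its canonical depth is therefore at most $t+c'$ for a fixed $c'$, using the threshold law~\eqref{eq:canonical-threshold} and the shift estimate at the end of that lemma. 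A polynomial in $t+c'$ is again a polynomial in $t$, so subderivation bounds transfer.

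The inductive cases would be handled as follows.

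\textbf{Initial functions.} A chain has length at most one. Its origin is either an input or the numeral produced by a zero occurrence.

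\textbf{Composition.} The chain is a head chain followed by a prepended chain in one argument subevaluation. Its length is at most $P_{\mathrm{head}}(t+c')+\max_i P_{\mathrm{arg}_i}(t)$, a polynomial in $t$.

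\textbf{Step recursion labelled $q$.} The chain unwinds through stages $j,j-1,\ldots$. At each stage it contributes one transition chain of length at most $P_h(t+c')$ and then moves to the previous stage. It exits either through a parameter, through an address origin (which prepends at most $d$ schedule edges, where $d=D_{n,q}(z)$), or through the base chain at stage $0$. Lemma~\ref{lem:mixed-path-lifting} guarantees that the number of previous-state unwinds is at most $j\le d$. Lemma~\ref{lem:mixed-internal-envelope} gives $d\le C'_\delta(1+t+c')$. Summing, the total length is at most
\[
 (d+1)\,P_h(t+c')+d+P_g(t+c')+O(1),
\]
which is polynomial in $t$, with degree increasing by one per nesting. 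Since the derivation is fixed and finite, we end with one polynomial $P_\delta$.

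\textbf{Numeral origins.} Every numeral origin is produced at an initial zero occurrence or at a constant occurring in $\delta$. Fixed numerals come only from the finitely many syntactic initial occurrences in $\delta$, and their values are independent of the inputs ($0$ from $Z$; composed constants are traced back to $Z$, because chains pass through $S$ and other generators). So $\mathcal N_\delta$ can be taken as the finite set of zero/numeral leaf occurrences in $\delta$; the collection of their values is just $\{0\}$ together with any literal constants.

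The main obstacle is the step-recursion case. There I must make sure the stage count is bounded by the canonical depth of the address, and not by something larger. I also need the transition's inputs, in particular the previous state bounded by the admissibility bound, to have canonical depth $t+O(1)$ uniformly over all stages. That is exactly what the envelope lemma provides, so I would invoke it at this point. I would also check that the strengthened input hypothesis ("depth at most $t+c$" rather than "equal to $Y_t$") is the form actually used in the induction.
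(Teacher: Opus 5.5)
Your proof is correct. It uses the same three ingredients as the paper: induction on construction rank; the recursion-node count of at most $d$ stages, one transition subchain per stage, and at most $d$ schedule edges; and Lemma~\ref{lem:mixed-internal-envelope} to make depths linear in $t$. What differs is how you organize the induction.

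The paper first posits a single global bound $K$ on every node-specific recursion depth in the evaluation. It then proves, by a purely structural induction with no value estimates, that every chain has length at most $L_\delta(K)$, where $L_\delta(K)\le L_\gamma(K)+K L_\eta(K)+K+1$. The envelope lemma is invoked exactly once, at the end, to get $K=O_\delta(t+1)$ at input $Y_t$.

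You instead take the canonical depth of the inputs as the induction parameter. You re-apply the envelope lemma at every composition and recursion node to show that the inputs handed to the head, the base and the transition have depth $T+O(1)$. Previous states are covered through the admissibility bound.

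Your version gives a self-contained bound for each subderivation in terms of its own input depth, but it has to redo value estimates at every inductive step. The paper's version keeps the combinatorial recurrence entirely separate from the growth bound.

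Two small clean-ups to your write-up:
\begin{itemize}
\item State the hypothesis uniformly in $T$: if all inputs have canonical depth at most $T$, every chain has length at most $P_\delta(T)$. With that phrasing the fixed shift $c$ and the unused constant $k_\delta$ disappear, and $P_{\rm sub}(T+c')$ is visibly a polynomial in $T$.
\item At the recursion node itself you do not need the envelope lemma to bound the stage count. Since $z$ is an input of that node, $d=D_{n,q}(z)\le D_{n,1}(z)\le T$ holds directly; the envelope is needed only for the inputs passed down to the subderivations.
\end{itemize}
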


\begin{proof}
First suppose every recursion address occurring in an evaluation has
node-specific descent depth at most $K$.  Let $L_\delta(K)$ denote a uniform
upper bound for the lengths of \emph{all} mixed selected dependency chains in
such evaluations.  We construct it by induction on the construction rank of
$\delta$.  Initial functions contribute constant-length chains using the explicit
oriented rules in Definition~\ref{def:mixed-selected-chain}; zero contributes
only a numeral origin.  At a composition node an arbitrary selected chain
follows the head chain and then one selected argument branch, so its length is
bounded by the head-chain bound plus the maximum argument-chain bound.  At a
step-recursion node labelled $q$, an arbitrary selected chain crosses at most
$d=D_{n,q}(z)\le K$ recursion stages and, within each crossed stage, follows at
most one selected transition subchain.  Hence
\[
 L_\delta(K)
 \le L_\gamma(K)+K L_\eta(K)+K+1.
\]
The derivations $\gamma$ and $\eta$ have strictly smaller construction rank, so
the induction hypotheses make $L_\gamma$ and $L_\eta$ polynomial; multiplication
by $K$ preserves polynomiality.  Thus
$L_\delta(K)\le A_\delta(K+1)^{r_\delta}$ for fixed constants, uniformly over
all selected chains.  A fixed derivation has only finitely many numeral
occurrences, giving $\mathcal N_\delta$.  Lemma~\ref{lem:mixed-internal-envelope}
supplies a polynomial bound on $K$ at input $Y_t$; substitution yields a fixed
polynomial $P_\delta(t)$ valid for every selected chain.
\end{proof}

We can now state the genuinely mixed valuation theorem.  If a chain segment
contains descent labels $q_1,\ldots,q_r$, its stride weight is
$q_1+\cdots+q_r$.

\begin{theorem}[mixed zone-index valuation]
\label{thm:mixed-zone-valuation}
Let a mixed selected dependency chain for $\delta$ contain a point
$u_i\in\mathcal Z_s[P(t)]$ for a fixed polynomial $P$.  For $h\ge i$, let
$w_h$ be the sum of the labels of all descent steps between $u_i$ and $u_h$, and
let $b_h$ be the number of lower-basis steps on that segment.  As long as every
descent step labelled $q$ starts when the current zone index is at least $q$,
one has the exact integer-width invariant
\[
 u_h\in\mathcal Z_{s-w_h}\bigl[P(t)+b_hC_\delta\bigr].
\]
Consequently, with
\[
 P^\star(X)=P(X)+C_\delta P_\delta(X),
\]
one has
\[
 u_h\in\mathcal Z_{s-w_h,P^\star}(t)
\]
for every $h$ on the segment.
\end{theorem}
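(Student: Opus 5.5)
The argument is an induction on $h$, starting at $h=i$ and moving along the finite chain. At each step we ask which of the three transition types of Definition~\ref{def:mixed-selected-chain} connects $u_h$ to $u_{h+1}$. The invariant to carry is the pair of running quantities $(w_h,b_h)$: the stride weight $w_h$ and the lower-basis count $b_h$. Base case: at $h=i$ we have $w_i=b_i=0$, and the hypothesis $u_i\in\mathcal Z_s[P(t)]$ is exactly the claim.

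Inductive step, assuming $u_h\in\mathcal Z_{s-w_h}[P(t)+b_hC_\delta]$. Write $p=P(t)+b_hC_\delta$, an integer width. The point of stating Lemmas~\ref{lem:mixed-basis-stability} and~\ref{lem:mixed-descent-displacement} for arbitrary integer width $p$ is that they apply here verbatim, with no bookkeeping about how the width arose.
\begin{enumerate}
\item \emph{Equality step.} Nothing changes: $w_{h+1}=w_h$ and $b_{h+1}=b_h$, so the invariant is inherited.
\item \emph{Lower-basis step.} We have $w_{h+1}=w_h$ and $b_{h+1}=b_h+1$. Lemma~\ref{lem:mixed-basis-stability} (its constant $C_\delta$ dominates the chain constant $C$ of Definition~\ref{def:mixed-selected-chain}, or we take the two equal) gives $u_{h+1}\in\mathcal Z_{s-w_h}[p+C_\delta]$, which is the required width.
\item \emph{Descent step with label $q$.} We have $u_{h+1}=\rho_{n,q}(u_h)$, $w_{h+1}=w_h+q$ and $b_{h+1}=b_h$. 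The standing hypothesis says the current index $s-w_h$ is at least $q$, so Lemma~\ref{lem:mixed-descent-displacement} applies with $s-w_h$ in place of $s$. It yields $u_{h+1}\in\mathcal Z_{s-w_h-q}[p]$, as needed.
\end{enumerate}
This proves the exact invariant. One point deserves care here: the descent lemma is uniform in $q$ and keeps the width unchanged. That is why labels add to the index rather than being absorbed into the width.

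For the consequence, we bound the width. By Lemma~\ref{lem:mixed-chain-length}, every selected chain of $\delta$ at input $Y_t$ has length at most $P_\delta(t)$, so $b_h\le P_\delta(t)$. Zones are monotone in the width: if $p\le p'$ then $\mathcal Z_s[p]\subseteq\mathcal Z_s[p']$. This holds because $F$ is increasing with $F(x)\ge x$. Hence $F^{[p]}(Y_s+p)\le F^{[p']}(Y_s+p')$ gives the upper bound, and $F^{[p]}(u+p)\le F^{[p']}(u+p')$ means $L_s[p']\le L_s[p]$. Therefore $u_h\in\mathcal Z_{s-w_h}[P(t)+C_\delta P_\delta(t)]=\mathcal Z_{s-w_h,P^\star}(t)$.

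I expect the main obstacle to be this monotonicity-of-zones step, together with confirming that the lower-basis constant in Lemma~\ref{lem:mixed-basis-stability} matches the $C_\delta$ in the statement. Both are routine but must be stated so that the width bound is independent of the particular chain. If the chain being considered applies to the evaluation at $Y_t$, the length bound applies uniformly, as Lemma~\ref{lem:mixed-chain-length} was stated for every chain.
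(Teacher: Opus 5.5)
Your proposal is correct and follows the paper's proof essentially step for step. You induct along the chain, handle the three edge types with Lemmas~\ref{lem:mixed-basis-stability} and~\ref{lem:mixed-descent-displacement} at arbitrary integer width, then use $b_h\le P_\delta(t)$ from Lemma~\ref{lem:mixed-chain-length} together with monotonicity of zones in the width; your explicit check of that monotonicity (via $F(x)\ge x$ and $L_s[p']\le L_s[p]$) simply spells out what the paper states in one sentence.
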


\begin{proof}
Induct along the chain.  Equality leaves both index and width unchanged.  A
basis step preserves the index and increases the current integer width by at
most $C_\delta$ by Lemma~\ref{lem:mixed-basis-stability}.  A descent step
labelled $q$ changes the current index from $s'$ to $s'-q$ without changing the
integer width by Lemma~\ref{lem:mixed-descent-displacement}.  This proves the
exact invariant with width $P(t)+b_hC_\delta$.

By Lemma~\ref{lem:mixed-chain-length}, $b_h\le P_\delta(t)$.  Hence the current
integer width is at most $P^\star(t)$.  Enlarging an integer zone width preserves
membership by monotonicity of $F$, so the polynomial-width conclusion follows.
\end{proof}

Two barriers rule out the only ways in which the valuation could fail near the
bottom of the canonical scale.

\begin{lemma}[constant-origin exclusion]
\label{lem:mixed-constant-origin}
Let $\delta$ be fixed and let $d\ge0$.  For all sufficiently large $t$, no mixed
selected dependency chain for $\delta$ that begins at a fixed numeral can end at
$Y_{t-d}$.
\end{lemma}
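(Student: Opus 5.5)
The plan is to run the zone-valuation machinery forward from the numeral origin and compare the zone index of the endpoint with that of $Y_{t-d}$. Write $\nu$ for the numeral origin; by Lemma~\ref{lem:mixed-chain-length} it lies in the finite set $\mathcal N_\delta$, so $\nu\le K_\delta:=\max\mathcal N_\delta$. The first step places $\nu$ in a zone of bounded index. Since the sequence $(Y_s)$ is strictly increasing, some fixed index $s_0$ has $Y_{s_0}\ge K_\delta$. Then $\nu\le Y_{s_0}\le U_{s_0}[0]$. For the lower inequality I enlarge the width: with $p_0=Y_{s_0}$ we get $F^{[p_0]}(\nu+p_0)\ge Y_{s_0}$. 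So $\nu\in\mathcal Z_{s_0}[p_0]$, where $s_0$ and $p_0$ are constants depending only on $\delta$. Equivalently, one can use the degenerate index $s=0$ with $Y_0=0\le\nu$ and $\nu\le F^{[p]}(p)$ for a fixed $p$. This second choice is cleaner, so I would take $s=0$ and a constant width $P\equiv p_0$.

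Next, follow the chain. The descent hypothesis of Theorem~\ref{thm:mixed-zone-valuation} (``index at least $q$'') fails as soon as a descent occurs while the index is $0$, so I would not apply the theorem directly. Instead I will prove a one-sided variant: membership in the \emph{upper} half-zone $u\le U_s[p]$, with $s$ allowed to stay at $0$ under descent. The upper half is the only part needed here. Equality steps preserve it. Basis steps increase the width by at most $C_\delta$, by the upper estimate in Lemma~\ref{lem:mixed-basis-stability}. For descents, $\rho_{n,q}(u)\le u\le U_s[p]$ since $\rho_{n,q}(u)\le u$ and $U$ is monotone in $s$, so keeping index $0$ is legitimate. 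Thus the index never increases, and every point of the chain satisfies $u_h\le U_0[p_0+C_\delta P_\delta(t)]=F^{[P^\star(t)]}(P^\star(t))$, where $P^\star$ is a fixed polynomial.

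Finally, compare with $Y_{t-d}$. It suffices that $F^{[P^\star(t)]}(Y_0+P^\star(t))<Y_{t-d}$ for large $t$. This is exactly Lemma~\ref{lem:canonical-gap} with polynomial $P^\star$ and offset $a=t-d-1$. However, the offset there is fixed while here it varies, so I would instead iterate the gap lemma with offset $a=d$: Lemma~\ref{lem:canonical-gap} gives $F^{[P^\star(t)]}(Y_{t-d-1}+P^\star(t))<Y_{t-d}$. Since $Y_0\le Y_{t-d-1}$ for $t\ge d+1$, monotonicity of $F$ gives the required strict inequality. So the endpoint cannot equal $Y_{t-d}$.

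The main obstacle I expect is the descent-below-zero issue in the second step: the exact invariant breaks at the bottom of the scale. The resolution is to drop the lower zone inequality entirely and track only the upper envelope, which is monotone under descents. I would double-check that Lemma~\ref{lem:mixed-basis-stability}'s upper bound needs only $u\le U_s[p]$, and not full zone membership. It does, since it uses only $v\le F^{[C]}(u+C)$ and the monotonicity of $F$.
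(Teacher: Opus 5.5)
Your proposal is correct and is essentially the paper's argument: you replace every descent by the identity (via $\rho_{n,q}(u)\le u$), propagate the upper envelope through at most $P_\delta(t)$ basis steps using Lemma~\ref{lem:mixed-basis-stability} and Lemma~\ref{lem:mixed-chain-length}, and finish with a single application of Lemma~\ref{lem:canonical-gap} at the fixed offset $a=d$ together with monotonicity. The only difference is cosmetic: you absorb the numeral into the width at index $0$, giving the bound $F^{[P^\star(t)]}(P^\star(t))$, whereas the paper keeps a constant $c$ in the bound $F^{[R(t)]}(c+R(t))$ and compares it using a fixed $s_0$ with $c\le Y_{s_0}$.
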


\begin{proof}
For an upper envelope, replace every descent step by the identity.  Since each
canonical descent is nonincreasing, this monotone replacement can only enlarge
the propagated majorant.  By Lemmas~\ref{lem:mixed-basis-stability} and
Lemma~\ref{lem:mixed-chain-length}, a chain starting from one of the finitely
many numerals in $\mathcal N_\delta$ is bounded by
\[
 F^{[R(t)]}(c+R(t))
\]
for a fixed polynomial $R$ and fixed constant $c$.  Choose fixed $s_0$ with
$c\le Y_{s_0}$.  For large $t$, $s_0\le t-d-1$, so monotonicity followed by
\eqref{eq:canonical-gap} places this upper bound strictly below $Y_{t-d}$.
\end{proof}

\begin{lemma}[mixed low-zone barrier]
\label{lem:mixed-low-zone}
Let $\delta$ be fixed, let $d\ge0$, and put
$q_{\max}=\max Q_\delta$ when $Q_\delta\ne\varnothing$.  Consider a selected
chain for $\delta(Y_t)$ that starts at the external input
$Y_t\in\mathcal Z_t[0]$.  Suppose all descent steps before a given
$q$-labelled step satisfy the legality hypothesis of
Theorem~\ref{thm:mixed-zone-valuation}, but immediately before that step the
valued zone index $s$ satisfies $s<q$.  Then, for all sufficiently large $t$,
the chain cannot end at $Y_{t-d}$.
\end{lemma}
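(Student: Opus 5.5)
Up to the offending step, Theorem~\ref{thm:mixed-zone-valuation} applies with $P=0$ and initial index $t$. So immediately before the illegal $q$-labelled step the current point $u$ lies in $\mathcal Z_{s,P^\star}(t)$ with $s<q\le q_{\max}$. Thus $s$ is bounded by a constant depending only on $\delta$, and the zone width is the fixed polynomial $P^\star(t)$.

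First I bound $u$ from above. Membership in $\mathcal Z_s[P^\star(t)]$ gives
\[
 u\le U_s[P^\star(t)]=F^{[P^\star(t)]}(Y_s+P^\star(t))\le F^{[P^\star(t)]}(Y_{q_{\max}}+P^\star(t)),
\]
where the last step uses $s\le q_{\max}$ and monotonicity of $F$. The descent step is nonincreasing, so $\rho_{n,q}(u)\le u$.

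Next I propagate this bound along the rest of the chain, exactly as in Lemma~\ref{lem:mixed-constant-origin}. I replace every remaining descent step by the identity, which only enlarges the majorant because canonical descents are nonincreasing. Each remaining lower-basis step is bounded by $F^{[C_\delta]}(x+C_\delta)$ (Lemma~\ref{lem:mixed-basis-stability}), and there are at most $P_\delta(t)$ of them (Lemma~\ref{lem:mixed-chain-length}). Using the elementary absorption $F^{[a]}(F^{[b]}(x+b)+a)\le F^{[a+b+1]}(x+a+b)$ for large arguments, the endpoint is at most
\[
 F^{[R(t)]}(Y_{q_{\max}}+R(t))
\]
for a fixed polynomial $R$. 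I would double-check this absorption inequality, or else simply enlarge $R$, since $F(x)\ge 2x$ for large $x$ absorbs additive shifts.

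Finally I compare with the target. Take $a$ so that $t-a-1=q_{\max}$ is impossible for variable $t$; instead I use monotonicity. For large $t$ we have $q_{\max}\le t-d-1$, so the bound is at most $F^{[R(t)]}(Y_{t-d-1}+R(t))$. Lemma~\ref{lem:canonical-gap} with offset $a=d$ then gives that this is strictly less than $Y_{t-d}$, so the chain cannot end at $Y_{t-d}$.

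The main obstacle is making the post-violation propagation legitimate. After the illegal step the zone valuation no longer applies, so I must rely purely on the monotone upper envelope, which never needs legality. The subtle point is that the chain-length and basis-step constants are uniform in the chain (Lemmas~\ref{lem:mixed-chain-length} and \ref{lem:mixed-basis-stability}), so the bound is independent of where the violation occurs. The degenerate case $Q_\delta=\varnothing$ is vacuous, since then there are no descent steps at all.
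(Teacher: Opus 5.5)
Your proof is correct and follows essentially the same route as the paper's. You bound the pre-violation point by $U_s[P^\star(t)]$ with $s$ bounded independently of $t$, replace every later descent by the identity, propagate the polynomially many remaining basis steps, and finish with index monotonicity and the canonical gap at offset $d$. Two small remarks: using $Y_{q_{\max}}$ where the paper uses $Y_{q_{\max}-1}$ is harmless, and the absorption step you flag needs no largeness hypothesis, since strict monotonicity of $F$ gives $F^{[b]}(y+a)\ge F^{[b]}(y)+a$ and hence $F^{[a]}(F^{[b]}(x+b)+a)\le F^{[a+b]}(x+a+b)$.
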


\begin{proof}
Theorem~\ref{thm:mixed-zone-valuation} applies to the entire prefix before the
offending step, so at that point the current value is at most $U_s[p]$ for an
integer width $p$ bounded by a fixed polynomial in $t$.  Since
$s<q\le q_{\max}$, one has $s\le q_{\max}-1$, independently of $t$.  For a
monotone upper envelope, replace the offending descent and every later
descent by the identity; because all canonical descents are nonincreasing, this
can only enlarge the propagated majorant.  The remaining number of basis steps
is polynomial by
Lemma~\ref{lem:mixed-chain-length}, and repeated use of
Lemma~\ref{lem:mixed-basis-stability} bounds the eventual value by
\[
 F^{[R(t)]}\bigl(Y_{q_{\max}-1}+R(t)\bigr)
\]
for a fixed polynomial $R$.  For large $t$ one has
$q_{\max}-1\le t-d-1$, so monotonicity followed by
\eqref{eq:canonical-gap} makes this value strictly smaller than $Y_{t-d}$.
\end{proof}

\begin{theorem}[mixed weighted-path rigidity]
\label{thm:mixed-path}
Let $L\subseteq\Npos$ and let $f\in\Mprim{m}{n}{L}$ be unary.  Fix one finite
mixed derivation $\delta$ of $f$, and let $Q_\delta\subseteq L$ be the finite
set of labels occurring in it.  Suppose a fixed $d\ge0$ satisfies
\[
 f(Y_t)=Y_{t-d}
\]
for every sufficiently large $t$.  Then there exists a threshold
$T=T(\delta,d)$ such that for every $t\ge T$ and for \emph{every} mixed selected
dependency chain $\chi$ of the concrete evaluation $\delta(Y_t)$:
\begin{enumerate}
\item $\chi$ begins at a formal input occurrence carrying the external unary input value $Y_t$;
\item if the descent-label word of $\chi$ is $q_1,\ldots,q_r$, in its actual path
order and with repetition, then
\[
 \boxed{\operatorname{wt}(\chi):=q_1+\cdots+q_r=d},
 \qquad q_i\in Q_\delta.
\]
\end{enumerate}
In particular,
\[
 d\in\gen{Q_\delta}\subseteq\gen{L}.
\]
Thus the canonical displacement is not merely represented by one successful
mixed path: it is an eventual invariant of every selected occurrence path of the
fixed derivation.
\end{theorem}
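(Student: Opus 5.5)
The argument combines the two barrier lemmas with the zone valuation and then reads off the weight by zone uniqueness. Fix $\delta$ and $d$, and let $P_\delta$ be the chain-length polynomial of Lemma~\ref{lem:mixed-chain-length}. Choose $T$ larger than the thresholds produced by the following finitely many ingredients, all of which depend only on $\delta$ and $d$:
\begin{itemize}
\item Lemma~\ref{lem:mixed-constant-origin} for the fixed $d$;
\item Lemma~\ref{lem:mixed-low-zone} for the fixed $d$;
\item Lemma~\ref{lem:zone-unique}, applied with the polynomial $P^\star=C_\delta P_\delta$ (so $P=0$ in Theorem~\ref{thm:mixed-zone-valuation}) and with each of the finitely many offsets considered below;
\item the hypothesis $f(Y_t)=Y_{t-d}$.
\end{itemize}
The path chosen by the selection rules is not unique, so I would quantify over every chain $\chi$ of $\delta(Y_t)$ from the outset. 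Each barrier lemma is already stated for arbitrary selected chains, so no choice-dependence enters.

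\emph{Step 1 (origin).} By Definition~\ref{def:mixed-selected-chain}, the origin of $\chi$ is either a formal input occurrence or a numeral occurrence. For a unary $f$, every formal input occurrence carries $Y_t$. A numeral origin is excluded for $t\ge T$ by Lemma~\ref{lem:mixed-constant-origin}, because the endpoint of $\chi$ is $f(Y_t)=Y_{t-d}$. This proves item~1, and the chain starts at $u_0=Y_t\in\mathcal Z_t[0]$.

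\emph{Step 2 (legality throughout).} I claim every descent step of $\chi$ is legal, meaning it starts at a zone index at least its label. Suppose not, and take the first illegal step. All earlier steps are then legal, so Lemma~\ref{lem:mixed-low-zone} applies and forbids the endpoint $Y_{t-d}$ for $t\ge T$, a contradiction. The threshold in that lemma depends only on $\delta$, $d$ and $q_{\max}$, not on the chain, so a single $T$ serves all chains.

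\emph{Step 3 (reading off the weight).} Since every step is legal, Theorem~\ref{thm:mixed-zone-valuation} applies to the whole chain and gives $Y_{t-d}=u_r\in\mathcal Z_{t-w,P^\star}(t)$, where $w=\operatorname{wt}(\chi)$. Legality also forces $t-w\ge0$ at every stage. The main obstacle is that $w$ is not a priori bounded independently of $t$, while Lemma~\ref{lem:zone-unique} is stated for a fixed offset. I would avoid that lemma's $t$-dependence and argue directly from the inequalities inside its proof, since Lemma~\ref{lem:canonical-gap} is monotone in the index.
\begin{itemize}
\item If $t-w\le t-d-1$, then monotonicity of $Y$ and of $F$ gives $Y_{t-d}\le U_{t-w}[P^\star(t)]\le F^{[P^\star(t)]}(Y_{t-d-1}+P^\star(t))<Y_{t-d}$. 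This uses \eqref{eq:canonical-gap} only at the single offset $a=d$, which is uniform in $w$.
\item If $t-w\ge t-d+1$, which is only possible when $w<d$ and so involves finitely many offsets, the lower-zone inequality gives $Y_{t-d+1}\le Y_{t-w}\le F^{[P^\star(t)]}(Y_{t-d}+P^\star(t))<Y_{t-d+1}$, again using the offset $a=d-1$ only.
\end{itemize}
Both cases are contradictory, so $t-w=t-d$ and $w=d$. Every label on $\chi$ lies in $Q_\delta$ by construction, so this gives $d\in\gen{Q_\delta}\subseteq\gen L$.

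The step needing most care is this uniformity: the two uses of the gap lemma must involve only the offsets $d$ and $d-1$, never $w$. One must also confirm that Lemma~\ref{lem:canonical-gap} applies with the chain-independent polynomial $P^\star$, which holds because the length bound $P_\delta$ is uniform over all chains.
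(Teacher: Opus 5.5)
Your proposal is correct and follows the paper's proof essentially step for step: constant-origin exclusion for the origin, the first-illegal-descent argument via the low-zone barrier, the zone valuation with $P^\star=C_\delta P_\delta$, and zone uniqueness to force $\operatorname{wt}(\chi)=d$. The uniformity worry in Step~3 is unnecessary. Lemma~\ref{lem:zone-unique} fixes only the target offset $d$ and is already uniform in the zone index $s$, since its proof uses the gap estimate only at offsets $d$ and $d-1$; your direct two-case argument therefore just reproduces that lemma's proof.
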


\begin{proof}
Choose $T$ larger than the thresholds in the chain-length, constant-origin,
and zone-uniqueness lemmas for the fixed derivation $\delta$ and target offset $d$,
and also larger than the low-zone threshold when $Q_\delta\ne\varnothing$.  Fix $t\ge T$ and an arbitrary mixed selected dependency chain for
the evaluation at $Y_t$; Lemma~\ref{lem:mixed-path-lifting} guarantees that the
selection is a finite occurrence path.  By Lemma~\ref{lem:mixed-constant-origin},
the chain cannot begin at a fixed numeral.  Since the
derivation is unary, its origin is therefore a formal input occurrence carrying the external unary input value
$Y_t\in\mathcal Z_t[0]$.

Read the chain forward and stop, if necessary, immediately before the first
descent labelled $q$ whose current zone index is below $q$.  All earlier
descents are legal, so Theorem~\ref{thm:mixed-zone-valuation} applies to that
initial segment.  Lemma~\ref{lem:mixed-low-zone} says that such a first illegal
descent would make the final target $Y_{t-d}$ unreachable.  Hence no illegal
descent occurs, and the valuation applies to the entire chain.

If the encountered labels, in their actual path order and with repetition, are
$q_1,\ldots,q_r$, then one fixed polynomial $P^\star$ satisfies
\[
 Y_{t-d}\in
 \mathcal Z_{t-(q_1+\cdots+q_r),P^\star}(t).
\]
Zone uniqueness, Lemma~\ref{lem:zone-unique}, forces
$d=q_1+\cdots+q_r$.  Every encountered label belongs to the finite derivation
support $Q_\delta$.  The threshold $T$ was chosen independently of the selected
chain, so arbitrariness of the chain proves the uniform ``every path'' statement
and hence path rigidity.
\end{proof}

\begin{remark}[path-weight conservation law]
The proof has no cancellation step.  Lower-basis dependencies may enlarge the
zone width but preserve its layer index; an actual descent labelled $q$ changes
that index by exactly $q$; and the two bottom barriers exclude the only regime in
which such an exact shift could leave the canonical scale.  Zone uniqueness then
turns equality of the final function value into equality of total path weight.
Thus canonical displacement is an additive conserved quantity of every eventual
selected occurrence path, even though the syntax may nest and mix labels
arbitrarily.  This is stronger than merely bounding the number or maximum size
of recursion steps.
\end{remark}

\begin{corollary}[derivation-local finite support]
\label{cor:finite-support}
Even when $L$ is infinite, every definable canonical descent
$\rho_{n,p}\in\Mprim{m}{n}{L}$ has a certificate
$p\in\gen{Q}$ for the finite label support $Q\subseteq L$ of one derivation.
\end{corollary}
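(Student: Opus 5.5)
The corollary should follow directly from Theorem~\ref{thm:mixed-path}, applied to one fixed derivation of $\rho_{n,p}$. Suppose $\rho_{n,p}\in\Mprim{m}{n}{L}$ with $p\ge1$. By definition of the closure there is a finite mixed derivation $\delta$ whose denotation is $\rho_{n,p}$. Only finitely many step-recursion nodes occur in $\delta$, so its label support $Q=Q_\delta\subseteq L$ is finite, whether or not $L$ is. The derivation itself is the certificate; the remaining task is to show $p\in\gen{Q}$.

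To apply the theorem, we check its displacement hypothesis with $d=p$. By \eqref{eq:canonical-shift}, $\rho_{n,p}(Y_t)=Y_{t-p}$ for every $t\ge p$, so $f(Y_t)=Y_{t-d}$ holds for all sufficiently large $t$. Theorem~\ref{thm:mixed-path} then supplies a threshold $T(\delta,p)$. For any $t\ge T$, Lemma~\ref{lem:mixed-path-lifting} guarantees that at least one mixed selected dependency chain $\chi$ exists in the evaluation $\delta(Y_t)$. The theorem says $\operatorname{wt}(\chi)=p$ with every label in $Q_\delta$. Hence $p$ is a finite sum of elements of $Q$, that is, $p\in\gen{Q}$. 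The case $p=0$ is trivial, since $0\in\gen{Q}$ by convention.

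The only point requiring care is that the theorem's conclusion is quantified over every chain. If no chain existed, that statement would hold vacuously and give nothing. The existence clause of Lemma~\ref{lem:mixed-path-lifting} closes this gap. We should also record that $Q$ depends on the chosen derivation, and that any derivation of $\rho_{n,p}$ serves as a certificate. Optionally, Proposition~\ref{prop:spectrum-basis} lets one shrink the certificate further to atoms of $\gen{Q}$, but the corollary does not need this.
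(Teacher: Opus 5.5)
Your proposal is correct and follows the paper's own proof: fix one finite derivation $\delta$ of $\rho_{n,p}$, apply Theorem~\ref{thm:mixed-path} with $d=p$ to the identity $\rho_{n,p}(Y_t)=Y_{t-p}$ for $t\ge p$, and read off $p\in\gen{Q_\delta}$. Your use of the existence clause of Lemma~\ref{lem:mixed-path-lifting}, to make sure the ``every chain'' conclusion is not vacuous, spells out a step that the theorem's ``in particular $d\in\gen{Q_\delta}$'' clause already covers.
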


\begin{proof}
Take a finite derivation of $\rho_{n,p}$ and apply
Theorem~\ref{thm:mixed-path} to the identity
$\rho_{n,p}(Y_t)=Y_{t-p}$ for $t\ge p$.
\end{proof}

\begin{theorem}[intrinsic stride spectrum of a primitive family]
\label{thm:primitive-spectrum}
Fix $n\ge2$ and $m<n$.  For every $L\subseteq\Npos$,
\[
 \boxed{\Str_n(\Mprim{m}{n}{L})=\gen{L}.}
\]
Equivalently, for every $p\ge1$,
\[
 \rho_{n,p}\in\Mprim{m}{n}{L}
 \quad\Longleftrightarrow\quad
 p\in\gen{L}.
\]
\end{theorem}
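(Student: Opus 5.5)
The proof splits into the two inclusions. Both the definition of $\Str_n$ and the statement put $0$ on each side, so only positive $p$ need attention.

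For $\gen L\subseteq\Str_n(\Mprim{m}{n}{L})$, I would first show that each primitive descent $\rho_{n,l}$ with $l\in L$ lies in the class. This is Lemma~\ref{lem:descent-internal}: $B_m$ contains zero and the projections, and $\Mprim{m}{n}{L}$ is closed under $\SR_{\rho_{n,l}}$. Concretely, one step recursion with base $0$ and transition $P^2_1$ applied to $(\rho(y),\text{state})$, bounded by the identity, returns $\rho_{n,l}$. Now let $p\in\gen L$ be positive and write $p=l_1+\cdots+l_k$ with $l_i\in L$. Lemma~\ref{lem:stride-arithmetic} gives $\rho_{n,p}=\rho_{n,l_1}\circ\cdots\circ\rho_{n,l_k}$, and this composite lies in the class by closure under composition.

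For the reverse inclusion, suppose $p\ge1$ and $\rho_{n,p}\in\Mprim{m}{n}{L}$. I would fix one finite mixed derivation $\delta$ of $\rho_{n,p}$. Every step-recursion node of $\delta$ carries a label from $L$, so $Q_\delta\subseteq L$. By \eqref{eq:canonical-shift}, $\rho_{n,p}(Y_t)=Y_{t-p}$ for all $t\ge p$. This is the hypothesis of Theorem~\ref{thm:mixed-path} with $d=p$, so the theorem gives $p\in\gen{Q_\delta}\subseteq\gen L$. This is exactly Corollary~\ref{cor:finite-support}, and I would cite it directly.

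I do not expect a real obstacle, since all the difficulty sits inside Theorem~\ref{thm:mixed-path}. The one point to check is that Theorem~\ref{thm:mixed-path} actually yields $p\in\gen{Q_\delta}$. It needs at least one selected dependency chain to exist, and Lemma~\ref{lem:mixed-path-lifting} supplies one. That chain starts at the input, by Theorem~\ref{thm:mixed-path}(1), and its weight is a sum of labels in $Q_\delta$. Since $p>0$, the chain must contain at least one descent. This also rules out the degenerate case $Q_\delta=\varnothing$: a composition term over $B_m$ alone would give weight $0\neq p$, which contradicts the path-weight identity.
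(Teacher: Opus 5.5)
Your proposal is correct and matches the paper's proof. The forward inclusion uses Lemma~\ref{lem:descent-internal}, Lemma~\ref{lem:stride-arithmetic} and closure under composition, and the reverse inclusion applies Theorem~\ref{thm:mixed-path} with $d=p$ via \eqref{eq:canonical-shift} to get $p\in\gen{Q_\delta}\subseteq\gen{L}$; your check that at least one selected chain exists (Lemma~\ref{lem:mixed-path-lifting}) is what keeps the ``in particular'' clause of Theorem~\ref{thm:mixed-path} from being vacuous, and it also correctly rules out $Q_\delta=\varnothing$.
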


\begin{proof}
Suppose first that $p\in\gen{L}$.  Then
$p=l_1+\cdots+l_r$ for finitely many $l_i\in L$.  By
Lemma~\ref{lem:descent-internal}, each $\rho_{n,l_i}$ belongs to
$\Mprim{m}{n}{L}$; closure under composition and the canonical stride law give
\[
 \rho_{n,p}
 =\rho_{n,l_1}\circ\cdots\circ\rho_{n,l_r}
 \in\Mprim{m}{n}{L}.
\]
Conversely, if $\rho_{n,p}\in\Mprim{m}{n}{L}$, then
\eqref{eq:canonical-shift} gives
$\rho_{n,p}(Y_t)=Y_{t-p}$ for all $t\ge p$.  Theorem~\ref{thm:mixed-path} yields the stronger derivation-local certificate
$p\in\gen{Q_\delta}\subseteq\gen{L}$.
\end{proof}

The reverse inclusion is derivation-local and path-rigid: every sufficiently high selected path through a fixed derivation has the same total stride weight, namely the canonical displacement.

For a principal primitive family $L=\{q\}$, Theorem~\ref{thm:primitive-spectrum}
recovers the fixed-stride condition $q\mid p$.  Definability of a descent does
not by itself imply closure under recursion along that descent, a distinction
used below.

\begin{theorem}[exact canonical-descent membership and recovery]
\label{thm:descent-membership}
Fix $n\ge2$ and $m<n$.  For every additive submonoid
$\Gamma\le(\Nat,+)$ and every $p\ge1$,
\[
 \boxed{\rho_{n,p}\in\Hmix{m}{n}{\Gamma}\Longleftrightarrow p\in\Gamma},
 \qquad
 \boxed{\Str_n(\Hmix{m}{n}{\Gamma})=\Gamma}.
\]
Thus the monoid index is an intrinsic invariant of the function class at the
fixed canonical row.
\end{theorem}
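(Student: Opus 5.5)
This is a direct specialization of Theorem~\ref{thm:primitive-spectrum}. By Definition~\ref{def:mixed}, $\Hmix{m}{n}{\Gamma}=\Mprim{m}{n}{\Gamma\setminus\{0\}}$, so I will put $L:=\Gamma\setminus\{0\}\subseteq\Npos$ and apply that theorem. It gives $\Str_n(\Hmix{m}{n}{\Gamma})=\gen{L}$. The only remaining point is the purely algebraic identity $\gen{\Gamma\setminus\{0\}}=\Gamma$.

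For that identity, I will check both inclusions. First, $\Gamma\setminus\{0\}\subseteq\Gamma$, and $\Gamma$ is a submonoid, so it contains $0$ and is closed under finite sums; hence $\gen{\Gamma\setminus\{0\}}\subseteq\Gamma$. Conversely, every nonzero element of $\Gamma$ is itself a generator. Also $0\in\gen{L}$ by the convention recorded after Definition~\ref{def:primitive-family}, so $\Gamma\subseteq\gen{L}$. The degenerate case $\Gamma=\{0\}$ fits the same scheme. There $L=\varnothing$, and both Definition~\ref{def:mixed} and Definition~\ref{def:primitive-family} interpret the class as $\Cl_\circ(B_m)$. Theorem~\ref{thm:primitive-spectrum} then gives spectrum $\gen{\varnothing}=\{0\}$, so no $\rho_{n,p}$ with $p\ge1$ lies in the pure composition closure.

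The membership equivalence follows by reading off Definition~\ref{def:stride-spectrum}. For $p\ge1$, $\rho_{n,p}\in\Hmix{m}{n}{\Gamma}$ iff $p\in\Str_n(\Hmix{m}{n}{\Gamma})=\Gamma$.

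I expect no genuine obstacle, since all the analytic content lives in Theorem~\ref{thm:mixed-path} via Theorem~\ref{thm:primitive-spectrum}. The only care needed is to confirm that Theorem~\ref{thm:primitive-spectrum} is stated for arbitrary, possibly infinite, $L$, which it is. This matters because $\Gamma\setminus\{0\}$ is typically infinite; the finite-support Corollary~\ref{cor:finite-support} is what makes that case harmless. As a remark, Proposition~\ref{prop:spectrum-basis} also shows that the same class spectrum is obtained from just the finite atom set $\Atoms(\Gamma)$. That shows $\Gamma$ is recovered intrinsically, independent of the presentation, though it is not needed for the proof.
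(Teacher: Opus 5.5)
Your proposal is correct and is essentially the paper's own proof: specialize Theorem~\ref{thm:primitive-spectrum} to $L=\Gamma\setminus\{0\}$, note that $\gen{L}=\Gamma$ (the trivial case $\Gamma=\{0\}$ included), and read the membership equivalence off Definition~\ref{def:stride-spectrum}. A minor attribution point: infinite $L$ is harmless because of the derivation-local Theorem~\ref{thm:mixed-path}, which Theorem~\ref{thm:primitive-spectrum} already invokes and of which Corollary~\ref{cor:finite-support} is only a restatement, so your proof needs nothing beyond Theorem~\ref{thm:primitive-spectrum}.
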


\begin{proof}
Apply Theorem~\ref{thm:primitive-spectrum} to
$L=\Gamma\setminus\{0\}$; then $\gen{L}=\Gamma$ (with the trivial
$\Gamma=\{0\}$ case included).  The spectrum identity follows from the
definition.
\end{proof}

\begin{definition}[operator stride spectrum]
\label{def:operator-spectrum}
Let $C$ contain zero and all projections.  Define
\[
 \OpStr_n(C)=\{0\}\cup\{p\ge1:
 C\text{ is closed under }\SR_{\rho_{n,p}}
 \text{ on valid data from }C\}.
\]
Thus $\Str_n(C)$ is extensional---it asks whether the descent itself is a
function of $C$---whereas $\OpStr_n(C)$ is operational and asks whether that
descent is available as a closure scheme throughout $C$.
\end{definition}

\begin{proposition}[extensional--operational spectrum comparison]
\label{prop:operator-spectrum}
For every class $C$ as above,
\[
 \OpStr_n(C)\subseteq\Str_n(C).
\]
For the primitive-family and monoid-saturated classes below the basis threshold,
\[
 \{0\}\cup L\subseteq\OpStr_n(\Mprim{m}{n}{L})
 \subseteq\Str_n(\Mprim{m}{n}{L})=\gen{L},
\]
and, for every additive submonoid $\Gamma\le(\Nat,+)$,
\[
 \boxed{
 \OpStr_n(\Hmix{m}{n}{\Gamma})
 =\Str_n(\Hmix{m}{n}{\Gamma})
 =\Gamma.}
\]
Hence the index $\Gamma$ of a saturated mixed class is intrinsic both as a set
of definable canonical descents and as the exact set of canonical recursion
operators under which the class is closed.
\end{proposition}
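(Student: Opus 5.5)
The first inclusion $\OpStr_n(C)\subseteq\Str_n(C)$ is a direct application of Lemma~\ref{lem:descent-internal}. Suppose $p\ge1$ lies in $\OpStr_n(C)$, so $C$ is closed under $\SR_{\rho_{n,p}}$ on valid data from $C$. The recursion $f(0)=0$, $f(y)=\rho_{n,p}(y)$ uses only data from $C$: the base is zero, the transition is the projection onto the address argument, and the bound is the identity projection. Its output is $\rho_{n,p}$ itself, so $\rho_{n,p}\in C$ and hence $p\in\Str_n(C)$. The case $p=0$ is trivial since both sets contain $0$ by definition.

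For the primitive family, I would argue as follows. By Definition~\ref{def:primitive-family}, $\Mprim{m}{n}{L}$ is the closure under composition and under $\SR_{\rho_{n,l}}$ for every $l\in L$. Therefore, for each $l\in L$, the class is closed under that scheme on all valid data, which gives $\{0\}\cup L\subseteq\OpStr_n$. The middle inclusion is the general fact just proved. The equality $\Str_n(\Mprim{m}{n}{L})=\gen L$ is Theorem~\ref{thm:primitive-spectrum}.

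For the saturated class $\Hmix{m}{n}{\Gamma}=\Mprim{m}{n}{\Gamma\setminus\{0\}}$, I would specialise the chain above to $L=\Gamma\setminus\{0\}$. This gives
\[
 \Gamma=\{0\}\cup L\subseteq\OpStr_n(\Hmix{m}{n}{\Gamma})\subseteq\Str_n(\Hmix{m}{n}{\Gamma})=\gen L=\Gamma,
\]
where the final equality is Theorem~\ref{thm:descent-membership}. The two ends of the chain coincide, so all the inclusions are equalities. The case $\Gamma=\{0\}$ is covered automatically: $L$ is empty, and both spectra reduce to $\{0\}$ by Theorem~\ref{thm:descent-membership}.

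I expect the main obstacle to be making sure that the phrase ``closed under $\SR_{\rho_{n,p}}$ on valid data from $C$'' is read correctly in the first inclusion. The witnessing recursion must use a base, transition and bound that genuinely lie in $C$, and the bound condition must hold everywhere. Both requirements hold because $\rho_{n,p}(y)\le y$ and $C$ contains zero and the projections. Beyond this point, the argument is bookkeeping on top of the earlier spectrum theorems.
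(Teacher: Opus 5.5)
Your proposal is correct and follows the paper's proof. The base-zero, current-address recursion of Lemma~\ref{lem:descent-internal} gives $\OpStr_n(C)\subseteq\Str_n(C)$; the labels in $L$ (respectively $\Gamma\setminus\{0\}$) are available operators by definition of the closure; and Theorems~\ref{thm:primitive-spectrum} and~\ref{thm:descent-membership} close the chain, so both spectra equal $\Gamma$. Your check that $\rho_{n,p}(y)\le y$, which makes the identity an admissible bound, is a detail the paper leaves implicit in that lemma.
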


\begin{proof}
If $p\in\OpStr_n(C)$, apply closure under $\SR_{\rho_{n,p}}$ to the base-zero,
current-address construction of Lemma~\ref{lem:descent-internal}; this yields
$\rho_{n,p}\in C$, proving $\OpStr_n(C)\subseteq\Str_n(C)$.  Every label in
$L$ is an available primitive recursion operator by definition of
$\Mprim{m}{n}{L}$, while Theorem~\ref{thm:primitive-spectrum} gives its descent
spectrum $\gen{L}$.  For $\Hmix{m}{n}{\Gamma}$, every $p\in\Gamma\setminus\{0\}$
is an available operator by definition, whereas
Theorem~\ref{thm:descent-membership} gives the reverse bound through the descent spectrum.  Thus both spectra equal $\Gamma$.
\end{proof}

Thus $\OpStr_n$ and $\Str_n$ separate operator availability from descent
definability; on saturated mixed classes both recover exactly $\Gamma$.

\section{Canonical descent monoids and mixed path factorization}
\label{sec:factorization}

Exact spectrum recovery determines which canonical descents occur.  The
weighted-path rigidity theorem yields more: it identifies the algebraic
factorizations of each displacement with the label-multisets that can occur on
selected dependency paths.

\begin{proposition}[canonical descent monoid]
\label{prop:descent-monoid}
For fixed $n\ge2$, $m<n$, and an additive submonoid
$\Gamma\le(\Nat,+)$, let
\[
 \mathcal D_n(\Gamma)=
 \{\rho_{n,p}:p\in\Gamma\}
 \subseteq \Hmix{m}{n}{\Gamma}.
\]
Under composition, $\mathcal D_n(\Gamma)$ is a commutative submonoid of the
unary-function monoid, and
\[
 \boxed{\Gamma\longrightarrow\mathcal D_n(\Gamma),\qquad
 p\longmapsto\rho_{n,p}}
\]
is a monoid isomorphism.  If $A=\{a_1,\ldots,a_k\}$ generates $\Gamma$, then
for all $\mathbf z,\mathbf w\in\Nat^k$,
\[
 \rho_{n,a_1}^{[z_1]}\circ\cdots\circ\rho_{n,a_k}^{[z_k]}
 =
 \rho_{n,a_1}^{[w_1]}\circ\cdots\circ\rho_{n,a_k}^{[w_k]}
 \quad\Longleftrightarrow\quad
 \sum_i a_i z_i=\sum_i a_i w_i.
\]
Moreover, the composition-indecomposable nonidentity elements of
$\mathcal D_n(\Gamma)$ are exactly
$\{\rho_{n,a}:a\in\Atoms(\Gamma)\}$; hence they form its unique minimal
composition-generating set.  Thus both the relations and the intrinsic
primitive generators of the canonical descent monoid are recovered inside the
function algebra.
\end{proposition}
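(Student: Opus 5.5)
The plan is to reduce everything to the canonical stride law of Lemma~\ref{lem:stride-arithmetic}, to the injectivity of $p\mapsto\rho_{n,p}$, and to the atom description of Proposition~\ref{prop:spectrum-basis}.

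First, membership. The inclusion $\mathcal D_n(\Gamma)\subseteq\Hmix{m}{n}{\Gamma}$ follows directly from Theorem~\ref{thm:descent-membership}. For $p=0$ we use the convention $\rho_{n,0}=\operatorname{id}$, which is the projection $P_1^1\in B_m$.

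Next, the monoid structure. Lemma~\ref{lem:stride-arithmetic} gives $\rho_{n,p}\circ\rho_{n,q}=\rho_{n,p+q}=\rho_{n,q}\circ\rho_{n,p}$ for $p,q\ge1$, and the cases involving $0$ are trivial. Hence $p\mapsto\rho_{n,p}$ is a monoid homomorphism from $(\Gamma,+)$ onto $\mathcal D_n(\Gamma)$. It follows that $\mathcal D_n(\Gamma)$ is closed under composition, contains the identity, and is commutative.

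For injectivity, I would evaluate at canonical points. By \eqref{eq:canonical-shift}, for $t\ge\max(p,q)$ we have $\rho_{n,p}(Y_t)=Y_{t-p}$ and $\rho_{n,q}(Y_t)=Y_{t-q}$. The sequence $(Y_t)$ is strictly increasing because $G(x)>x$. So $\rho_{n,p}=\rho_{n,q}$ forces $p=q$. The case $0$ versus $p\ge1$ is handled the same way, since $\rho_{n,p}(Y_p)=0\ne Y_p$. This gives the isomorphism.

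The relation statement is then immediate. Using commutativity and the stride law, the left-hand composite equals $\rho_{n,\sum a_iz_i}$ and the right-hand composite equals $\rho_{n,\sum a_iw_i}$. Injectivity turns equality of the two composites into $\sum_i a_iz_i=\sum_i a_iw_i$.

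Finally, the indecomposables. Suppose $\rho_{n,p}=f\circ g$ with $f,g\in\mathcal D_n(\Gamma)$ both nonidentity. Writing $f=\rho_{n,b}$ and $g=\rho_{n,c}$, the isomorphism gives $p=b+c$ with $b,c\in\Gamma\setminus\{0\}$. Conversely, any such decomposition of $p$ in $\Gamma$ yields a composition factorization of $\rho_{n,p}$. So the indecomposable nonidentity elements correspond exactly to the atoms of $\Gamma$. Uniqueness of the minimal generating set transfers from Proposition~\ref{prop:spectrum-basis} through the isomorphism.

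The only point needing any care is injectivity; the canonical-layer evaluation above settles it. The rest is bookkeeping through the isomorphism. If $\Gamma=\{0\}$, there are no atoms and the statement is trivial.
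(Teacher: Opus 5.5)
Your proposal is correct and follows the paper's proof essentially step for step. Stride arithmetic gives the homomorphism, evaluation at the canonical points $Y_t$ gives injectivity, both composites reduce to their total weight for the relation criterion, and atoms transfer through the isomorphism. You also make explicit the inclusion $\mathcal D_n(\Gamma)\subseteq\Hmix{m}{n}{\Gamma}$ (via Theorem~\ref{thm:descent-membership} and $P_1^1$) and the trivial case $\Gamma=\{0\}$, both of which the paper leaves implicit.
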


\begin{proof}
Lemma~\ref{lem:stride-arithmetic}, together with $\rho_{n,0}=\operatorname{id}$,
gives $\rho_{n,p}\circ\rho_{n,q}=\rho_{n,p+q}$.  Surjectivity is by
definition.  If $p\ne q$, then for every sufficiently large $t$,
$\rho_{n,p}(Y_t)=Y_{t-p}\ne Y_{t-q}=\rho_{n,q}(Y_t)$, so the map is injective.
The relation criterion follows after reducing both sides by canonical stride
arithmetic to their total weights.  Finally,
$\rho_{n,p}=\rho_{n,u}\circ\rho_{n,v}$ with $u,v>0$ holds exactly when
$p=u+v$ in $\Gamma$.  Thus $\rho_{n,p}$ is composition-indecomposable
precisely when $p$ is an atom of $\Gamma$; uniqueness of the minimal atom
basis gives the final claim.
\end{proof}

The path theorem is naturally presentation-level.  For any (possibly infinite)
primitive label set $L\subseteq\Npos$, let $\Nat^{(L)}$ denote the finitely
supported maps $\mathbf z:L\to\Nat$, and define
\[
 \omega_L(\mathbf z)=\sum_{q\in L}q\,\mathbf z(q),
 \qquad
 Z_L(p)=\{\mathbf z\in\Nat^{(L)}:\omega_L(\mathbf z)=p\}.
\]
A vector $\mathbf z$ records the multiplicity of every primitive label on a
path; finite support is automatic for every finite derivation.

Equivalently, for finitely supported label vectors
$\mathbf z\in\Nat^{(L)}$ and
$\omega_L(\mathbf z)=\sum_{q\in L}q\mathbf z(q)$, the map
$\mathbf z\mapsto\rho_{n,\omega_L(\mathbf z)}$ induces
\[
 \Nat^{(L)}/\!\equiv_L\ \cong\ \mathcal D_n(\gen{L}),
 \qquad
 \mathbf z\equiv_L\mathbf w\Longleftrightarrow
 \omega_L(\mathbf z)=\omega_L(\mathbf w).
\]
Thus the only functional relations among primitive canonical descents are their
additive weight relations.

\begin{definition}[primitive-label path-factorization fibre]
\label{def:path-factorization}
For $p\in\gen{L}$, let $\PF^{m,n}_L(p)$ be the set of
$\mathbf z\in\Nat^{(L)}$ for which there exists a finite derivation of
$\rho_{n,p}$ in $\Mprim{m}{n}{L}$ and, for arbitrarily large canonical inputs
$Y_t$, a mixed selected dependency chain whose descent-label word has
multiplicity vector $\mathbf z$.
\end{definition}

The use of arbitrarily large $t$ removes finite initial accidents and makes the
definition insensitive to the threshold in Theorem~\ref{thm:mixed-path}.
The substantive direction below is the exclusion statement
$\PF^{m,n}_L(p)\subseteq Z_L(p)$: arbitrary nesting cannot create a selected
path signature of the wrong total weight.  The reverse inclusion is the direct
composition realization of an additive factorization.

\begin{theorem}[mixed path-factorization theorem]
\label{thm:path-factorization-general}
Fix $n\ge2$ and $m<n$.  For every $L\subseteq\Npos$ and every
$p\in\gen{L}$,
\[
 \boxed{\PF^{m,n}_L(p)=Z_L(p).}
\]
Thus the weighted-path theorem recovers not only the total mixed spectrum
$\gen{L}$ but the complete finite factorization fibre of every canonical
displacement relative to the primitive labels actually supplied.
\end{theorem}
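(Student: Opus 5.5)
The plan is to prove the two inclusions separately. The exclusion direction $\PF^{m,n}_L(p)\subseteq Z_L(p)$ follows from Theorem~\ref{thm:mixed-path}; the realization direction $Z_L(p)\subseteq\PF^{m,n}_L(p)$ needs an explicit derivation.

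For the exclusion, take $\mathbf z\in\PF^{m,n}_L(p)$. Fix a finite derivation $\delta$ of $\rho_{n,p}$ in $\Mprim{m}{n}{L}$ witnessing membership, so that for arbitrarily large $t$ some selected chain of $\delta(Y_t)$ has label multiplicity vector $\mathbf z$. By \eqref{eq:canonical-shift}, $\rho_{n,p}(Y_t)=Y_{t-p}$ for $t\ge p$, so Theorem~\ref{thm:mixed-path} applies with $d=p$ and gives a threshold $T(\delta,p)$. Choose a witnessing $t\ge T$. Every selected chain of $\delta(Y_t)$, including the given one, then has total label weight $p$, and all its labels lie in $Q_\delta\subseteq L$. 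Reading the weight through the multiplicity vector gives $\omega_L(\mathbf z)=\sum_q q\,\mathbf z(q)=p$, so $\mathbf z\in Z_L(p)$. The point to emphasize is that the definition of $\PF$ fixes one derivation and then asks for arbitrarily large $t$. This is exactly what is needed to get past the derivation-dependent threshold of Theorem~\ref{thm:mixed-path}.

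For the realization, take $\mathbf z\in Z_L(p)$. Its support is finite, say labels $l_1,\dots,l_k$ with multiplicities $z_1,\dots,z_k$. Build the explicit derivation
\[
 \delta_{\mathbf z}=\rho_{n,l_1}^{[z_1]}\circ\cdots\circ\rho_{n,l_k}^{[z_k]},
\]
where each factor $\rho_{n,l_i}$ is given by the one-node step recursion of Lemma~\ref{lem:descent-internal} along $\rho_{n,l_i}$, with base $0$, transition equal to the current address, and the identity as bound. Lemma~\ref{lem:stride-arithmetic} shows that $\delta_{\mathbf z}$ computes $\rho_{n,p}$, since $\omega_L(\mathbf z)=p$.

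Next, exhibit a selected chain at every large $Y_t$ with multiplicity vector exactly $\mathbf z$. At each recursion node with positive address $y$, the transition is a projection onto the current address. The selection rule therefore takes the address origin and prepends the $\rho_{n,l_i}$ schedule edge from $y$ to $\rho_{n,l_i}(y)$: exactly one descent labelled $l_i$, followed by an equality step. Composition nodes link head input to argument output. Starting from the formal input $Y_t$, the chain passes through $p$ canonical layers $Y_t\mapsto\cdots\mapsto Y_{t-p}$. For $t>p$ every intermediate address is positive, so no node falls into its base case (which would give a numeral origin $0$). The chain therefore records one descent per factor, and its signature is $\mathbf z$. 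Since this holds for all $t>p$, we get $\mathbf z\in\PF^{m,n}_L(p)$.

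The main obstacle is the realization step, specifically checking that the selection rules of Definition~\ref{def:mixed-selected-chain} are forced along the intended path. The transitions are projections onto the address, so their only origin is the address and no previous-state branch is available. Positivity of each intermediate address $Y_{t-j}$ for $t>p$ rules out the base case. With these two checks, the chain's label word is exactly the factor word and contains no spurious labels.
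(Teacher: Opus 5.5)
Your proposal is correct and follows the paper's proof. Exclusion applies Theorem~\ref{thm:mixed-path} to one fixed witnessing derivation at a canonical input beyond its threshold, and realization composes $\mathbf z(q)$ copies of the one-node derivations from Lemma~\ref{lem:descent-internal}. You check the forced selected chain (address-only transitions, positive intermediate addresses) in more detail than the paper, and your argument also covers $p=0$ through $d=0$ and the empty composite (the identity projection), a case the paper treats separately.
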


\begin{proof}
The case $p=0$ is immediate from the empty factorization and the identity
derivation.  Assume $p>0$.  If $\mathbf z\in\PF^{m,n}_L(p)$, choose a witnessing
derivation and a canonical input beyond its rigidity threshold.
Theorem~\ref{thm:mixed-path} says that every selected path computing
$\rho_{n,p}(Y_t)=Y_{t-p}$ has total stride weight $p$.  Therefore
$\omega_L(\mathbf z)=p$, so $\mathbf z\in Z_L(p)$.

Conversely, let $\mathbf z\in Z_L(p)$.  Its support is finite.  By
Lemma~\ref{lem:descent-internal}, each $\rho_{n,q}$ with
$q\in\operatorname{supp}(\mathbf z)$ has a one-operator derivation in
$\Mprim{m}{n}{L}$.  Compose $\mathbf z(q)$ copies for every such $q$.
Canonical stride arithmetic gives the resulting function as $\rho_{n,p}$, and
on every sufficiently high $Y_t$ its obvious selected chain has multiplicity
vector $\mathbf z$.  Hence $\mathbf z\in\PF^{m,n}_L(p)$.
\end{proof}

The theorem becomes intrinsic after passing to the unique atom set
$A=\Atoms(\Gamma)=\{a_1<\cdots<a_k\}$ of a nonzero recovered monoid.  Put
\[
 Z_\Gamma(p)=\{\mathbf z\in\Nat^k:a_1z_1+\cdots+a_kz_k=p\},
 \qquad
 \Len_\Gamma(p)=\{|\mathbf z|:\mathbf z\in Z_\Gamma(p)\}.
\]

\begin{corollary}[intrinsic atomic factorization consequences]
\label{cor:atomic-consequences}
Let $A=\{a_1<\cdots<a_k\}=\Atoms(\Gamma)$.  For every $p\in\Gamma$,
\[
 \PF^{m,n}_{A}(p)=Z_\Gamma(p),\qquad
 \{|\mathbf z|:\mathbf z\in\PF^{m,n}_{A}(p)\}=\Len_\Gamma(p).
\]
Thus minimum and maximum factorization lengths are the corresponding atomic
selected-path descent counts.  Moreover $\Gamma$ is principal iff every
canonical descent has a unique atomic path Parikh vector, and the global path
elasticity is $a_k/a_1$.
\end{corollary}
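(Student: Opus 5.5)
The plan is to specialize Theorem~\ref{thm:path-factorization-general} to the primitive label set $L=A$. By Proposition~\ref{prop:spectrum-basis}, $A$ is finite and $\gen A=\Gamma$, so every $p\in\Gamma$ lies in $\gen A$ and the theorem applies. It gives $\PF^{m,n}_A(p)=Z_A(p)$. With $A$ listed as $a_1<\cdots<a_k$, the finitely supported maps $A\to\Nat$ are identified with vectors in $\Nat^k$, and $\omega_A(\mathbf z)=\sum_i a_iz_i$. Under this identification $Z_A(p)$ is exactly $Z_\Gamma(p)$. Taking $|\mathbf z|=\sum_i z_i$ on both sides then gives the length identity. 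Here $|\mathbf z|$ is the number of descent steps on a selected path with Parikh vector $\mathbf z$, so the minimum and maximum of $\Len_\Gamma(p)$ are the minimum and maximum atomic descent counts along selected paths.

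For the principality claim I will argue both directions through the equality $\PF=Z_\Gamma$.
\begin{itemize}
\item If $\Gamma=d\Nat$, then $A=\{d\}$ and $Z_\Gamma(p)=\{p/d\}$, so each fibre is a singleton.
\item Conversely, suppose $k\ge2$, so there are atoms $a_1<a_2$. Then $p=a_1a_2\in\Gamma$ has the two distinct factorizations $a_2\mathbf e_1$ and $a_1\mathbf e_2$, and both are realized as path Parikh vectors of $\rho_{n,p}$.
\end{itemize}
The only care needed is the vacuous case $\Gamma=\{0\}$, which I will set aside, since the statement presupposes a nonzero $\Gamma$ with atoms.

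For elasticity I define the global path elasticity as $\sup_p \max\Len_\Gamma(p)/\min\Len_\Gamma(p)$ over nonzero $p$. By the length identity this is the classical elasticity of $\Gamma$.
\begin{itemize}
\item \emph{Upper bound.} Any $\mathbf z\in Z_\Gamma(p)$ satisfies $a_1|\mathbf z|\le p\le a_k|\mathbf z|$, so $p/a_k\le|\mathbf z|\le p/a_1$, and every ratio is at most $a_k/a_1$.
\item \emph{Attainment.} At $p=a_1a_k$, the factorizations $a_k\mathbf e_1$ and $a_1\mathbf e_k$ have lengths $a_k$ and $a_1$, giving the ratio $a_k/a_1$ exactly.
\end{itemize}
The principal case gives ratio $1=a_1/a_1$, which is consistent. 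I expect no real obstacle here. The main point needing care is matching the definition of $\PF$ with a fixed derivation for arbitrarily large $t$, and Theorem~\ref{thm:path-factorization-general} already handles that.
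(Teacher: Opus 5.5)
Your proposal is correct and follows essentially the same route as the paper. You specialize Theorem~\ref{thm:path-factorization-general} to $L=A$, identify $Z_A(p)$ with $Z_\Gamma(p)$, use a two-atom witness for non-uniqueness, and use $a_1|\mathbf z|\le p\le a_k|\mathbf z|$ together with attainment for the elasticity; the only difference is that your witnesses are the products $a_1a_2$ and $a_1a_k$, where the paper uses $\operatorname{lcm}(a,b)$ and $\operatorname{lcm}(a_1,a_k)$, and this is immaterial.
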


\begin{proof}
The fibre identities are Theorem~\ref{thm:path-factorization-general}.  One
atom gives a principal uniquely factorizing monoid; two distinct atoms
$a<b$ give two factorizations of $\operatorname{lcm}(a,b)$.  Finally
$a_1|\mathbf z|\le p\le a_k|\mathbf z|$, with equality in the global ratio at
$\operatorname{lcm}(a_1,a_k)$.
\end{proof}

The fibre theorem also turns the standard finite presentations of numerical
semigroups into a finite rewrite calculus for mixed Step Recursion paths.  Let
$A=\{a_1,\ldots,a_k\}=\Atoms(\Gamma)$ and write
\[
 \pi_A:\Nat^k\to\Gamma,
 \qquad
 \pi_A(\mathbf z)=a_1z_1+\cdots+a_kz_k.
\]
A finite set $R\subseteq\ker(\pi_A)\subseteq\Nat^k\times\Nat^k$ is a
presentation when the congruence it generates is the whole kernel congruence.
Such finite presentations exist; after dividing by the gcd this is the usual
presentation theory of numerical semigroups
\cite[Chapter~7]{RosalesGarciaSanchez2009}.

\begin{theorem}[finite path-rewrite basis]
\label{thm:path-rewrite-basis}
Fix $n\ge2$, $m<n$, and a nonzero spectrum $\Gamma$ with atom set $A$.
There exists a finite relation set
$R_\Gamma\subseteq\Nat^k\times\Nat^k$ such that for every $p\in\Gamma$ and
every two atomic selected-path signatures
$\mathbf z,\mathbf w\in\PF_A^{m,n}(p)$ there is a finite chain
\[
 \mathbf z=\mathbf z_0,\mathbf z_1,\ldots,\mathbf z_s=\mathbf w
\]
inside $\PF_A^{m,n}(p)$ in which each step replaces a translated copy of one
relation from $R_\Gamma$: for some
$(\mathbf u,\mathbf v)\in R_\Gamma$ and $\mathbf c\in\Nat^k$,
\[
 \{\mathbf z_j,\mathbf z_{j+1}\}
 =\{\mathbf c+\mathbf u,\mathbf c+\mathbf v\}.
\]
The set $R_\Gamma$ may be chosen minimal, with every primitive relation
supported over a Betti element of the normalized numerical semigroup.
Consequently all atomic path-signature ambiguity in the entire mixed class is
generated by finitely many canonical ``Betti descents.''  This is a rewrite basis
for realizable multiplicity signatures across derivations; it does not assert an
in-place rewrite of one fixed concrete evaluation tree.
\end{theorem}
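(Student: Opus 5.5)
The plan is to reduce the statement to the classical presentation theory of the kernel congruence of $\pi_A$, using Corollary~\ref{cor:atomic-consequences} to identify path fibres with factorization fibres. Corollary~\ref{cor:atomic-consequences} (itself Theorem~\ref{thm:path-factorization-general} applied with $L=A$) gives $\PF^{m,n}_A(p)=Z_\Gamma(p)=\pi_A^{-1}(p)$ for every $p\in\Gamma$. So the statement to prove is purely about $\Nat^k$: there is a finite $R_\Gamma\subseteq\ker(\pi_A)$ such that any two $\mathbf z,\mathbf w$ in the same fibre are joined by a chain of translated $R_\Gamma$-moves staying inside that fibre. The computational content is already in the path theorem. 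The remaining work is to transport the standard facts correctly, including the gcd normalization.

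First I will normalize. Let $g=\gcd(A)$ and $A'=g^{-1}A$. Then $S=g^{-1}\Gamma$ is a numerical semigroup with minimal generating set $A'$, since atoms scale to atoms. Moreover $\pi_A=g\,\pi_{A'}$, so $\ker(\pi_A)=\ker(\pi_{A'})$ as subsets of $\Nat^k\times\Nat^k$. I then invoke the finite presentation theorem \cite[Chapter~7]{RosalesGarciaSanchez2009}, which provides a finite, and in fact minimal, $R\subseteq\ker(\pi_{A'})$ generating $\ker(\pi_{A'})$ as a congruence. Its relations $(\mathbf u,\mathbf v)$ satisfy $\pi_{A'}(\mathbf u)=\pi_{A'}(\mathbf v)$ equal to a Betti element of $S$. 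I take $R_\Gamma=R$ (symmetrized if convenient).

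Next comes the chain extraction, which is the one genuinely formal step. I will use the standard description of the congruence generated by $R$ on the commutative monoid $\Nat^k$. It is the reflexive–symmetric–transitive closure of the one-step relation $\{(\mathbf c+\mathbf u,\mathbf c+\mathbf v):(\mathbf u,\mathbf v)\in R,\ \mathbf c\in\Nat^k\}$. This closure is already compatible with addition, so it equals the generated congruence, namely $\ker(\pi_{A'})$. Hence any $\mathbf z,\mathbf w$ with $\pi_A(\mathbf z)=\pi_A(\mathbf w)=p$ are joined by a finite chain of such one-step moves. Every move preserves $\pi_A$, because $\pi_A(\mathbf c+\mathbf u)=\pi_A(\mathbf c)+\pi_A(\mathbf u)=\pi_A(\mathbf c+\mathbf v)$. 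So all intermediate $\mathbf z_j$ lie in $Z_\Gamma(p)=\PF^{m,n}_A(p)$, which gives the displayed chain.

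For the Betti and minimality clause I will quote the cited structure theory: minimal presentations have cardinality and degrees determined by the Betti elements of $S$, and each relation lives over one of them. The main obstacle I anticipate is bookkeeping rather than mathematics. It consists of justifying that the generated congruence coincides with translation-chains, and that the gcd rescaling does not alter the kernel. Both are routine, and I will state them explicitly so that the ``Betti descents'' $\rho_{n,gb}$, for Betti elements $b$ of $S$, are identified correctly in $\Gamma$.
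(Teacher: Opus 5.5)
Your proposal is correct and follows the paper's proof essentially step for step: you identify $\PF^{m,n}_A(p)$ with $\pi_A^{-1}(p)$ via the path-factorization theorem, take a finite (minimal, Betti-supported) presentation of $\ker(\pi_A)=\ker(\pi_{A'})$ from numerical-semigroup theory, and observe that translated relation moves preserve $\pi_A$, so every intermediate signature stays in the realized fibre. Your explicit check that the equivalence closure of the translated relations is already the generated congruence spells out what the paper simply says holds ``by definition.''
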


\begin{proof}
Choose a finite presentation $R_\Gamma$ of the factorization congruence
$\ker(\pi_A)$.  By Theorem~\ref{thm:path-factorization-general},
$\PF_A^{m,n}(p)=\pi_A^{-1}(p)$.  Hence any two selected-path signatures in the
same fibre are congruent modulo $R_\Gamma$, which by definition means they are
connected by a finite sequence of translated elementary relations as displayed
above.  Every elementary relation preserves $\pi_A$, so every intermediate
vector remains in the same factorization fibre and therefore, again by
Theorem~\ref{thm:path-factorization-general}, is itself realized by a selected
dependency path computing $\rho_{n,p}$.  Minimal presentations of numerical
semigroups can be chosen from relations joining components of the factorization
graphs of their Betti elements; scaling by the gcd does not change the
factorization congruence.  This gives the final statement.
\end{proof}

For $\Gamma=\gen{2,3}$, the unique minimal relation is
\[
 (3,0)\equiv(0,2),
\]
supported at the Betti descent $6$.  Thus every atomic path ambiguity in this
mixed class is generated by the local replacement
$2+2+2\leftrightarrow3+3$ inside a larger path.  In particular,
$\rho_{n,6}$ has atomic selected paths of lengths $3$ and $2$.

\section{Exact classification and transported one-dimensional geometry}

The classification theorem below is computational: it identifies class
inclusion exactly with inclusion of recovered stride monoids.  Once that
identification is established, several one-dimensional lattice statements are
standard consequences of finite generation and the classical theory of
numerical-semigroup oversemigroups
\cite{RosalesGarciaSanchez2009,RosalesOversemigroups2003}.  We include them to
state precisely what they mean for Step Recursion, not as new numerical-semigroup
theorems.

\begin{theorem}[classification and finite certificates]
\label{thm:classification}
Fix $n\ge2$ and $m<n$.  For additive submonoids
$\Gamma,\Lambda\le\Nat$ and $A=\Atoms(\Gamma)$,
\[
 \boxed{
 \Hmix{m}{n}{\Gamma}\subseteq\Hmix{m}{n}{\Lambda}
 \Longleftrightarrow \Gamma\subseteq\Lambda
 \Longleftrightarrow A\subseteq\Lambda.}
\]
Thus $\Gamma\mapsto\Hmix{m}{n}{\Gamma}$ is an order isomorphism, hence a
complete-lattice isomorphism, from $\Sub(\Nat)$ onto the saturated mixed
sector.  In particular,
\[
 \Hmix{m}{n}{\Gamma}\wedge_{\rm mix}\Hmix{m}{n}{\Lambda}
 =\Hmix{m}{n}{\Gamma\cap\Lambda},\qquad
 \Hmix{m}{n}{\Gamma}\vee_{\rm mix}\Hmix{m}{n}{\Lambda}
 =\Hmix{m}{n}{\gen{\Gamma\cup\Lambda}},
\]
and
\[
 \Hmix{m}{n}{p\Nat}\subseteq\Hmix{m}{n}{q\Nat}
 \Longleftrightarrow q\mid p.
\]
If inclusion fails, one atom $a\in A\setminus\Lambda$ gives the canonical
witness
$\rho_{n,a}\in\Hmix{m}{n}{\Gamma}\setminus\Hmix{m}{n}{\Lambda}$.
\end{theorem}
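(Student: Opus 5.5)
The plan is to prove the cycle of implications in the boxed statement and then read off the lattice consequences. First, $\Gamma\subseteq\Lambda\Rightarrow A\subseteq\Lambda$ is trivial since $A\subseteq\Gamma$. Next, $A\subseteq\Lambda\Rightarrow\Hmix{m}{n}{\Gamma}\subseteq\Hmix{m}{n}{\Lambda}$. Here $\Hmix{m}{n}{\Lambda}$ contains $B_m$ and is closed under composition, so it suffices to show it is closed under $\SR_{\rho_{n,l}}$ for every $l\in\Gamma\setminus\{0\}$. Then an induction over derivations in $\Hmix{m}{n}{\Gamma}$ puts every function into $\Hmix{m}{n}{\Lambda}$. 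Since $\Lambda$ is a monoid containing $A$ and $\gen A=\Gamma$ by Proposition~\ref{prop:spectrum-basis}, we have $\Gamma\subseteq\Lambda$. Hence every label $l\in\Gamma\setminus\{0\}$ already lies in $\Lambda\setminus\{0\}$, so $\SR_{\rho_{n,l}}$ is one of the defining operators of $\Hmix{m}{n}{\Lambda}$. I would not simulate a stride-$l$ recursion by atom strides; the point is that $\Lambda$ is saturated, so no simulation is needed.

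For $\Hmix{m}{n}{\Gamma}\subseteq\Hmix{m}{n}{\Lambda}\Rightarrow\Gamma\subseteq\Lambda$, I would apply Theorem~\ref{thm:descent-membership} twice. For $p\in\Gamma\setminus\{0\}$ we have $\rho_{n,p}\in\Hmix{m}{n}{\Gamma}\subseteq\Hmix{m}{n}{\Lambda}$, hence $p\in\Str_n(\Hmix{m}{n}{\Lambda})=\Lambda$. The same argument gives the witness statement. If $a\in A\setminus\Lambda$, then $\rho_{n,a}\in\Hmix{m}{n}{\Gamma}$ by Lemma~\ref{lem:descent-internal}, while $\rho_{n,a}\notin\Hmix{m}{n}{\Lambda}$ by Theorem~\ref{thm:descent-membership}. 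The real difficulty of the whole statement, namely that nesting cannot manufacture new strides, is thus entirely absorbed by Theorem~\ref{thm:descent-membership} and ultimately by Theorem~\ref{thm:mixed-path}. The only point to watch is that $\Gamma\mapsto\Hmix{m}{n}{\Gamma}$ is injective: this follows from the spectrum identity, since $\Str_n$ recovers $\Gamma$.

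For the lattice part, the equivalence shows that $\Gamma\mapsto\Hmix{m}{n}{\Gamma}$ is an order embedding of $\Sub(\Nat)$. It is onto the saturated mixed sector by definition, and an order isomorphism between posets transports all existing meets and joins. $\Sub(\Nat)$ is a complete lattice with meet $\bigcap$ and join $\gen{\bigcup}$, so the sector is a complete lattice under inclusion. Its meet $\wedge_{\rm mix}$ and join $\vee_{\rm mix}$, computed inside the sector, are the images $\Hmix{m}{n}{\Gamma\cap\Lambda}$ and $\Hmix{m}{n}{\gen{\Gamma\cup\Lambda}}$. I would note explicitly that $\wedge_{\rm mix}$ need not be set-theoretic intersection of classes, only the greatest lower bound within the sector; this is why the statement uses the subscript.

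Finally, for principal monoids the atom of $p\Nat$ (for $p\ge1$) is $p$. So $\Hmix{m}{n}{p\Nat}\subseteq\Hmix{m}{n}{q\Nat}$ holds iff $p\in q\Nat$, i.e.\ $q\mid p$. The degenerate cases with $p=0$ or $q=0$ are read with the convention $0\Nat=\{0\}$ and are immediate.
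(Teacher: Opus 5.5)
Your proposal is correct and follows essentially the paper's proof: the forward inclusion holds because every stride operator generating $\Hmix{m}{n}{\Gamma}$ is already a defining operator of $\Hmix{m}{n}{\Lambda}$, while the converse and the atom witness come from Theorem~\ref{thm:descent-membership} together with Lemma~\ref{lem:descent-internal}. The lattice and divisibility statements are then transported from $\Sub(\Nat)$, and your routing of $A\subseteq\Lambda\Rightarrow\Gamma\subseteq\Lambda$ through $\gen A=\Gamma$ is exactly the paper's atomic criterion.
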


\begin{proof}
If $\Gamma\subseteq\Lambda$, every recursion operator used to generate
$\Hmix{m}{n}{\Gamma}$ is also available in $\Hmix{m}{n}{\Lambda}$, so
$\Hmix{m}{n}{\Gamma}\subseteq\Hmix{m}{n}{\Lambda}$.  Conversely, if the
class inclusion holds, then each $\rho_{n,p}$ with $p\in\Gamma$ belongs to
the target class, and Theorem~\ref{thm:descent-membership} yields
$p\in\Lambda$.  Since $\Gamma=\gen A$, the atomic criterion follows;
lattice operations and principal divisibility transport from $\Sub(\Nat)$.
\end{proof}

\begin{theorem}[intrinsic principal skeleton and finiteness]
\label{thm:join-irreducible}
\label{thm:acc}
At fixed $m<n$, the nonzero join-irreducible classes of the saturated mixed
sector are exactly
\[
 \boxed{\Hmix{m}{n}{d\Nat}\qquad(d\ge1).}
\]
Moreover:
\begin{enumerate}
\item every saturated mixed class is compact, the sector is countably infinite,
and it satisfies ACC;
\item it is not Artinian: for $S_r=\gen{2,2r+1}$,
\[
 \Hmix{m}{n}{S_1}\supsetneq\Hmix{m}{n}{S_2}
 \supsetneq\Hmix{m}{n}{S_3}\supsetneq\cdots .
\]
\end{enumerate}
Thus the principal saturated classes form the lattice-theoretically intrinsic
join-irreducible skeleton of a complete, countable, Noetherian mixed sector.
\end{theorem}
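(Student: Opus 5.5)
By Theorem~\ref{thm:classification}, the map $\Gamma\mapsto\Hmix{m}{n}{\Gamma}$ is an order isomorphism from $\Sub(\Nat)$ onto the saturated mixed sector. Every lattice-theoretic assertion in the statement is invariant under order isomorphism, so I will prove each one inside $\Sub(\Nat)$, ordered by inclusion, and then transport it. I will use join-irreducibility in the lattice sense: $x$ is nonzero and $x=y\vee z$ forces $x=y$ or $x=z$. The joins are $\gen{\Gamma\cup\Lambda}$.

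\emph{Join-irreducibles.} Let $\Gamma\ne\{0\}$ and let $A=\Atoms(\Gamma)$, which is finite and nonempty by Proposition~\ref{prop:spectrum-basis}.

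If $|A|=1$, then $\Gamma=d\Nat$. Suppose $d\Nat=\Gamma_1\vee\Gamma_2$. Then $d\in\gen{\Gamma_1\cup\Gamma_2}$, so $d$ is a sum of elements of $\Gamma_1\cup\Gamma_2\subseteq d\Nat$. That sum must consist of a single term $d$, so $d$ lies in some $\Gamma_i$, and hence $\Gamma_i=d\Nat$.

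If $|A|\ge2$, pick an atom $a$ and write
\[
 \Gamma=a\Nat\vee\gen{A\setminus\{a\}}.
\]
Both joinands are proper submonoids. The first misses the other atoms. The second misses $a$, because $a$ is an atom not lying in $A\setminus\{a\}$, and an atom can only be expressed as a length-one sum of generators. Hence $\Gamma$ is join-reducible.

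\emph{Compactness, cardinality and ACC.} I will prove ACC first. Let $\Gamma_1\subseteq\Gamma_2\subseteq\cdots$ be an ascending chain in $\Sub(\Nat)$. Its union $\Gamma$ is again a submonoid, so it is finitely generated by Proposition~\ref{prop:spectrum-basis}. Its finitely many generators all lie in some $\Gamma_N$, and therefore the chain stabilizes at $N$.

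Compactness follows by the same device. If $\Gamma\subseteq\bigvee_i\Lambda_i$, then each of the finitely many atoms of $\Gamma$ is a finite sum of elements from finitely many of the $\Lambda_i$. Hence $\Gamma$ lies below a finite subjoin. The same reasoning applies to arbitrary joins, which are $\gen{\bigcup}$.

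Countability holds because a submonoid is determined by a finite subset of $\Nat$, namely its atoms. The sector is infinite because the classes $d\Nat$ for $d\ge1$ are pairwise distinct.

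\emph{Failure of DCC.} Let $S_r=\gen{2,2r+1}$. The element $2r+3$ lies in $S_r$, since $2r+3=(2r+1)+2$. Every element of $S_{r+1}$ is therefore in $S_r$, so $S_r\supseteq S_{r+1}$.

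For strictness I need $2r+1\notin S_{r+1}$. The odd elements of $S_{r+1}$ are all at least $2r+3$, so this holds. Transporting through Theorem~\ref{thm:classification} gives the strictly descending chain of classes. The canonical witness is $\rho_{n,2r+1}$.

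\emph{Obstacle.} Every step is elementary. The only delicate point is fixing the convention for ``join-irreducible'', in particular excluding the bottom element $\{0\}$, and checking that the complete-lattice joins coincide with $\gen{\bigcup}$. Both facts are supplied by the order isomorphism.
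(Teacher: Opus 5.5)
Your proposal is correct and follows the paper's own proof. Both transport everything through the order isomorphism of Theorem~\ref{thm:classification}, show that $d\Nat$ is join-irreducible because $d$ must already lie in one joinand, split any monoid with at least two atoms into proper atom-generated submonoids, derive compactness, countability and ACC from finite generation, and get the failure of DCC from $S_{r+1}\subsetneq S_r$. Your explicit splitting $\Gamma=a\Nat\vee\gen{A\setminus\{a\}}$ and the parity check $2r+1\notin S_{r+1}$ just spell out details the paper leaves implicit.
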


\begin{proof}
Transport the assertions through the lattice isomorphism of
Theorem~\ref{thm:classification}.  A nonzero submonoid of $\Nat$ with
at least two atoms is the join of two proper atom-generated submonoids, whereas
$d\Nat$ is join-irreducible: if a join has least positive element $d$, then one
of its factors already contains $d$.  Hence the nonzero join-irreducibles are
exactly the principal monoids $d\Nat$.  Finite atom bases give countability and
compactness.  For an ascending chain, its union is a submonoid of $\Nat$ and
therefore has a finite generating set; that finite set is contained in one
sufficiently late term, so the chain stabilizes.  Finally the monoids
$S_r=\gen{2,2r+1}$ form a strictly descending chain, and exact class recovery
transfers strictness to the displayed classes.
\end{proof}

\begin{remark}[primitive one-label versus saturated principal classes]
\label{rem:primitive-saturated-distinction}
For $d\ge1$ one always has
\[
 \Mprim{m}{n}{\{d\}}\subseteq\Hmix{m}{n}{d\Nat},
 \qquad
 \Str_n\!\left(\Mprim{m}{n}{\{d\}}\right)
 =\Str_n\!\left(\Hmix{m}{n}{d\Nat}\right)=d\Nat.
\]
The first class is the one-label fixed-stride algebra denoted
$\mathcal H^m_{n,d}$ in \cite{Osipov2026Step}; the second is, by definition,
closed under every recursion operator whose stride is a positive multiple of
$d$.  We do not identify these two full function classes here.  Such an
identification requires an operator-simulation theorem showing closure under
$\SR_{\rho_{n,kd}}$ from closure under $\SR_{\rho_{n,d}}$; it does not follow
merely from the definability of
$\rho_{n,kd}=\rho_{n,d}^{[k]}$.  None of the spectrum, path-factorization, or
saturated-lattice results in this paper uses that additional identification.
\end{remark}

\begin{theorem}[transported scale--defect and interval geometry]
\label{thm:principal-envelope}
\label{thm:graded-intervals}
\label{cor:hasse-components}
Let $\Gamma\ne\{0\}$ be an additive submonoid of $\Nat$ and write
\[
 d=\gcd(\Gamma\setminus\{0\}),\qquad \Gamma=dS.
\]
Then $S$ is a numerical semigroup, uniquely determined, and $d\Nat$ is the
least principal monoid containing $\Gamma$.  Its defect is finite:
$d\Nat\setminus\Gamma$ is finite, and if $c(S)$ is the conductor, then
\[
 p\ge d\,c(S)\quad\Longrightarrow\quad
 \boxed{p\in\Gamma\Longleftrightarrow d\mid p}.
\]
Moreover, for $\{0\}\ne\Gamma\subseteq\Lambda$:
\begin{enumerate}
\item $\gcd\Gamma=\gcd\Lambda$ iff $\Lambda\setminus\Gamma$ is finite iff the
interval $[\Gamma,\Lambda]$ is finite;
\item in that case the interval is graded, every cover $M\lessdot N$ adds
exactly one element, and every maximal chain has length
$\boxed{|\Lambda\setminus\Gamma|}$;
\item if the gcds differ, $[\Gamma,\Lambda]$ is infinite.
\end{enumerate}
Consequently the connected components of the nonzero Hasse graph are exactly
the gcd-scales.  The component of $\Gamma$ has unique maximum $d\Nat$, and
\[
 \boxed{
 \delta(\Gamma):=|d\Nat\setminus\Gamma|
 =\operatorname{dist}_{\rm Hasse}(\Gamma,d\Nat).}
\]
All statements transfer verbatim to the corresponding saturated Step Recursion
classes.  Thus $d$ is the asymptotic stride scale and $\delta$ the exact number
of one-descent expressive extensions needed to reach its principal envelope.
\end{theorem}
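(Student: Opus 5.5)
The plan is to prove every statement inside $\Sub(\Nat)$ and then transport it through the order isomorphism $\Gamma\mapsto\Hmix{m}{n}{\Gamma}$ of Theorem~\ref{thm:classification}. Order-theoretic notions (intervals, covers, chains, Hasse distance, components) are preserved by an order isomorphism, so this transport is immediate. I would therefore only argue the monoid facts.

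\emph{Scale and defect.} Put $d=\gcd(\Gamma\setminus\{0\})$ and $S=d^{-1}\Gamma$. Then $S$ is a submonoid of $\Nat$ with gcd one, so $\Nat\setminus S$ is finite and $S$ is a numerical semigroup with conductor $c(S)$ \cite{RosalesGarciaSanchez2009}. It is uniquely determined because $d$ is determined by $\Gamma$. Any principal monoid $e\Nat\supseteq\Gamma$ satisfies $e\mid d$, so $e\Nat\supseteq d\Nat$; hence $d\Nat$ is the least principal monoid containing $\Gamma$. For $p\ge d\,c(S)$ with $d\mid p$ we get $p/d\ge c(S)$, so $p/d\in S$ and $p\in\Gamma$. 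This gives the displayed equivalence, and $d\Nat\setminus\Gamma=d(\Nat\setminus S)$ is finite.

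\emph{Intervals.} Let $\{0\}\ne\Gamma\subseteq\Lambda$ with gcds $d\ge e$, where $e\mid d$.
\begin{enumerate}
\item If $d=e$, then $\Lambda\setminus\Gamma\subseteq d\Nat\setminus\Gamma$ is finite. Every $M\in[\Gamma,\Lambda]$ is determined by the subset $M\setminus\Gamma$, so the interval is finite.
\item If $d\ne e$, then $\Lambda$ contains infinitely many non-multiples of $d$, since $\Lambda$ is cofinite in $e\Nat$. So $\Lambda\setminus\Gamma$ is infinite. For the interval, take $x\in\Lambda$ with $d\nmid x$; then the monoids $\gen{\Gamma\cup\{kx\}}$ with $k$ ranging over large values coprime to $d/\gcd(d,x)$ are pairwise distinct. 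More simply, use the strictly descending chain $\gen{\Gamma\cup\{x\},\ \text{all }y\ge N\text{ in }\Lambda}$ for $N\to\infty$, whose members are distinct because each new $N$ removes a non-multiple of $d$. Either construction makes $[\Gamma,\Lambda]$ infinite.
\end{enumerate}
This settles the equivalences in item~1 and item~3.

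\emph{Grading.} The key observation is the \emph{Frobenius trick}: if $\Gamma\subsetneq N$ with $N\setminus\Gamma$ finite, then $x=\max(N\setminus\Gamma)$ makes $N\setminus\{x\}$ a monoid. Indeed, if $x=b+c$ with $b,c\in N$ positive, then $b,c<x$; also $b,c\notin\Gamma$ is impossible since $\Gamma+N\subseteq$... The monoid property actually needs a different argument: I expect this to be the main obstacle. The step must show that $N\setminus\{x\}$ is closed under addition, not that $x$ is indecomposable. Closure fails only if $x=b+c$ with $b,c\in N\setminus\{x\}$. I would handle this by showing the removed element can be taken to be $x=\max(N\setminus\Gamma)$ when one argues with $M\lessdot N$ inside $[\Gamma,N]$. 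For $y\in N\setminus\{x\}$ and $z\in N$ positive, if $y+z$ exceeds $x$ it lies in $\Gamma\cup(N\cap(x,\infty))\subseteq\Gamma$, since all elements of $N$ above $x$ lie in $\Gamma$. Sums landing exactly on $x$ are the issue. To avoid it, I would take the extension in the other direction: for $M\subsetneq N$ choose $y=\max(N\setminus M)$. Then $M\cup\{y\}$ is closed, because $y+M\subseteq N$ consists of elements greater than $y$ (or equal to $y$ via $0$), which lie in $M$; and $2y>y$ also lies in $M$. So every proper inclusion refines by single-element additions, and covers add exactly one element. Thus each maximal chain from $\Gamma$ to $\Lambda$ has length $|\Lambda\setminus\Gamma|$, which proves item~2.

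\emph{Components and distance.} Comparability preserves the gcd by item~3 and the finiteness of intervals, so each Hasse component lies inside one gcd-scale. Each $\Gamma$ is joined to $d\Nat$ by a finite maximal chain, so each scale is connected with unique maximum $d\Nat$. The distance claim $\delta(\Gamma)=\operatorname{dist}(\Gamma,d\Nat)$ follows because every Hasse edge changes the size of the defect $|d\Nat\setminus\cdot|$ by exactly one, while a maximal chain realizes the bound $\delta(\Gamma)$. Finally, the one-descent interpretation comes from reading each cover as adjoining a single stride $y$, i.e.\ a single recursion operator $\SR_{\rho_{n,y}}$, via Theorem~\ref{thm:descent-membership}.
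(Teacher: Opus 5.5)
Your argument is correct and follows the paper's route. Like the paper, you use cofiniteness of $S=d^{-1}\Gamma$, finiteness of same-gcd intervals, and adjunction of $y=\max(N\setminus M)$ alone so that covers add one element. You also produce infinitely many distinct intermediates $\gen{\Gamma\cup\{x_k\}}$ when the gcds differ, and transport everything through Theorem~\ref{thm:classification}. Your choice $x_k=kx$ with $\gcd\bigl(k,d/\gcd(d,x)\bigr)=1$ replaces the paper's $b(rk+1)$ and needs no cofiniteness of $\Lambda$. Your observation that every Hasse edge changes the defect by exactly one makes explicit the lower bound on Hasse distance that the paper leaves implicit.

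Three pieces should be cleaned up. First, the ``more simply'' chain is false as stated. Its $N$th member is $\bigl(\gen{\Gamma\cup\{x\}}\cap[0,N)\bigr)\cup\bigl(\Lambda\cap[N,\infty)\bigr)$, which is eventually constant whenever $\gcd(d,x)=\gcd\Lambda$; for $\Gamma=2\Nat$, $\Lambda=\Nat$, $x=1$ every member is $\Nat$. Keep the first construction and give its one-line justification: the least non-multiple of $d$ in $\gen{\Gamma\cup\{kx\}}$ is $kx$ itself, so distinct $k$ give distinct monoids. Second, delete the abandoned claim that $N\setminus\{\max(N\setminus\Gamma)\}$ is a monoid. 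It fails for $\Gamma=\{0\}\cup\{4,5,\ldots\}$ and $N=\Nat$, since $3=1+2$. Third, it is covers, not comparable pairs, that preserve the gcd (note $2\Nat\subseteq\Nat$). A cover $M\lessdot N$ has the two-element interval $[M,N]$, so the infinite-interval case forces equal gcds.
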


\begin{proof}
Writing $\Gamma=dS$ gives a numerical semigroup $S$, hence cofiniteness and the
conductor statement; a principal $q\Nat$ contains $\Gamma$ exactly when
$q\mid d$, proving minimality of $d\Nat$.  If
$\Gamma\subseteq\Lambda$ have the same gcd, both are cofinite in $d\Nat$, so
the difference and interval are finite.  For $M\subsetneq N$ in such an
interval, $x=\max(N\setminus M)$ can be adjoined alone: $x+u,2x\in M$ for
every positive $u\in M$.  Thus covers add one element and maximal chains have
length $|\Lambda\setminus\Gamma|$.  If the gcds are $a=br>b$, eventual
cofiniteness of $\Lambda$ supplies infinitely many
$x_k=b(rk+1)\in\Lambda\setminus\Gamma$ whose monoids
$\gen{\Gamma\cup\{x_k\}}$ are distinct intermediates.  Hence covers preserve
gcd and maximal-gap adjunction gives the Hasse-distance formula.  Transfer to
Step Recursion uses Theorem~\ref{thm:classification}.
\end{proof}

\section{Parallel clocks and the rank-two phase transition}
\label{sec:parallel-clocks}

The one-clock weighted-path and spectrum results are the core of the paper.  As
a secondary extension, we now test which parts of that geometry survive when
several canonical clocks are kept jointly.  The preceding classification uses
one canonical clock, so its spectrum lies in $\Nat$.  Its finite-basis and
Noetherian properties are special to that one-dimensional ambient monoid.  This
section isolates the higher-rank phenomenon at the already intrinsic
canonical-descent layer; no new recursion scheme is assumed yet.

For $r\ge1$ and $\mathbf p=(p_1,\ldots,p_r)\in\Nat^r$, define the parallel
canonical descent
\[
 \boldsymbol\rho^{(r)}_{n,\mathbf p}(y_1,\ldots,y_r)
 :=\bigl(\rho_{n,p_1}(y_1),\ldots,\rho_{n,p_r}(y_r)\bigr),
\]
using $\rho_{n,0}=\operatorname{id}_{\Nat}$.  For an additive submonoid
$\Gamma\le(\Nat^r,+)$ put
\[
 \mathcal D^{(r)}_n(\Gamma)
 :=\{\boldsymbol\rho^{(r)}_{n,\mathbf p}:\mathbf p\in\Gamma\}.
\]
These are monoids of canonical address transformations under composition.

\begin{theorem}[higher-rank canonical-descent representation]
\label{thm:vector-descent-representation}
For every $r\ge1$ and $n\ge2$, the map
\[
 \boxed{\mathbf p\longmapsto\boldsymbol\rho^{(r)}_{n,\mathbf p}}
\]
is an injective monoid homomorphism from $(\Nat^r,+)$ into
$\operatorname{End}(\Nat^r)$.  Hence, for every
$\Gamma\le\Nat^r$,
\[
 \boxed{\Gamma\cong\mathcal D^{(r)}_n(\Gamma)}.
\]
Moreover
\[
 \Gamma\longmapsto\mathcal D^{(r)}_n(\Gamma)
\]
is a complete-lattice isomorphism from $\Sub(\Nat^r)$ onto the lattice of
parallel canonical-descent algebras.
\end{theorem}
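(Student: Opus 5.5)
The plan is to reduce everything to the one-clock facts already established, working coordinatewise. First, the map is a homomorphism. For $\mathbf p,\mathbf q\in\Nat^r$, the composite $\boldsymbol\rho^{(r)}_{n,\mathbf p}\circ\boldsymbol\rho^{(r)}_{n,\mathbf q}$ acts in coordinate $i$ by $\rho_{n,p_i}\circ\rho_{n,q_i}$. By Lemma~\ref{lem:stride-arithmetic}, together with the convention $\rho_{n,0}=\operatorname{id}_{\Nat}$ to cover zero coordinates, this equals $\rho_{n,p_i+q_i}$. So the composite is $\boldsymbol\rho^{(r)}_{n,\mathbf p+\mathbf q}$, and $\mathbf 0$ goes to the identity of $\Nat^r$. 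Composition of these maps is commutative because addition in $\Nat^r$ is.

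Next comes injectivity. If $\mathbf p\ne\mathbf q$, pick a coordinate $i$ with $p_i\ne q_i$ and choose $t\ge\max(p_i,q_i)$. Evaluate both maps at the address $(0,\dots,0,Y_t,0,\dots,0)$, with $Y_t$ in slot $i$. By \eqref{eq:canonical-shift} the $i$th outputs are $Y_{t-p_i}$ and $Y_{t-q_i}$. These differ because $(Y_t)$ is strictly increasing, since $G(x)\ge x+2$. The map is therefore an injective monoid homomorphism $\Nat^r\to\operatorname{End}(\Nat^r)$. For any submonoid $\Gamma$, its restriction is an isomorphism of $\Gamma$ onto its image $\mathcal D^{(r)}_n(\Gamma)$. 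That image is a submonoid of $\operatorname{End}(\Nat^r)$, because it is closed under composition and contains the identity, via $\mathbf 0\in\Gamma$.

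For the lattice statement, I would define the \emph{lattice of parallel canonical-descent algebras} as the set of all submonoids of the image $\mathcal D^{(r)}_n(\Nat^r)$, ordered by inclusion. This is exactly the family of sets $\mathcal D^{(r)}_n(\Gamma)$. The reason is that for any submonoid $M$ of the image, its preimage is a submonoid $\Gamma$ with $\mathcal D^{(r)}_n(\Gamma)=M$. An injective homomorphism $\phi$ induces a bijection $\Gamma\mapsto\phi(\Gamma)$ between submonoids of the domain and submonoids of the image. Both this map and its inverse preserve inclusion, because $\phi$ is injective. An order isomorphism between complete lattices automatically preserves arbitrary meets and joins. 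So it is a complete-lattice isomorphism: meets are intersections, and joins are generated submonoids, transported through $\phi$.

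I expect the main obstacle to be interpretive rather than technical. The phrase ``lattice of parallel canonical-descent algebras'' must be pinned to the submonoids of the image, and the proof should note that injectivity of $\phi$ is what makes the correspondence bijective and order-reflecting. The only substantive input is the separation of coordinates by the canonical layers $Y_t$. One also has to check that zero coordinates, handled by the identity convention, do not spoil Lemma~\ref{lem:stride-arithmetic}; this is immediate.
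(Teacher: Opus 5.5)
Your proof is correct and follows the paper's argument. The homomorphism comes from coordinatewise stride arithmetic with $\rho_{n,0}=\operatorname{id}$, injectivity from evaluating at a vector with $Y_t$ in a coordinate where $p_i\ne q_i$, and the lattice statement from transporting structure through the injective homomorphism. The paper checks directly that intersections and generated joins are preserved, while you phrase this as an inclusion-preserving and inclusion-reflecting bijection onto the submonoids of the image; that is only a packaging difference, though it usefully makes the target lattice explicit.
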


\begin{proof}
Coordinatewise stride arithmetic gives
\[
 \boldsymbol\rho^{(r)}_{n,\mathbf p}
 \circ\boldsymbol\rho^{(r)}_{n,\mathbf q}
 =\boldsymbol\rho^{(r)}_{n,\mathbf p+\mathbf q}.
\]
If $\mathbf p\ne\mathbf q$, choose a coordinate $i$ with $p_i\ne q_i$ and
$t\ge\max\{p_i,q_i\}$.  On a vector whose $i$th coordinate is the canonical
point $Y_t$, the two maps have distinct $i$th outputs
$Y_{t-p_i}$ and $Y_{t-q_i}$.  Thus the representation is injective.
Intersections are preserved literally, and the join of a family of submonoids
is the submonoid generated by their union; the same identity holds after the
homomorphic embedding.  Hence arbitrary meets and joins are preserved.
\end{proof}

In rank one Theorem~\ref{thm:classification} upgrades this descent
representation to a function-class classification; in higher rank the descent
algebra itself already changes character.

\begin{theorem}[rank-one/rank-two phase transition]
\label{thm:rank-two-transition}
Let $\mathfrak D_{n,r}$ be the lattice of parallel canonical-descent algebras
$\mathcal D^{(r)}_n(\Gamma)$.
\begin{enumerate}
\item For $r=1$, every element is finitely generated and compact,
$\mathfrak D_{n,1}$ is countably infinite, and it satisfies ACC.
\item For every $r\ge2$, the powerset order
$(\mathcal P(\Npos),\subseteq)$ order-embeds into $\mathfrak D_{n,r}$.
Consequently $\mathfrak D_{n,r}$ has cardinality $2^{\aleph_0}$, contains a
continuum antichain and an isomorphic copy of every countable partial order,
and admits infinite strict ascending chains; in particular ACC fails.
\item For every finite $r\ge1$, the nonzero join-irreducible elements of
$\mathfrak D_{n,r}$ are exactly the cyclic rays
\[
 \boxed{\mathcal D^{(r)}_n(\Nat\mathbf p)}
 \qquad(\mathbf p\in\Nat^r\setminus\{\mathbf0\}).
\]
\item An element $\mathcal D^{(r)}_n(\Gamma)$ is compact if and only if
$\Gamma$ is finitely generated.  Consequently, for $r\ge2$ only countably
many elements are compact while continuum many are noncompact.
\end{enumerate}
\end{theorem}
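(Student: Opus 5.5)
By Theorem~\ref{thm:vector-descent-representation}, $\Gamma\mapsto\mathcal D^{(r)}_n(\Gamma)$ is a complete-lattice isomorphism from $\Sub(\Nat^r)$ onto $\mathfrak D_{n,r}$. So every assertion reduces to the corresponding statement about the lattice of additive submonoids of $\Nat^r$. We will prove each item there and then transport it.

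\emph{Item (1).} For $r=1$, Proposition~\ref{lem:finite-generation} shows that every submonoid of $\Nat$ is finitely generated.
\begin{itemize}
\item Countability follows because there are only countably many finite subsets of $\Nat$.
\item ACC follows as in the proof of Theorem~\ref{thm:join-irreducible}: the union of an ascending chain is a submonoid, hence finitely generated, and its generators already lie in a single term of the chain.
\item Infinitude is witnessed by the distinct monoids $d\Nat$, $d\ge1$.
\end{itemize}
Compactness is covered by item (4).

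\emph{Item (2).} For $r\ge2$ it suffices to treat $r=2$, since $\Nat^2$ embeds coordinatewise into $\Nat^r$. For $X\subseteq\Npos$, set
\[
 \Gamma_X=\gen{\{(1,k):k\in X\}}\le\Nat^2.
\]
The key claim is that $(1,j)\in\Gamma_X$ if and only if $j\in X$. The reason is that any sum of generators with first coordinate $1$ uses exactly one generator. Hence $X\subseteq Y$ if and only if $\Gamma_X\subseteq\Gamma_Y$, which is an order embedding of $(\mathcal P(\Npos),\subseteq)$.

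The listed consequences then follow:
\begin{itemize}
\item There are at least continuum many elements. There are at most continuum many, since a submonoid is a subset of a countable set.
\item A continuum antichain comes from an almost-disjoint family of subsets of $\Npos$, or from the family $\{A\cup(\Npos\setminus\dots)\}$ of sets containing exactly one of $\{2k,2k+1\}$ for each $k$; the latter is an antichain under inclusion.
\item Every countable poset embeds into $\mathcal P(\Npos)$ via principal down-sets, and hence into $\mathfrak D_{n,r}$.
\item An infinite strict ascending chain is given by the sets $X_k=\{1,\dots,k\}$.
\end{itemize}

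\emph{Item (3).} Let $\Gamma\ne0$.
\begin{itemize}
\item If $\Gamma$ contains two elements $\mathbf a,\mathbf b$ that are not both in a common cyclic ray $\Nat\mathbf p$, write $\Gamma$ as a join of smaller submonoids.
\item Conversely, we show that a cyclic ray $\Nat\mathbf p$ is join-irreducible. Suppose $\Nat\mathbf p=\gen{M\cup N}$ with $M,N\subseteq\Nat\mathbf p$. Then $\mathbf p$ is a sum of elements of $M\cup N$. Each such element is a multiple $c\mathbf p$ with $c\ge1$, so exactly one summand appears and it equals $\mathbf p$. Therefore $\mathbf p\in M$ or $\mathbf p\in N$.
\item Suppose $\Gamma$ is join-irreducible. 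Since $\Gamma=\bigvee_{\mathbf g\in\Gamma}\Nat\mathbf g$, and join-irreducibility must be understood for finite joins, we first show that $\Gamma$ is finitely generated. Once this is known, $\Gamma$ equals one of its cyclic pieces.
\end{itemize}
This last step is the main obstacle. If completely join-irreducible is intended, it is immediate. For finite joins, a non-finitely generated $\Gamma$ could a priori be join-irreducible. To rule this out, take two distinct minimal nonzero elements in the coordinatewise order, or, if $\Gamma$ lies on one ray, use the finite-basis argument of Proposition~\ref{lem:finite-generation}. Then split $\Gamma$ as $\gen{\Gamma\setminus\{\mathbf a\}}\vee\gen{\mathbf a}$, or use a similar decomposition. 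This is where the most care is needed.

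\emph{Item (4).} We show that compact elements are exactly the finitely generated submonoids.
\begin{itemize}
\item A finitely generated $\Gamma$ lies below any join that covers it once its finitely many generators are covered, and each generator is covered by finitely many members of the join. So $\Gamma$ is compact.
\item Conversely, $\Gamma$ is the directed join of its finitely generated submonoids. Compactness puts $\Gamma$ below one of them, so $\Gamma$ is finitely generated.
\end{itemize}
Finitely generated submonoids are countable in number, while item (2) gives continuum many submonoids. Therefore continuum many elements are noncompact; for instance, $\Gamma_{\Npos}$ is not finitely generated.
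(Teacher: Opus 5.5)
Items (1), (2) and (4) are correct and follow the paper's own argument. You transport everything through Theorem~\ref{thm:vector-descent-representation}, use the slice of $\Gamma_X$ at first coordinate $1$ for the powerset embedding, and use the directed join of finitely generated submonoids for compactness. You also spell out set-theoretic consequences that the paper leaves implicit. The gap is in item (3), in the direction ``join-irreducible $\Rightarrow$ cyclic ray'', which you yourself leave unresolved. Your proposed case split is not exhaustive. The monoid $\Gamma=\gen{(1,1),(1,2)}$ has the unique coordinatewise-minimal nonzero element $(1,1)$, since every nonzero element $(a+b,a+2b)$ dominates it. Yet $\Gamma$ does not lie on a single ray, so neither of your two cases applies. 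Your first bullet (two elements not on a common ray) gives no construction. Your announced plan of first proving that $\Gamma$ is finitely generated is never carried out, and it is not needed.

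The missing observation is that a single atom suffices, with no case distinction. Choose $\mathbf a\in\Gamma\setminus\{\mathbf 0\}$ of minimal coordinate sum. It is an atom of $\Gamma$: in a decomposition $\mathbf a=\mathbf b+\mathbf c$ with $\mathbf b,\mathbf c\in\Gamma\setminus\{\mathbf 0\}$, the summand $\mathbf b$ would have strictly smaller coordinate sum. Hence $\mathbf a\notin\gen{\Gamma\setminus\{\mathbf a\}}$. If $\Gamma\ne\Nat\mathbf a$, then $\Gamma=\gen{\mathbf a}\vee\gen{\Gamma\setminus\{\mathbf a\}}$ is a binary join of two proper submonoids, so $\Gamma$ is not join-irreducible. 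Otherwise $\Gamma=\Nat\mathbf a$ is a cyclic ray. This handles infinitely generated $\Gamma$ directly, because your decomposition was the right one and only the choice of $\mathbf a$ and the properness check were missing. The paper packages the same idea as follows: every nonzero submonoid of $\Nat^r$ is atomic by descent on coordinate sum, and it is join-irreducible exactly when it has a single atom. Your proof that cyclic rays are join-irreducible is fine; it even gives complete join-irreducibility.
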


\begin{proof}
The rank-one claims were proved above.  For $r\ge2$ work in the first two
coordinates and set
$\Gamma_A=\gen{\{(1,a):a\in A\}}$ for $A\subseteq\Npos$.  Since a sum of
two nonzero elements has first coordinate at least $2$,
\[
 \Gamma_A\cap(\{1\}\times\Nat)=\{(1,a):a\in A\},
 \qquad A\subseteq B\Longleftrightarrow\Gamma_A\subseteq\Gamma_B.
\]
This embeds the powerset order, yielding all cardinality and order-complexity
claims.  Taking $A=\Npos$ gives infinitely many atoms $(1,a)$ and the strict
ascending finite truncations, so ACC fails and noncompact elements exist.
Every nonzero submonoid of $\Nat^r$ is atomic by descent of coordinate sum;
it is join-irreducible exactly when it has one atom, hence is a cyclic ray.
Finally, finite generation implies compactness by finite support in a join;
conversely $\Gamma$ is the directed join of its finitely generated
submonoids, so compactness implies finite generation.  Only countably many
finite subsets of $\Nat^r$ exist.
\end{proof}

\subsection*{Faithful coupled scalarization}

Scalarity itself need not lose synchronization.  Fix distinct primes
$\pi_1,\ldots,\pi_r$ and encode
\[
 E_r(\mathbf y)=\prod_{i=1}^r\pi_i^{y_i}-1,
 \qquad \mathcal C_r=E_r(\Nat^r).
\]
Unique factorization makes $E_r$ injective and $E_r(\mathbf0)=0$.  For
$\mathbf p\in\Nat^r$ define a total unary map
$\widehat\rho^{(r)}_{n,\mathbf p}$ by
\[
 \widehat\rho^{(r)}_{n,\mathbf p}(E_r(\mathbf y))
 =E_r(\boldsymbol\rho^{(r)}_{n,\mathbf p}(\mathbf y));
\]
off $\mathcal C_r$, let the zero label act as the identity and every nonzero
label map directly to $0$.

\begin{theorem}[faithful unary scalarization]
\label{thm:faithful-scalarization}
For finite $r\ge1$,
\[
 \boxed{\widehat\rho^{(r)}_{n,\mathbf p}\circ
 \widehat\rho^{(r)}_{n,\mathbf q}
 =\widehat\rho^{(r)}_{n,\mathbf p+\mathbf q}},
 \qquad
 \widehat\rho^{(r)}_{n,\mathbf0}=\operatorname{id},
\]
and $\mathbf p\mapsto\widehat\rho^{(r)}_{n,\mathbf p}$ is injective.  Hence
\[
 \Gamma\longmapsto
 \widehat{\mathcal D}^{(r)}_n(\Gamma)
 :=\{\widehat\rho^{(r)}_{n,\mathbf p}:\mathbf p\in\Gamma\}
\]
is a complete-lattice embedding of $\Sub(\Nat^r)$ into monoids of unary total
functions.  For strictly positive $\mathbf p$ these maps satisfy
$\widehat\rho^{(r)}_{n,\mathbf p}(0)=0$ and
$\widehat\rho^{(r)}_{n,\mathbf p}(x)<x$ for every $x>0$, so every orbit
reaches zero.  Moreover,
\[
 \boxed{\widehat\rho^{(r)}_{n,\mathbf p}\circ E_r
 =E_r\circ\boldsymbol\rho^{(r)}_{n,\mathbf p}.}
\]
Consequently synchronized recursion along $\mathbf p$ is, on the invariant
coded address space, semantically ordinary one-address step recursion along
$\widehat\rho^{(r)}_{n,\mathbf p}$ after decoding the predecessor code in the
transition.
\end{theorem}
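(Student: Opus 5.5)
The plan is to reduce every claim to the rank-$r$ representation of Theorem~\ref{thm:vector-descent-representation} by conjugating through the injective code $E_r$. The first thing to prove is the intertwining identity $\widehat\rho^{(r)}_{n,\mathbf p}\circ E_r=E_r\circ\boldsymbol\rho^{(r)}_{n,\mathbf p}$. On $\mathcal C_r$ this holds by definition, and it is well defined because $E_r$ is injective by unique factorization. Since $\boldsymbol\rho^{(r)}_{n,\mathbf p}$ maps $\Nat^r$ to itself, the coded set $\mathcal C_r$ is invariant under every $\widehat\rho^{(r)}_{n,\mathbf p}$.

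For the composition law I would split into the cases $x\in\mathcal C_r$ and $x\notin\mathcal C_r$.

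\textbf{Case $x\in\mathcal C_r$.} Write $x=E_r(\mathbf y)$. Invariance of $\mathcal C_r$ and the vector law $\boldsymbol\rho_{\mathbf p}\circ\boldsymbol\rho_{\mathbf q}=\boldsymbol\rho_{\mathbf p+\mathbf q}$ give both sides equal to $E_r(\boldsymbol\rho_{\mathbf p+\mathbf q}(\mathbf y))$.

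\textbf{Case $x\notin\mathcal C_r$.}
\begin{itemize}
\item If $\mathbf p=\mathbf q=\mathbf0$, both sides are the identity.
\item If exactly one of $\mathbf p,\mathbf q$ is zero, that factor is the identity, and the other factor sends $x$ to $0$, as does $\widehat\rho_{\mathbf p+\mathbf q}$ because $\mathbf p+\mathbf q\neq\mathbf0$.
\item If both are nonzero, $\widehat\rho_{\mathbf q}(x)=0=E_r(\mathbf0)\in\mathcal C_r$. Then $\widehat\rho_{\mathbf p}(0)=E_r(\boldsymbol\rho_{\mathbf p}(\mathbf0))=E_r(\mathbf0)=0$, using $\rho_{n,l}(0)=0$. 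This agrees with $\widehat\rho_{\mathbf p+\mathbf q}(x)=0$.
\end{itemize}
The identity $\widehat\rho_{\mathbf0}=\operatorname{id}$ is immediate on and off $\mathcal C_r$.

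Injectivity of $\mathbf p\mapsto\widehat\rho_{\mathbf p}$ follows from the intertwining identity. If $\widehat\rho_{\mathbf p}=\widehat\rho_{\mathbf q}$, then $E_r\circ\boldsymbol\rho_{\mathbf p}=E_r\circ\boldsymbol\rho_{\mathbf q}$, and since $E_r$ is injective, $\boldsymbol\rho_{\mathbf p}=\boldsymbol\rho_{\mathbf q}$. Theorem~\ref{thm:vector-descent-representation} then gives $\mathbf p=\mathbf q$. The lattice embedding transports exactly as in that theorem:
\begin{itemize}
\item order is reflected and preserved by injectivity;
\item meets are literal intersections of images under an injective map;
\item the join of a family of submonoids is the submonoid generated by their union, and a homomorphic image of a generated submonoid is generated by the image.
\end{itemize}

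For the descent property, take $\mathbf p$ with all coordinates positive; I read ``strictly positive'' this way.
\begin{itemize}
\item $\widehat\rho_{\mathbf p}(0)=0$ was shown above.
\item For $x>0$ off $\mathcal C_r$, the value is $0<x$.
\item For $x=E_r(\mathbf y)>0$, some $y_i>0$. Every coordinate satisfies $\rho_{n,p_i}(y_i)\le y_i$, and the coordinate with $y_i>0$ satisfies $\rho_{n,p_i}(y_i)<y_i$. This uses $\rho_{n,1}(y)<y$ for $y>0$, which holds because $G(z)\ge z+2>z$ forces $\min\{z:G(z)\ge y\}<y$, together with $\rho_{n,p}=\rho_{n,1}^{[p]}$. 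Since $\prod\pi_i^{y_i}$ is strictly increasing in each exponent, $E_r(\boldsymbol\rho_{\mathbf p}(\mathbf y))<E_r(\mathbf y)$.
\end{itemize}
Thus $\widehat\rho_{\mathbf p}$ strictly decreases every positive input, so every orbit reaches $0$ by well-foundedness.

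The final semantic claim is a routine unfolding. Define $\hat f(\bar x,E_r(\mathbf y))=f(\bar x,\mathbf y)$. The recursion equations for $f$ along $\boldsymbol\rho_{\mathbf p}$ then become ordinary step recursion along $\widehat\rho_{\mathbf p}$, by the intertwining identity, with the transition applied to the decoded address $E_r^{-1}$. Coded zero corresponds to the base case.

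The only delicate step is the off-code bookkeeping in the composition law, in particular checking that a nonzero label sends non-codes to $0$ and that $0$ is a fixed point.
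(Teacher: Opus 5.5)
Your proposal is correct and follows essentially the same route as the paper: coordinatewise stride arithmetic on codes, the identity/zero convention off codes, strict decrease of positive exponents for the descent property, and the substitution $z=E_r(\mathbf y)$ for the semantic claim. The only cosmetic difference is injectivity: you derive it from the intertwining identity, injectivity of $E_r$, and Theorem~\ref{thm:vector-descent-representation}, whereas the paper evaluates directly on a code whose $i$th exponent is a high canonical point $Y_t$; your explicit case check off $\mathcal C_r$ (using that $0=E_r(\mathbf 0)$ is a fixed code) simply spells out what the paper leaves implicit.
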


\begin{proof}
On codes the composition identity is coordinatewise stride arithmetic; off
codes it follows from the identity/zero convention.  If
$\mathbf p\ne\mathbf q$, choose $i$ with $p_i\ne q_i$ and evaluate on a code
whose $i$th exponent is a sufficiently high canonical point $Y_t$ and whose
other exponents are zero.  The resulting $\pi_i$-exponents
$Y_{t-p_i}$ and $Y_{t-q_i}$ differ, proving injectivity.  Meets and joins then
transport exactly as in Theorem~\ref{thm:vector-descent-representation}.
For strictly positive $\mathbf p$, every positive exponent is strictly
decreased by its canonical coordinate descent while zero exponents remain
zero.  Hence every nonzero code is sent to a smaller code and repeated
iteration reaches $E_r(\mathbf0)=0$; noncodes reach $0$ in one step.  The
conjugacy identity is immediate from the definition; substituting
$z=E_r(\mathbf y)$ into the synchronized recursion equation gives the final
semantic statement.
\end{proof}

This is a semantic conjugacy statement only: it does not assert that the prime
encoding $E_r$, its decoding, or the total coded descents belong internally to
the Step Recursion class under study.

Two elementary obstructions show that faithfulness requires genuine coupling.

\begin{proposition}[additive-label and marginal no-go]
\label{prop:coupling-no-go}
For $r\ge2$:
\begin{enumerate}
\item no additive homomorphism $(\Nat^r,+)\to(\Nat,+)$ is injective;
\item even all proper coordinate marginals are not faithful.  The proper
submonoid
\[
 \Gamma_{\rm even}
 =\{\mathbf p:\sum_i p_i\equiv0\pmod2\}\subsetneq\Nat^r
\]
satisfies
\[
 \boxed{\pi_I(\Gamma_{\rm even})=\Nat^I
 \quad(\varnothing\ne I\subsetneq\{1,\ldots,r\}),}
\]
exactly as $\Nat^r$ does.
\end{enumerate}
Thus arbitrary synchronization requires genuinely $r$-way information, even
though Theorem~\ref{thm:faithful-scalarization} shows that one coupled unary
function channel is enough to carry it.
\end{proposition}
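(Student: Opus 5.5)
For part (1) I will argue by contradiction, using only the order structure of $(\Nat,+)$. Suppose $\phi:(\Nat^r,+)\to(\Nat,+)$ is an additive homomorphism. It is determined by the values $c_i=\phi(\mathbf e_i)$, since $\phi(\mathbf p)=\sum_i c_ip_i$. Then $\phi(c_2\mathbf e_1)=c_1c_2=\phi(c_1\mathbf e_2)$. If $c_1,c_2$ are both positive, the vectors $c_2\mathbf e_1\ne c_1\mathbf e_2$ collide. If some $c_i=0$, then $\phi(\mathbf e_i)=\phi(\mathbf0)$. Either way $\phi$ is not injective, which settles part (1).

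For part (2) I first check that $\Gamma_{\rm even}$ is a proper submonoid of $\Nat^r$.
\begin{itemize}
\item It contains $\mathbf0$.
\item It is closed under addition, because the parity of the coordinate sum is additive.
\item It is proper, since $\mathbf e_1\notin\Gamma_{\rm even}$.
\end{itemize}

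Next I prove the marginal identity $\pi_I(\Gamma_{\rm even})=\Nat^I$ for every nonempty proper $I\subsetneq\{1,\ldots,r\}$. The inclusion $\subseteq$ is trivial. For $\supseteq$, fix $\mathbf a\in\Nat^I$ and choose any index $j\notin I$, which exists because $I$ is proper. Define $\mathbf p$ by
\begin{itemize}
\item $p_i=a_i$ for $i\in I$;
\item $p_j\in\{0,1\}$, chosen so that $p_j\equiv\sum_{i\in I}a_i\pmod 2$;
\item all remaining coordinates equal to zero.
\end{itemize}
Then $\mathbf p\in\Gamma_{\rm even}$ and $\pi_I(\mathbf p)=\mathbf a$. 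Since $\pi_I(\Nat^r)=\Nat^I$ as well, the two distinct monoids $\Gamma_{\rm even}\subsetneq\Nat^r$ have identical proper marginals. So no family of proper coordinate marginals separates submonoids.

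The final interpretive sentence needs no new argument. It just combines this no-go result with the already proved Theorem~\ref{thm:faithful-scalarization}.

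The only mildly delicate step is handling a zero coefficient in part (1), and that case is covered above. The rest is routine.
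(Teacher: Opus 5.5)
Your proof is correct and takes the same approach as the paper. For part (1), writing the map as $\sum_i c_ip_i$, a zero coefficient breaks injectivity, and otherwise $c_2\mathbf e_1$ and $c_1\mathbf e_2$ collide. For part (2), you fix one omitted coordinate by parity and set the remaining omitted coordinates to zero; your extra check that $\Gamma_{\rm even}$ is a proper submonoid is left implicit in the paper.
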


\begin{proof}
Write an additive map as
$\ell(\mathbf p)=\sum_i a_ip_i$.  A zero $a_i$ destroys injectivity; if all
$a_i>0$, the distinct vectors $a_j\mathbf e_i$ and $a_i\mathbf e_j$ have the
same image for $i\ne j$.  For the marginal claim, after prescribing any proper
set of coordinates, choose one omitted coordinate with the parity required to
make the total sum even and set the remaining omitted coordinates to zero.
\end{proof}

Combining Theorems~\ref{thm:rank-two-transition} and
\ref{thm:faithful-scalarization}, the entire rank-two explosion---continuum
many spectra and an embedded powerset order---is therefore realizable inside
monoids of unary total functions $\Nat\to\Nat$.  The boundary is sharper than
``vector versus scalar'': coupled scalarization can be faithful, while
additive labels and every hierarchy of proper coordinate marginals can still
erase the joint geometry.

\begin{remark}[from joint descent algebra to scalar function classes]
Theorem~\ref{thm:rank-two-transition} is an exact theorem about the canonical
descent algebra already present in Step Recursion.  Section~\ref{sec:multiclock-collapse}
asks what survives after one adds a natural synchronized multi-clock recursion
scheme but returns to ordinary scalar function classes.  The answer is not the
naive equality with an arbitrary submonoid of $\Nat^r$: canonical extensional
membership rectangularizes the joint spectrum.  This separates synchronized
descent geometry from its scalar extensional shadow.
\end{remark}

\section{Synchronized multi-clock recursion and extensional collapse}
\label{sec:multiclock-collapse}

The joint descent algebra retains vector labels.  We now ask what survives in
ordinary scalar function classes when the recursion address has several
synchronously advanced clocks.  The answer is exact but lossy: selected scalar
dependency paths see individual coordinate descents rather than the full
vector label.

Fix $r\ge2$.  For a strict positive vector
$\mathbf q=(q_1,\ldots,q_r)\in(\Npos)^r$, put
\[
 \boldsymbol\rho_{n,\mathbf q}(\mathbf y)
 =\bigl(\rho_{n,q_1}(y_1),\ldots,\rho_{n,q_r}(y_r)\bigr).
\]
Strict positivity ensures that repeated application reaches $\mathbf0$ from
every $\mathbf y\in\Nat^r$.  This should not be confused with classical
simultaneous recursion, which couples several output functions along one scalar
recursion variable.  Here the output remains scalar while the recursion address
itself has $r$ synchronously advanced clock coordinates.

\begin{definition}[synchronized multi-clock bounded step recursion]
Let $g:\Nat^k\to\Nat$, $h:\Nat^{k+r+1}\to\Nat$, and
$b:\Nat^{k+r}\to\Nat$ be earlier functions.  Bounded synchronized recursion
along $\mathbf q\in(\Npos)^r$ forms $f:\Nat^{k+r}\to\Nat$ by
\[
 f(\bar x,\mathbf0)=g(\bar x),
\]
\[
 f(\bar x,\mathbf y)
 =h\bigl(\bar x,\boldsymbol\rho_{n,\mathbf q}(\mathbf y),
              f(\bar x,\boldsymbol\rho_{n,\mathbf q}(\mathbf y))\bigr)
 \qquad(\mathbf y\ne\mathbf0),
\]
provided $f(\bar x,\mathbf y)\le b(\bar x,\mathbf y)$ everywhere.  Denote the
operation by $\VSR_{\mathbf q}$.  For
$L\subseteq(\Npos)^r$, let
\[
 \mathcal M^{m,(r)}_n[L]
 :=\Cl_{\circ,\{\VSR_{\mathbf q}:\mathbf q\in L\}}(B_m).
\]
Finally define the flattened scalar label set and its monoid by
\[
 \Flat(L)=\{q_i:\mathbf q=(q_1,\ldots,q_r)\in L,\ 1\le i\le r\},
 \qquad
 M(L)=\gen{\Flat(L)}\le\Nat.
\]
\end{definition}

The first obstruction is independent of the recursion scheme.

\begin{definition}[coordinatewise vector descent shadow]
If $C$ is a scalar function class containing zero and projections and closed
under composition, define
\[
 \VStr_{n,r}(C)
 :=\Bigl\{\mathbf p=(p_1,\ldots,p_r)\in\Nat^r:
       (\mathbf y\mapsto\rho_{n,p_i}(y_i))\in C
       \text{ for every }i\Bigr\},
\]
where $\rho_{n,0}=\operatorname{id}$.
\end{definition}

\begin{theorem}[rectangularity no-go theorem]
\label{thm:rectangularity}
For every such scalar function class $C$,
\[
 \boxed{\VStr_{n,r}(C)=\Str_n(C)^r.}
\]
Consequently the natural coordinatewise canonical vector spectrum of an
ordinary scalar function class is always rectangular.  In particular, no
nonrectangular submonoid $\Gamma\le\Nat^r$ can be recovered by this naive
extension of canonical-descent membership.
\end{theorem}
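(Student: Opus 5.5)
The plan is to prove the two inclusions of $\VStr_{n,r}(C)=\Str_n(C)^r$ separately. Both directions reduce to the observation that a map of the form $\mathbf y\mapsto\rho_{n,p}(y_i)$ is obtained from the unary map $\rho_{n,p}$ by a single composition with a projection, and conversely. The only closure properties used are that $C$ contains zero and all projections and is closed under composition. The convention $\rho_{n,0}=\operatorname{id}$ handles the zero coordinates uniformly, since $0\in\Str_n(C)$ by definition and the map $\mathbf y\mapsto y_i$ is the projection $P^r_i\in C$.

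For the inclusion $\Str_n(C)^r\subseteq\VStr_{n,r}(C)$, take $\mathbf p$ with every $p_i\in\Str_n(C)$. If $p_i=0$, the required map is $P^r_i\in C$. If $p_i\ge1$, then $\rho_{n,p_i}\in C$ by the definition of the spectrum, and $\mathbf y\mapsto\rho_{n,p_i}(y_i)$ equals $\rho_{n,p_i}\circ P^r_i$, which lies in $C$ by closure under composition. Hence $\mathbf p\in\VStr_{n,r}(C)$.

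For the reverse inclusion, take $\mathbf p\in\VStr_{n,r}(C)$ and fix $i$. The case $p_i=0$ is trivial. If $p_i\ge1$, the $r$-ary map $\phi_i(\mathbf y)=\rho_{n,p_i}(y_i)$ belongs to $C$. The unary map $\rho_{n,p_i}$ is recovered by substituting the unary projection $P^1_1$ into coordinate $i$ and the zero function $Z$ into every other coordinate:
\[
 \rho_{n,p_i}(x)=\phi_i(Z(x),\ldots,P^1_1(x),\ldots,Z(x)).
\]
This is a composition of members of $C$, so $\rho_{n,p_i}\in C$ and $p_i\in\Str_n(C)$. One small check is needed here: if the composition scheme only allows substituting functions of a common arity, then $Z$ must be used as a unary function. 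This is harmless because $Z\in B_m$, and the hypothesis that $C$ contains zero is meant in exactly this sense.

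The consequence then follows. Any $\Gamma\le\Nat^r$ recovered as $\VStr_{n,r}(C)$ must have the form $S^r$ with $S=\Str_n(C)$, so it is rectangular. A nonrectangular $\Gamma$ is therefore never of this form. I expect no real obstacle. The only points needing care are the degenerate zero-stride coordinates, handled by the identity convention and projections, and the arity bookkeeping in the de-padding composition.
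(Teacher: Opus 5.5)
Your proof is correct and matches the paper's argument: for $\Str_n(C)^r\subseteq\VStr_{n,r}(C)$ you compose $\rho_{n,p}$ with the $i$th projection, and for the converse you substitute the variable in position $i$ and zeros elsewhere to recover the unary descent. Your explicit treatment of zero-stride coordinates (via $\rho_{n,0}=\operatorname{id}$) and of the arity of $Z$ fills in details the paper leaves implicit.
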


\begin{proof}
If $p\in\Str_n(C)$, composition of the unary descent $\rho_{n,p}$ with the
$i$th projection gives
$\mathbf y\mapsto\rho_{n,p}(y_i)$, so
$\Str_n(C)^r\subseteq\VStr_{n,r}(C)$.  Conversely, if the $i$th coordinate
map belongs to $C$, compose it with the tuple of arguments having a single
variable in position $i$ and zeros elsewhere.  This recovers the unary
function $\rho_{n,p_i}$, hence $p_i\in\Str_n(C)$.  Apply this to every
coordinate.
\end{proof}

The synchronized recursion scheme admits an exact scalar shadow theorem.  The
proof is the one-clock mixed valuation with one additional bookkeeping step:
a selected scalar dependency path sees one coordinate of a vector address at a
time, and therefore sees scalar labels $q_i$, not the full vector $\mathbf q$.

\begin{lemma}[multi-clock selected-path lifting]
\label{lem:multiclock-path-lifting}
Fix $m<n$ and a finite synchronized multi-clock derivation $\delta$ with
$r$ clock coordinates.  In every concrete evaluation of a scalar output of
$\delta$, the one-clock occurrence-selection rules extend by treating the
$r$ coordinates of each vector address as distinct scalar occurrences.  Every
selected scalar dependency path obtained in this way is finite.  At a
$\VSR_{\mathbf q}$ node, with $\mathbf q=(q_1,\ldots,q_r)$, lifting a selected
transition origin has exactly three forms:
\begin{enumerate}
\item a parameter origin lifts through the corresponding outer argument;
\item an origin at coordinate $i$ of the next vector address contributes the
scalar edge
\[
 y_i\longmapsto\rho_{n,q_i}(y_i),
\]
which is an equality when $y_i=0$ and otherwise is an active descent labelled
$q_i$;
\item a previous-state origin lifts to the preceding common vector stage.
\end{enumerate}
Repeated previous-state lifting strictly decreases that vector-stage index.
Moreover, if the derivation is unary and is evaluated at the canonical input
$Y_t$, then every selected scalar path has length bounded by one fixed
polynomial $P_\delta(t)$.
\end{lemma}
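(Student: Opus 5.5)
The plan is to mirror the one-clock arguments of Lemmas~\ref{lem:mixed-path-lifting}, \ref{lem:mixed-internal-envelope} and \ref{lem:mixed-chain-length}, with the single change that a vector address is treated as $r$ separate scalar occurrences. I would proceed by induction on construction rank, with a secondary induction on the vector-stage index inside each $\VSR_{\mathbf q}$ node.

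\textbf{Lifting and finiteness.} For initial-function and composition nodes the selection rules are literally those of Definition~\ref{def:mixed-selected-chain}. An $h$-argument is either a parameter or one of the $r$ coordinate slots, so its selected origin names exactly one scalar occurrence.

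At a $\VSR_{\mathbf q}$ node evaluated at $(\bar x,\mathbf y)$, write the schedule as $\mathbf z_j=\boldsymbol\rho_{n,\mathbf q}^{[d-j]}(\mathbf y)$. Here $d$ is the least number of steps reaching $\mathbf 0$, which is finite because $\mathbf q$ is strictly positive. A selected transition chain at stage $j$ has one of three kinds of origin:
\begin{enumerate}
\item a parameter, which lifts to the outer argument;
\item coordinate $i$ of $\mathbf z_{j-1}=\boldsymbol\rho_{n,\mathbf q}(\mathbf z_j)$, which is prefixed by the scalar edge $(\mathbf z_j)_i\mapsto\rho_{n,q_i}((\mathbf z_j)_i)$ and then lifted further along the coordinate-$i$ schedule to the outer $y_i$;
\item the previous state, which invokes the stage induction at $j-1$.
\end{enumerate}
The coordinate edge is an equality exactly when $(\mathbf z_j)_i=0$, since $\rho_{n,q_i}(0)=0$; otherwise it is an active descent labelled $q_i$. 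Each previous-state move decreases $j$, so the lifting terminates. Occurrences are never revisited, by the same argument as in Lemma~\ref{lem:mixed-path-lifting}.

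\textbf{Length bound.} First I would transport Lemma~\ref{lem:mixed-internal-envelope}. The value-envelope induction is unchanged, because admissibility bounds the states and every address coordinate is at most $N$. For the depth, each coordinate $i$ with $y_i>0$ strictly descends at every stage until it reaches zero, so
\[
d\le\max_i D_{n,q_i}(y_i)\le\max_i D_{n,1}(y_i).
\]
This again gives $O_\delta(t+1)$ at canonical input $Y_t$, through the threshold law \eqref{eq:canonical-threshold} and the additive-shift estimate. The recurrence $L_\delta(K)\le L_\gamma(K)+K L_\eta(K)+K+1$ from Lemma~\ref{lem:mixed-chain-length} then goes through verbatim. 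The one point to check is that a path within one coordinate schedule contributes at most $K$ edges, and it does, since that schedule has at most $d\le K$ stages. Substituting the depth bound gives the polynomial $P_\delta(t)$.

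The main obstacle I expect is the depth estimate for the common vector stage. It has to be bounded by the slowest coordinate, not by some product of the coordinates, and coordinates that reach zero early must be handled correctly. Those coordinates keep producing equality edges $0\mapsto0$ while the stage index keeps decreasing. I would handle this by noting that the stage count is governed only by the coordinates that are still active, which forces $d=\max_i D_{n,q_i}(y_i)$. The remaining zero coordinates add only equality steps, and these do not affect the basis-step count used later.
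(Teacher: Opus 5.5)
Your proposal is correct and follows essentially the same route as the paper. Both arguments use a construction-rank induction with a secondary induction on the common vector-stage index to get the three lifting forms. Both then use the depth identity $d=\max_i D_{n,q_i}(y_i)\le\max_i D_{n,1}(y_i)$, apply the internal-value envelope of Lemma~\ref{lem:mixed-internal-envelope} coordinatewise, and reuse the chain-length recurrence of Lemma~\ref{lem:mixed-chain-length} unchanged.
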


\begin{proof}
Use induction on construction rank, with a secondary induction on the common
vector-stage index at a synchronized recursion node.  Initial-function and
composition nodes are exactly as in Lemma~\ref{lem:mixed-path-lifting}; at a
composition node the selected head occurrence chooses one concrete argument
occurrence, so arbitrary duplications, permutations, and identifications of
the external unary variable are already covered.  At a $\VSR_{\mathbf q}$
node, a parameter or one coordinate of the next vector address lifts directly,
while a previous-state origin moves to the preceding common vector stage.
Hence stage unwinding terminates and no occurrence cycle is possible.

For the quantitative statement, at vector address $\mathbf y$ one has
\[
 D_{n,\mathbf q}(\mathbf y)
 =\max_i D_{n,q_i}(y_i)
 \le \max_i D_{n,1}(y_i).
\]
The internal-value induction of Lemma~\ref{lem:mixed-internal-envelope}
applies coordinatewise with the scalar cap $N=\max_i y_i$; fixed $r$ changes
only derivation-dependent constants.  Thus on root input $Y_t$ every
node-specific vector recursion depth is $O_\delta(t+1)$.  The same
construction-rank recurrence as in Lemma~\ref{lem:mixed-chain-length} then
bounds every selected scalar path polynomially in $t$.  The argument is
uniform over all choices of selected occurrences.
\end{proof}

\begin{lemma}[flattening of multi-clock selected paths]
\label{lem:multiclock-flattening}
Fix $m<n$, a finite synchronized multi-clock derivation $\delta$, and let
$Q_\delta\subseteq(\Npos)^r$ be the finite set of vector labels occurring in
it.  Put
\[
 Q_\delta^\flat
 =\{q_i:\mathbf q\in Q_\delta,\ 1\le i\le r\}.
\]
If a unary function $f$ computed by $\delta$ satisfies
\[
 f(Y_t)=Y_{t-d}
\]
for every sufficiently large $t$, then every sufficiently high selected scalar
dependency path has flattened stride weight $d$; in particular
\[
 \boxed{d\in\gen{Q_\delta^\flat}.}
\]
\end{lemma}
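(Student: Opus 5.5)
The plan is to rerun the one-clock argument of Theorem~\ref{thm:mixed-path} on the scalar selected paths of Lemma~\ref{lem:multiclock-path-lifting}, after checking that every numerical transition on such a path has one of the three one-clock types. By that lemma, each transition is one of the following: an equality; an oriented lower-basis step taken from the finitely many composition fragments of $\delta$; or a coordinate edge $y_i\mapsto\rho_{n,q_i}(y_i)$. A coordinate edge is an equality when $y_i=0$, and otherwise it is an active scalar descent labelled $q_i\in Q_\delta^\flat$. So after suppressing occurrence labels, a selected scalar path is exactly a one-clock mixed selected chain in the sense of Definition~\ref{def:mixed-selected-chain}, with label set $Q_\delta^\flat$. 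The zero-coordinate case causes no trouble: $\rho_{n,q_i}(0)=0$, so it is an equality step and contributes no weight. I will define the flattened weight as the sum of the labels of the \emph{active} descent edges only.

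The quantitative inputs are then assembled as follows.
\begin{enumerate}
\item Lemma~\ref{lem:mixed-basis-stability} gives a uniform constant $C_\delta$ for basis steps, since there are finitely many fragments.
\item Lemma~\ref{lem:mixed-descent-displacement} holds for every scalar label $q$, and so applies verbatim to each active edge.
\item The polynomial length bound $P_\delta(t)$ is supplied by Lemma~\ref{lem:multiclock-path-lifting}.
\item The numeral origins again form a finite set $\mathcal N_\delta$.
\end{enumerate}
With these in hand, the valuation of Theorem~\ref{thm:mixed-zone-valuation} goes through unchanged. Along the path, the zone index drops by exactly $q_i$ at each legal active descent, and the width grows by at most $C_\delta$ per basis step, so the width stays within the polynomial $P^\star=P+C_\delta P_\delta$.

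Next I will choose the threshold $T$ larger than the thresholds of Lemmas~\ref{lem:zone-unique}, \ref{lem:mixed-constant-origin}, and \ref{lem:mixed-low-zone}, with $q_{\max}=\max Q_\delta^\flat$. The proofs of the two barrier lemmas use only monotone majorization, in which every descent is replaced by the identity, together with the polynomial length bound and the gap estimate~\eqref{eq:canonical-gap}. They therefore transfer directly. As a result, for $t\ge T$ every selected path starts at the external input $Y_t\in\mathcal Z_t[0]$; here unaryness of $f$ ensures that the clock coordinates fed by composition are terms built from the single input occurrence. The paths also contain no illegal descent. Finally, $Y_{t-d}\in\mathcal Z_{t-w,P^\star}(t)$ together with zone uniqueness forces $w=d$. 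Since $w$ is a sum of elements of $Q_\delta^\flat$, it follows that $d\in\gen{Q_\delta^\flat}$. At least one selected path exists, by the existence part of the lifting lemma, so the membership conclusion is not vacuous.

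I expect the main obstacle to be confirming that the lifting at a $\VSR_{\mathbf q}$ node really yields a single scalar chain with legitimate edge types. Two points need care. First, a selected address origin in coordinate $i$ must be followed through the schedule of that coordinate alone. The vector stage index also decreases on coordinates that have already reached zero, and there the coordinate schedule only produces equality edges, which must be recorded as weight zero and not as label $q_i$. Second, every coordinate address must be traced back through composition to an outer occurrence: a parameter, the unary input, or a numeral. My first attempt will be to define the address-segment lift coordinatewise, as $y_i^{(j)}=\rho_{n,q_i}^{[D-j]}(y_i)$, and to verify by a secondary induction on the stage index that each step along it is either an equality or one active $q_i$-descent. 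This stage-wise check is precisely what Lemma~\ref{lem:multiclock-path-lifting} asserts, so the remaining work is to invoke it cleanly and keep active and inactive edges distinct in the weight count.
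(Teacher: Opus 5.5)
Your proposal is correct and follows essentially the same route as the paper's proof. You read each selected scalar path as a one-clock chain with labels in $Q_\delta^\flat$, treating zero-coordinate edges as equalities, transfer the constant-origin and low-zone barriers by identity-replacement majorization together with the polynomial length bound, propagate the zone index with polynomially bounded width, and conclude that the active-label weight equals $d$ by zone uniqueness; your added remark that at least one selected path exists, so that $d\in\gen{Q_\delta^\flat}$ is not vacuous, makes explicit a point the paper leaves implicit.
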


\begin{proof}
Fix a concrete evaluation of the unary derivation $\delta$ at the root input
$Y_t$.  By Lemma~\ref{lem:multiclock-path-lifting}, every selected scalar path
is finite, has polynomial length in $t$, and its only active canonical descent
edges have labels in
\[
 Q_\delta^\flat
 =\{q_i:\mathbf q\in Q_\delta,\ 1\le i\le r\}.
\]
All remaining numerical edges are equalities or the same oriented lower-basis
edges used in the one-clock proof.
The two one-clock barrier arguments also survive this extension.  For a fixed
numeral origin, replace every active coordinate descent by the identity.  The
polynomial selected-path bound just proved and
Lemma~\ref{lem:mixed-basis-stability} give an upper envelope
$F^{[R(t)]}(c+R(t))$ for fixed $c$ and polynomial $R$; the canonical-gap lemma
places this strictly below $Y_{t-d}$ for large $t$.  Hence every sufficiently
high selected path reaching $Y_{t-d}$ originates at the unique external root
value $Y_t$.  Likewise, if an active $q_i$-edge were first encountered while
the current scalar zone index were $s<q_i$, then after that edge the same
identity-replacement upper-envelope argument used in
Lemma~\ref{lem:mixed-low-zone} would keep the remainder of the path strictly
below $Y_{t-d}$.  Thus no such illegal active edge occurs on a sufficiently
high selected path reaching the target.

We may therefore propagate the scalar zone index along the whole path.  Start
with $Y_t\in\mathcal Z_t[0]$ and induct over its numerical edges.  Equality
leaves index and width unchanged; an oriented lower-basis edge preserves the
index and increases width by at most the derivation constant from
Lemma~\ref{lem:mixed-basis-stability}; and an active coordinate descent
labelled $q_i$ changes the index from $s$ to $s-q_i$ without changing width by
Lemma~\ref{lem:mixed-descent-displacement}.  Since the total number of edges is
polynomial in $t$, all accumulated width is bounded by one fixed polynomial
$P^\star$.  If $W$ is the sum of the active coordinate labels on the selected
path, its endpoint therefore satisfies
\[
 Y_{t-d}\in\mathcal Z_{t-W,P^\star}(t).
\]
Zone uniqueness, Lemma~\ref{lem:zone-unique}, forces $W=d$.  Writing the active
labels in path order as $s_1,\ldots,s_v$ gives
\[
 d=s_1+\cdots+s_v,
 \qquad s_j\in Q_\delta^\flat,
\]
and hence $d\in\gen{Q_\delta^\flat}$.
\end{proof}

\begin{theorem}[exact synchronized multi-clock shadow]
\label{thm:multiclock-shadow}
Fix $n\ge2$, $m<n$, $r\ge2$, and
$L\subseteq(\Npos)^r$.  Then
\[
 \boxed{\Str_n\bigl(\mathcal M^{m,(r)}_n[L]\bigr)=M(L)}
\]
and therefore
\[
 \boxed{
 \VStr_{n,r}\bigl(\mathcal M^{m,(r)}_n[L]\bigr)=M(L)^r.}
\]
Thus synchronized vector labels are rectangularized by ordinary scalar
extensional closure.
\end{theorem}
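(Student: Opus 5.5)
The first identity $\Str_n(\mathcal M^{m,(r)}_n[L])=M(L)$ is proved by two inclusions. The second then follows at once from Theorem~\ref{thm:rectangularity}. That theorem applies because $\mathcal M^{m,(r)}_n[L]$ contains zero and the projections and is closed under composition, so $\VStr_{n,r}=\Str_n^r=M(L)^r$.

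For $M(L)\subseteq\Str_n$, we show that every $q_i\in\Flat(L)$ is definable. Fix $\mathbf q\in L$ and a coordinate $i$. We adapt Lemma~\ref{lem:descent-internal} to the vector scheme: apply $\VSR_{\mathbf q}$ with base $g=0$ and transition
\[
 h(\mathbf z,s)=z_i,
\]
the $i$th projection of the new address. This gives
\[
 f(\mathbf0)=0,\qquad f(\mathbf y)=\rho_{n,q_i}(y_i)\quad(\mathbf y\ne\mathbf0).
\]
The value $f(\mathbf y)$ agrees with $\rho_{n,q_i}(y_i)$ in every case, because $\rho_{n,q_i}(0)=0$. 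The bound $b(\mathbf y)=y_i$ is admissible. Substituting the unary variable $y$ into coordinate $i$ and $0$ elsewhere, by composition with projections and $Z$, yields $\rho_{n,q_i}\in\mathcal M^{m,(r)}_n[L]$. Closure under composition and Lemma~\ref{lem:stride-arithmetic} then put $\rho_{n,p}$ in the class for every finite sum $p$ of flattened labels. This gives $M(L)\subseteq\Str_n$, and $0$ lies in both sets by convention.

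For the reverse inclusion, suppose $\rho_{n,p}\in\mathcal M^{m,(r)}_n[L]$ with $p\ge1$, and fix a finite derivation $\delta$ of it. By \eqref{eq:canonical-shift}, $\rho_{n,p}(Y_t)=Y_{t-p}$ for all $t\ge p$, so Lemma~\ref{lem:multiclock-flattening} applies with $d=p$. It gives $p\in\gen{Q_\delta^\flat}$, and since $Q_\delta\subseteq L$ we have $Q_\delta^\flat\subseteq\Flat(L)$, hence $p\in M(L)$.

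All the substantive work is therefore already in Lemmas~\ref{lem:multiclock-path-lifting} and~\ref{lem:multiclock-flattening}. The only point needing care is the forward construction. One must check that using a single address coordinate as the transition output is a legitimate instance of $\VSR_{\mathbf q}$: the transition $h$ receives the full new address $\boldsymbol\rho_{n,\mathbf q}(\mathbf y)$ among its arguments, so $z_i$ is available. One must also check that the zero address is handled correctly, which holds because each coordinate descent fixes $0$. With these two checks the theorem follows.
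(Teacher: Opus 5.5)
Your proposal is correct and follows the paper's own proof essentially step for step. The forward inclusion uses one $\VSR_{\mathbf q}$ recursion with zero base, a transition returning the $i$th coordinate of the next address, the bound $y_i$, and zero-substitution into the other clock slots, then closes under composition. The reverse inclusion applies Lemma~\ref{lem:multiclock-flattening} to $\rho_{n,p}(Y_t)=Y_{t-p}$, and the vector identity is read off from Theorem~\ref{thm:rectangularity}. Your extra checks are the small verifications the paper leaves implicit: $\rho_{n,q_i}(0)=0$ handles the zero address, and the class contains zero and projections and is closed under composition, so the rectangularity theorem applies.
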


\begin{proof}
For the forward generation, fix $\mathbf q\in L$ and a coordinate $i$.
One synchronized recursion along $\mathbf q$ with base zero and transition
returning the $i$th coordinate of the next vector address defines
\[
 F_i(\mathbf y)=\rho_{n,q_i}(y_i).
\]
The projection $y_i$ is an admissible bound.  Substituting zero into all clock
slots except $i$ gives the unary descent $\rho_{n,q_i}$.  Hence every element
of $\Flat(L)$ lies in the scalar stride spectrum, and closure under composition
gives
$M(L)\subseteq\Str_n(\mathcal M^{m,(r)}_n[L])$.

Conversely, let $\rho_{n,p}$ belong to the class and fix one finite derivation.
Since $\rho_{n,p}(Y_t)=Y_{t-p}$ for all $t\ge p$,
Lemma~\ref{lem:multiclock-flattening} gives
$p\in\gen{Q_\delta^\flat}\subseteq M(L)$.  This proves the scalar identity.
The vector identity is now Theorem~\ref{thm:rectangularity}.
\end{proof}

For a joint descent monoid $\Gamma\le\Nat^r$, define its scalar flattening and
rectangular closure by
\[
 \Sc(\Gamma)
 :=\gen{\{p_i:\mathbf p\in\Gamma,\ 1\le i\le r\}},
 \qquad
 \Rect_r(\Gamma):=\Sc(\Gamma)^r.
\]
The next statement quantifies exactly how much joint information the natural
extensional vector shadow can lose.

\begin{theorem}[rectangular reflection and maximal fibres]
\label{thm:synchronization-collapse}
Fix finite $r\ge2$.  Let
\[
 P_r:\Sub(\Nat)\to\Sub(\Nat^r),\qquad P_r(M)=M^r.
\]
Then:
\begin{enumerate}
\item for every $\Gamma\le\Nat^r$ and $M\le\Nat$,
\[
 \boxed{\Sc(\Gamma)\subseteq M\quad\Longleftrightarrow\quad
 \Gamma\subseteq M^r.}
\]
Thus $\Sc$ is left adjoint to the order embedding $P_r$,
$\Rect_r=P_r\!\circ\Sc$ is the associated closure operator, and
$\Rect_r(\Gamma)$ is the least rectangular submonoid containing $\Gamma$.
Its fixed points are exactly the $M^r$.
\item If $L\subseteq(\Npos)^r$ and $\Gamma=\gen L$, then
\[
 \boxed{
 \VStr_{n,r}\bigl(\mathcal M^{m,(r)}_n[L]\bigr)=\Rect_r(\Gamma).}
\]
\item The zero fibre is trivial, but every nonzero fibre is maximally large:
for every $M\ne\{0\}$,
\[
 \boxed{\{\Gamma\le\Nat^r:\Rect_r(\Gamma)=M^r\}}
\]
contains an order-embedded copy of $(\mathcal P(\Nat),\subseteq)$.  Hence it
has cardinality $2^{\aleph_0}$ and contains continuum antichains.
\end{enumerate}
Moreover $\Sc$ preserves arbitrary joins, but for $r\ge2$ it need not preserve
meets.
\end{theorem}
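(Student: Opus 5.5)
The plan is to prove the three items in order, followed by the join/meet remark.

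For item~1, unwind the definitions. If $\Gamma\subseteq M^r$, then every coordinate $p_i$ of every $\mathbf p\in\Gamma$ lies in $M$. Since $M$ is a submonoid, the generated monoid $\Sc(\Gamma)$ is contained in $M$. Conversely, $\Gamma\subseteq\Sc(\Gamma)^r$ always holds, because each coordinate of $\mathbf p$ is a generator of $\Sc(\Gamma)$. Hence $\Sc(\Gamma)\subseteq M$ gives $\Gamma\subseteq M^r$. This is exactly the Galois connection $\Sc\dashv P_r$. The map $P_r$ is an order embedding because $M$ is recovered from $M^r$ by projecting to one coordinate. Standard adjunction facts then make $P_r\circ\Sc$ a closure operator. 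By the adjunction, $\Rect_r(\Gamma)$ is the least $M^r$ containing $\Gamma$. Its fixed points are the $M^r$, using $\Sc(M^r)=M$, which holds since $(x,0,\dots,0)\in M^r$.

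For item~2, apply Theorem~\ref{thm:multiclock-shadow}, which gives $\VStr_{n,r}=M(L)^r$. It remains to check $M(L)=\Sc(\gen L)$. Coordinates of sums are sums of coordinates of generators, so $\Sc(\gen L)\subseteq\gen{\Flat(L)}$. The reverse inclusion is immediate because $L\subseteq\gen L$.

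Item~3 is the step needing a construction; the rest is formal. The zero fibre is trivial: $\Rect_r(\Gamma)=\{0\}$ forces $\Gamma\subseteq\{0\}^r$. For $M\neq\{0\}$, fix a finite atom set $A=\Atoms(M)$ from Proposition~\ref{prop:spectrum-basis} and let $a=\max A$. To each $S\subseteq\Nat$ assign
\[
 \Gamma_S=\gen{\{(x,0,\dots,0):x\in A\}\cup\{(0,x,0,\dots,0):x\in A\}\cup\{(a,\,a(2+s),0,\dots,0):s\in S\}}.
\]
The first two families already force $\Sc(\Gamma_S)=M$. The extra generators have coordinates $a$ and $a(2+s)$, which lie in $M$, so $\Rect_r(\Gamma_S)=M^r$.

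Injectivity and order-embedding follow from a first-coordinate argument as in Theorem~\ref{thm:rank-two-transition}. Every element of $\Gamma_S$ with first coordinate exactly $a$ arises in one of two ways: from a single extra generator $(a,a(2+s),0,\dots)$, or from first-axis generators summing to $a$ plus an arbitrary element of $0\times M$. The second kind can have second coordinate $a(2+s)$ for every $s$, which breaks the argument. To fix this, replace the axis generators in the second coordinate by $(0,x,\dots)$ only on coordinate $3$ when $r\ge3$. Alternatively, and uniformly for $r\ge2$, use instead the diagonal generators $(x,x,0,\dots)$ together with $(x,0,\dots)$ and $(0,0,x,\dots)$. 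Then find an invariant that separates the extra generators, for instance the linear functional $\ell(\mathbf p)=p_2-p_1\cdot\mathbf 1$ restricted suitably.

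I expect this bookkeeping to be the main obstacle. The simplest robust choice I would try first is the following generators:
\[
 (x,0,0,\dots)\ (x\in A),\quad (x,x,\dots,x)\ (x\in A),\quad (a\cdot 2^{s+2},\,a\cdot 2^{s+2}+a,\,0,\dots)\ (s\in S).
\]
The marker for $s$ is the first coordinate $c_s=a2^{s+2}$ combined with $p_2-p_1=a$. I would check that the earlier generators all have $p_2\le p_1$. An extra generator has $p_2-p_1=a$, so a combination with $p_2-p_1=a$ must use exactly one extra generator together with only elements having $p_2=p_1$, that is, diagonal elements on the first two coordinates. Since $r\ge3$ coordinates of diagonals are nonzero, the cleanest test element is $(c_s,c_s+a,0,\dots,0)$. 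This element lies in $\Gamma_S$ iff $s\in S$: zeros elsewhere exclude diagonals when $r\ge3$, and for $r=2$ one must choose markers so that no $c_{s'}+(\text{diagonal part})$ equals $c_s$, for example by restricting to first coordinates below $c_{s+1}$ and using the gap of the doubling sequence. The continuum and antichain consequences then follow as before.

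For joins, $\Sc$ is a left adjoint and therefore preserves all joins. For the failure of meets, take $r=2$, $\Gamma=\Nat(1,2)$ and $\Lambda=\Nat(2,1)$. Then $\Gamma\cap\Lambda=0$, while $\Sc(\Gamma)\cap\Sc(\Lambda)=\Nat$.
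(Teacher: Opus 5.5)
Your proofs of item~1, item~2, the zero fibre, and the join/meet remark are correct and match the paper. In item~2 you also make explicit the identity $M(L)=\Sc(\gen L)$, which the paper leaves implicit when it cites Theorem~\ref{thm:multiclock-shadow}.

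The gap is in item~3, the only non-formal part. You never settle on a family that order-embeds $\mathcal P(\Nat)$, and the family you finally propose is not even injective in $S$.

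For $r\ge3$, the inference ``$p_2-p_1=a$ forces exactly one marker'' is false. The first-axis generators have $p_2-p_1<0$, so they can absorb surplus markers. With $c_s=a2^{s+2}$,
\[
 3(4a,5a,0,\dots,0)+2(8a,9a,0,\dots,0)+4(a,0,\dots,0)=(32a,33a,0,\dots,0).
\]
So the marker for $s=3$ already lies in $\Gamma_{\{0,1\}}$, and $\Gamma_{\{0,1,3\}}=\Gamma_{\{0,1\}}$.

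For $r=2$ it is worse. The diagonal generators put $\{(y,y):y\in M\}$ inside $\Gamma_S$. For $s'<s$ one has $(c_s,c_s+a)=(c_{s'},c_{s'}+a)+(c_s-c_{s'},c_s-c_{s'})$ with $c_s-c_{s'}\in a\Nat\subseteq M$. Hence $\Gamma_S=\Gamma_{\{\min S\}}$ for every nonempty $S$. ``Restricting to first coordinates below $c_{s+1}$'' cannot repair this, because monoid membership is not subject to any such cutoff.

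The missing idea is the one you cite from Theorem~\ref{thm:rank-two-transition} and then abandon: give every generator the same positive first coordinate. Then a sum of $k\ge1$ generators has first coordinate $k$ times that value, and the corresponding slice is exactly the generating set.

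The paper's construction is as follows. Take $\mu=\min(M\setminus\{0\})$ and an infinite $T\subseteq M\setminus(\Atoms(M)\cup\{0\})$. For $S\subseteq T$, let $\Gamma_S$ be generated by $(\mu,b)$ for $b\in\Atoms(M)$ and $(\mu,t)$ for $t\in S$, padding coordinates $3,\dots,r$ with $\mu$. Then $\Sc(\Gamma_S)=M$, and for $r=2$ the slice $p_1=\mu$ is exactly $\{\mu\}\times(\Atoms(M)\cup S)$. This gives $S\subseteq S'$ iff $\Gamma_S\subseteq\Gamma_{S'}$.

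In fact your very first family already works for all $r\ge2$ once you delete the second-axis generators $(0,x,0,\dots,0)$. Since $\Sc$ pools all coordinates, the first-axis generators alone give $\Sc(\Gamma_S)=M$. Now suppose $(a,a(2+s),0,\dots,0)\in\Gamma_{S'}$. Its second coordinate is positive, so it uses at least one marker. Every generator has positive first coordinate and each marker has first coordinate $a$, so the element is a single marker, which forces $s\in S'$.
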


\begin{proof}
The adjunction is immediate: $\Gamma\subseteq M^r$ iff every coordinate of
every element of $\Gamma$ lies in $M$, iff $\Sc(\Gamma)\subseteq M$.  This
gives the closure and fixed-point claims, while part~(2) is
Theorem~\ref{thm:multiclock-shadow}.  For a nonzero $M$, let $\mu$ be its
least positive element, $B=\Atoms(M)$, and choose infinite
$T\subseteq M\setminus(B\cup\{0\})$.  For $A\subseteq T$ put
\[
 \Gamma_A=\gen{\{(\mu,b):b\in B\}\cup\{(\mu,a):a\in A\}}
 \quad(r=2),
\]
and use $(\mu,x,\mu,\ldots,\mu)$ for $r>2$.  Then
$\Sc(\Gamma_A)=M$, while the first-coordinate-$\mu$ slice recovers $B\cup A$;
hence $A\subseteq A'$ iff $\Gamma_A\subseteq\Gamma_{A'}$.  Thus every
nonzero fibre contains the powerset order and has cardinality continuum.  The
zero fibre is trivial.  As a left adjoint $\Sc$ preserves joins; it does not
preserve meets, since $\gen{(1,2)}$ and $\gen{(2,1)}$ have scalar shadow
$\Nat$ but trivial intersection.
\end{proof}

\begin{remark}[sharpness of the collapse]
\label{cor:meet-collapse}
\label{ex:one-vector-collapse}
The loss can be complete.  With
$\Gamma=\gen{(1,2)}$ and $\Lambda=\gen{(2,1)}$,
\[
 \Rect_2(\Gamma)=\Rect_2(\Lambda)=\Nat^2,
 \qquad \Gamma\cap\Lambda=\{\mathbf0\}.
\]
Likewise one synchronized label $L=\{(1,2)\}$ generates the joint ray
$\{(k,2k):k\in\Nat\}$ but has ordinary vector shadow $\Nat^2$.  These
statements concern canonical descent membership, not equality of the full
scalar function classes: noncanonical functions may retain additional
synchronization information.
\end{remark}

\section{Discussion and conclusion}

The one-clock results are driven by a path-weight conservation law: canonical
displacement is recovered from every sufficiently high selected occurrence
path, despite arbitrary nesting of finitely many primitive labels.  The
spectrum theorem recovers which canonical descents occur, while the
path-factorization theorem recovers their complete additive synthesis
signatures and finite presentation-level rewrite bases.  Saturation makes the
same monoid an intrinsic invariant of class inclusion, with the principal
saturated classes $\Hmix{m}{n}{d\Nat}$ as the join-irreducible skeleton.

Higher rank separates scalarity from separability.  Joint descents, as address
maps, realize arbitrary $\Gamma\le\Nat^r$ and already in rank two have
continuum order complexity; coupled unary coding is an external conjugacy of
those maps, not an internal definability statement.  Ordinary canonical scalar
observation of synchronized multi-clock recursion instead sees only the
reflected rectangular shadow $\Rect_r(\Gamma)$, whose every nonzero fibre
contains the powerset order.
Thus the decisive issue is not scalar output but whether the observable keeps
joint address information.  In this sense the higher-rank results form an
obstruction theorem: separable canonical observation necessarily destroys all
nonrectangular synchronization geometry, while a coupled scalar coding need
not do so.

\section*{Acknowledgements}
The author acknowledges the use of large language models for copyediting,
grammatical correction, and language polishing.  The author reviewed the
resulting text and takes full responsibility for the final manuscript.

\end{document}